%octubre-2016
\documentclass[11pt,a4paper,english]{amsart}
\usepackage{babel}
\usepackage[latin1]{inputenc}
\usepackage{amsmath}
\usepackage{amsthm}
\usepackage{amsfonts}
\usepackage{indentfirst}
\usepackage{graphicx}
\usepackage{amssymb}
\usepackage[mathscr]{eucal}
\usepackage{tikz}
\usepackage{caption}
\usepackage{enumerate}  
\usepackage{amsthm}
\usepackage{amscd}
\evensidemargin 3ex \oddsidemargin 3ex

\textwidth=15cm

\newtheorem{teo}{Theorem}[section]
\newtheorem{de}[teo]{Definition}%[section]
\newtheorem{pro}[teo]{Proposition}%[section]
\newtheorem{cor}[teo]{Corollary}%[section]
\newtheorem{lem}[teo]{Lemma}%[section]
\newtheorem{exa}[teo]{Example}%[section]
%[section]
%[section]

\theoremstyle{definition}
\newtheorem{rem}[teo]{Remark}%[section]

%%%%%%%%%%%%%%%%%%%%%%%%%%%%%%%%%%%

%\newcommand{\g}{{\mathfrak{g}}}
%\newcommand{\f}{{\mathfrak{f}}}

%\usepackage[all]{xy}

\setlength{\marginparwidth}{2cm}

\title[On the distance of stabilizer quantum codes]{On the distance of stabilizer quantum codes from $J$-affine variety codes}
\author{Carlos Galindo, Olav Geil, Fernando Hernando and Diego Ruano}
\curraddr{\texttt{Carlos Galindo and Fernando Hernando:} Instituto
Universitario de Matem\'aticas y Aplicaciones de Castell\'on and
Departamento de Matem\'aticas, Universitat Jaume I, Campus de Riu
Sec. 12071 Castell\'{o} (Spain)\\
\texttt{Olav Geil and Diego Ruano:} Department of Mathematical Sciences, Aalborg University, Fredrik Bajers Vej 7G, 9220 Aalborg East (Denmark).
}
\email{{\rm Galindo:} galindo@uji.es; {\rm Geil:} olav@math.aau.dk; {\rm Hernando:} carrillf@uji.es; {\rm Ruano:} diego@math.aau.dk}
\date{}
\thanks{Supported by the Spanish Ministry of Economy/FEDER: grants MTM2012-36917-C03-03 and MTM2015-65764-C3-2-P, the University Jaume I: grant PB1-1B2015-02 and the Danish Council for Independent Research, grant DFF-4002-00367.}
 
\keywords{Stabilizer $J$-affine variety codes; Subfield-subcodes; Designed minimum distance; Hermitian and Euclidean duality}

\begin{document}

\begin{abstract}
Self-orthogonal $J$-affine variety codes have been successfully used to obtain quantum stabilizer codes with excellent parameters. In a previous paper we gave formulae for the dimension of this family of quantum codes, but no bound for the minimum distance was given. In this work, we show how to derive quantum stabilizer codes with designed minimum distance from $J$-affine variety codes and their subfield-subcodes. Moreover, this allows us to obtain new quantum codes, some of them either, with better parameters, or with larger distances than the previously known codes.
\end{abstract}

\maketitle
 
\section*{Introduction}
Some intractable problems are tractable with a quantum computer. This was confirmed with the Shor algorithm \cite{22RBC} for prime factorization and discrete logarithms in polynomial time on quantum computers. Due to this fact, the interest of scientists and engineers in quantum computation has dramatically grown in the last decades. 

Implementations proposed for quantum computers produce much more errors than classical ones and reliability of classical computers cannot be expected for them. Together with correcting errors, decoherence is another challenge in quantum computation and quantum error correction is designed for meeting both challenges. Although quantum information cannot be cloned \cite{8AS, 26RBC}, quantum error correction can be used \cite{23RBC, 95kkk}.

The literature concerning quantum error-correcting codes is huge, some recent papers, for the general case, are \cite{BE, 71kkk, AK,  35kkk, opt,lag3}. The study of these codes was started with the binary case \cite{20kkk, 38kkk, 18kkk, 19kkk, 45kkk, 7kkk, 8kkk}. Stabilizer codes are a subclass of quantum codes which are well understood, both its structure and construction. Most of the codes in this paper will be of this type. A {\it stabilizer code} $\mathcal{C} \neq \{0\}$ is the common eigenspace of an abelian subgroup $\Delta$ of the error group $G_n$ generated by a nice error basis on the space $\mathbb{C}^{q^n}$, $\mathbb{C}$ being the complex numbers, $q$ a prime power and $n$ a positive integer. The code $\mathcal{C}$ has minimum distance $d$ whenever all error in $G_n$ with weight less than $d$ can be detected or have no effect on $\mathcal{C}$ but some error of weight $d$ cannot be detected. In addition,  $\mathcal{C}$ is called  {\it pure} if $\Delta$ has  not non-scalar matrices with weight less than $d$, and, so, all errors acting on less than $d$ qubits can be detected. Finally, a code as above is an $[[n,k,d]]_q$-code when it is a $q^k$-dimensional subspace of $\mathbb{C}^{q^n}$ and has minimum distance $d$ (see for instance \cite{19kkk,kkk}). Stabilizer codes can be constructed from  linear ones with the help of symplectic or Hermitian inner product \cite{19kkk,BE, AK, kkk}.

In this paper we will derive our codes from certain self-orthogonal classical codes, with respect to Hermitian or Euclidean inner product (see forthcoming Theorem \ref{th1}). Let $\mathbb{F}_{q}$ be the finite field with $q$ elements. Recall that the Hermitian inner product of two vectors $\mathbf{x} =(x_1,x_2, \ldots, x_n)$ and $\mathbf{y}=(y_1,y_2, \ldots, y_n)$ in the vector space $\mathbb{F}_{q^2}^n$ is defined as $\mathbf{x}  \cdot_h \mathbf{y}= \sum x_i y^q_i$ and the Euclidean product of $\mathbf{x}$ and $\mathbf{y}$ in  $\mathbb{F}_{q}^n$ as $\mathbf{x} \cdot \mathbf{y} = \sum x_i y_i$.  Given a linear code $C$ in $\mathbb{F}_{q^2}^n$ (respectively, $\mathbb{F}_{q}^n$), the Hermitian (respectively, Euclidean) dual space is denoted by $C^{\perp_h}$ (respectively, $C^\perp$). Occasionally, we enlarge our codes with Steane type procedures \cite{Steane-E, ham, QINP}. Our supporting classical codes have been recently  introduced in a series of papers \cite{galindo-hernando, gal-her-rua, QINP} and, generically, named $J$-affine variety codes; they are evaluating codes and codes of this type have been recently considered  in the literature \cite{Geil-Affine, geil, galmon,geil2,galmon2,rey}. We devote Section \ref{secuno} to recall them and to state two fundamental results, Propositions \ref{prop1} and \ref{prop3}, which make it easy to decide about self-orthogonality for this class of codes, both with respect to Euclidean and Hermitian inner product. Goodness of some quantum codes coming from $J$-affine variety codes has become clear from the above mentioned series of papers, where we have established some records with respect to \cite{codet} and found codes improving others in the literature and  exceeding the Gilbert-Varshamov bounds \cite{eck, mat, feng}.

In our previous papers we constructed quantum codes from $J$-affine variety ones and determined their dimension, however, in most cases, we were not able to provide formulae for their minimum distance and we needed to compute it, case by case, with a computer. Since it is computationally intense, we could only provide stabilizer codes with relatively small minimum distances. The main goal of this paper is to generate, using algebraic tools, stabilizer codes supported by self-orthogonal (with respect to Euclidean and Hermitian duality) $J$-affine variety codes {\it with designed minimum distance}. Thus, we are able to give good stabilizer codes with large minimum distance now.

We consider stabilizer codes over arbitrary finite fields. There is no table like \cite{codet} for non-binary quantum codes and most efforts besides on these codes have been addressed to the study of MDS quantum codes, quantum LDPC codes or quantum BCH codes  \cite{Sarvepalli, edel, Akk, lag3, lag1, jin,  f2, lag2, chen, refer}. For comparing or for showing how good our code are, we have essentially used La Guardia's articles \cite{lag3,lag1,lag2}, which give tables and improve some previous codes in \cite{Akk} and \cite{ham}, and we have used the codes in \cite{edel} and the  Gilbert-Varshamov bounds \cite{eck,mat,feng}.

For our purposes, we use univariate and multivariate $J$-affine variety codes. Our best examples come from the multivariate case, notwithstanding the univariate case helps to study and understand the multivariate one and gives a {\it simple} procedure to obtain codes, over the same field, with larger distances than (and the same length as) others in the literature (see Examples \ref{eldos} and \ref{nuevo1}). We get a $[[80, 50, \geq 10]]_3$  code improving the $[[80, 48, \geq 10]]_3$ given in the literature, but, generally speaking and for known distances in the literature, our univariate codes reach the same parameters (Examples \ref{eldos} and \ref{nuevo1}) and occasionally and for small distances are a bit worse (Example \ref{luno}).

Using multivariate codes, we are able to improve  several  previous codes in Examples \ref{3132} B, \ref{alfaalfa} A and C, and \ref{gamagama} A. Moreover, the codes constructed with this technique, compared with, for instance, those derived from BCH ones, admit a wider variety of lengths and, as a consequence, we provide good stabilizer codes whose lengths are different from others previously given. Several of them have better relative parameters than similar ones (Examples \ref{alfaalfa} C and  \ref{betabeta} A) and/or exceed the Gilbert-Varshamov bounds (Examples \ref{3132} A, \ref{alfaalfa} A, B and C, \ref{betabeta} A and \ref{gamagama} B).

Stabilizer codes coming from univariate $J$-affine variety codes are studied in Section \ref{seccdos} of the paper. For getting these codes we consider subfield-subcodes of the mentioned classical codes. Self-orthogonality of subfield-subcodes with respect to Euclidean inner product is considered in Subsection \ref{221} and we devote Subsection \ref{hermiticoo} to the Hermitian case. Theorems \ref{C} and \ref{E} and  Theorem \ref{Z} can be considered our most interesting results in this section because they are our main source of examples. The multivariate case is treated in Section \ref{secdos}. Proposition \ref{R} is essential for the development of the paper because it supports our proofs for bounding the minimum distance of this family of codes. We also use subfield-subcodes in the last subsection, although many of our results in this section do not need this technique. Theorems \ref{F} and \ref{FF} together with Corollaries \ref{L}, \ref{LL} and \ref{LLL}, which only consider the bivariate case, are the main results. As mentioned, a number of examples and tables with parameters are displayed along the paper.

\section{Stabilizer $J$-affine variety codes}
\label{secuno}

This section is devoted to review the concept of $J$-affine variety classical code introduced in \cite{QINP} and to give some results concerning self-orthogonality with respect to the Euclidean and Hermitian inner product. As shown in \cite{galindo-hernando, gal-her-rua, QINP}, good examples of stabilizer codes can be obtained in this way, although, up to this paper,  no general formula or bound for their distances was known.

Let $q=p^r$ be a positive power of a prime number $p$. Define a value $Q$ which will be $Q=q$ when we use the Euclidean inner product and $Q=q^2$ when we consider the Hermitian inner product. Fix integers $N_j>1$, $j=1,\ldots,m$, such that $N_j-1$ divides $Q-1$ and set $J \subseteq \{1,2, \ldots, m\}$. Let $\mathcal{R}:= \mathbb{F}_Q [X_1,X_2, \ldots,X_m]$ be the ring of polynomials with $m$ variables and with coefficients in the finite field $\mathbb{F}_Q $. Consider the ideal $I_J$ in $\mathcal{R}$ generated by the binomials $X_j^{N_j} - X_j$ when $j \not \in J$ and by $X^{N_j -1} - 1$ otherwise. Set $Z_J = \{P_1, P_2, \ldots, P_{n_J}\}$ the zero-set of $I_J$ over $\mathbb{F}_Q$. Notice that the $j$th coordinate, for $j \in J$, of the points in $Z_J$ is different from zero and $n_J = \prod_{j \notin J} N_j \prod_{j \in J} (N_j -1)$. Also, denote $T_j = N_j -2$  when $j \in J$ and $T_j = N_j -1$ otherwise, define
$$
\mathcal{H}_J := \{0,1,\ldots,T_1\}\times \{0,1,\ldots,T_2\} \times\cdots\times\{0,1,\ldots,T_m\}
$$
and for any $\boldsymbol{a}=(a_1,a_2, \ldots, a_m) \in \mathcal{H}_J$, set $X^{\boldsymbol{a}} := X_1^{a_1} X_1^{a_2}\cdots X_{m}^{a_m}$.

Consider the quotient ring $\mathcal{R}_J:= \mathcal{R}/I_J$ and the evaluation map $\mathrm{ev}_J: \mathcal{R}_J \rightarrow \mathbb{F}_{q}^{n_J}$ given by $\mathrm{ev}_J(f) = \left(f(P_1), f(P_2), \ldots, f(P_{n_J}) \right)$, where $f$ denotes both the equivalence class and any polynomial representing it.

\begin{de}\label{def:unouno}
{\rm Let $\Delta$ be a non-empty subset of $\mathcal{H}_J $. The {\it $J$-affine variety classical code} given by $\Delta$ is the vector subspace $E^J_\Delta$ (over $\mathbb{F}_Q$) of $\mathbb{F}_Q^{n_J}$ generated by $\mathrm{ev}_J (X^{\boldsymbol{a}})$, $\boldsymbol{a} \in \Delta$. }
\end{de}

Next, the reader can find the announced results concerning self-orthogonality of  the above introduced codes. They can be found in \cite{QINP}. First we describe when the Euclidean inner product of the evaluation of two monomials does not vanish.

\begin{pro}
\label{prop1}
Let $J \subseteq \{ 1 , 2, \ldots , m\}$, consider $\boldsymbol{a}, \boldsymbol{b} \in \mathcal{H}_J$ and let $X^{\boldsymbol{a}}$ and $X^{\boldsymbol{b}}$ be two monomials representing elements in $\mathcal{R}_J$. Then, the Euclidean inner product $\mathrm{ev}_J ( X^{\boldsymbol{a}}) \cdot \mathrm{ev}_J (X^{\boldsymbol{b}})$ is different from $0$ if,  and only if, the following two conditions are satisfied.
\begin{itemize}
\item For every $j \in J$, it holds that $a_j + b_j \equiv  0 \mod (N_j -1)$, (i.e.,  $a_j = N_j -1 - b_j$ when $a_j  + b_j > 0$ or $a_j=b_j=0$).
\item For every $j \notin J$, it holds that \begin{itemize}
\item either $a_j  + b_j > 0$ and $a_j + b_j \equiv 0 \mod (N_j -1)$,  (i.e.,  $a_j = N_j -1 - b_j$  if $0 < a_j, b_j < N_j -1$ or $(a_j,b_j) \in \left\{(0,N_j -1), (N_j -1,0), (N_j -1,N_j -1)  \right\}$ otherwise),
\item or $a_j = b_j = 0$ and $p \not | ~ N_j$.
\end{itemize}
\end{itemize}
\end{pro}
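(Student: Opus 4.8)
The plan is to reduce the inner product $\mathrm{ev}_J(X^{\boldsymbol{a}})\cdot\mathrm{ev}_J(X^{\boldsymbol{b}})$ to a product of one-variable sums, and then evaluate each factor by a classical power-sum computation over the multiplicative group of roots of unity. Concretely, since the point set $Z_J$ is a Cartesian product
$$
Z_J = \prod_{j\notin J} \mu_{N_j}^{0} \times \prod_{j\in J} \mu_{N_j-1},
$$
where for $j\notin J$ the $j$th coordinate ranges over the roots of $X^{N_j}-X$ (that is, $0$ together with the $(N_j-1)$st roots of unity) and for $j\in J$ it ranges over the $(N_j-1)$st roots of unity, the sum over all $P\in Z_J$ of $X^{\boldsymbol a}(P)\,X^{\boldsymbol b}(P)$ factors as
$$
\mathrm{ev}_J(X^{\boldsymbol{a}})\cdot\mathrm{ev}_J(X^{\boldsymbol{b}}) \;=\; \prod_{j=1}^{m} S_j,\qquad
S_j := \sum_{\theta} \theta^{\,a_j+b_j},
$$
the inner sum being taken over the allowed values of the $j$th coordinate. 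Hence the product is nonzero if and only if every factor $S_j$ is nonzero, and the two bulleted conditions are exactly the conditions for $S_j\neq 0$ in the two cases $j\in J$ and $j\notin J$.

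First I would treat the case $j\in J$. Here $\theta$ runs over the cyclic group $\mu_{N_j-1}$ of $(N_j-1)$st roots of unity in $\mathbb F_Q$ (which exists since $N_j-1\mid Q-1$). The standard fact is that $\sum_{\theta\in\mu_{N_j-1}}\theta^{\,e}$ equals $N_j-1$ (viewed in $\mathbb F_Q$, hence nonzero because $p\nmid N_j-1$, as $N_j-1\mid Q-1$ and $p\nmid Q-1$) when $(N_j-1)\mid e$, and equals $0$ otherwise. Applying this with $e=a_j+b_j$ gives precisely: $S_j\neq 0 \iff a_j+b_j\equiv 0 \pmod{N_j-1}$. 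The parenthetical reformulation in the statement is just the observation that for $a_j,b_j\in\{0,\dots,N_j-2\}$ the congruence $a_j+b_j\equiv 0$ forces either $a_j=b_j=0$ or $a_j+b_j=N_j-1$, i.e. $a_j=N_j-1-b_j$.

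Next I would treat $j\notin J$. Now $\theta$ runs over $\{0\}\cup\mu_{N_j-1}$, so $S_j = 0^{\,a_j+b_j} + \sum_{\theta\in\mu_{N_j-1}}\theta^{\,a_j+b_j}$, with the convention $0^0=1$. If $a_j+b_j>0$ the $0$-term vanishes and we are back in the previous computation: $S_j\neq0\iff (N_j-1)\mid a_j+b_j$; since $0\le a_j,b_j\le N_j-1$ this means $a_j+b_j\in\{N_j-1,\,2(N_j-1)\}$, which unwinds to the listed possibilities $a_j=N_j-1-b_j$ (for $0<a_j,b_j<N_j-1$) or $(a_j,b_j)\in\{(0,N_j-1),(N_j-1,0),(N_j-1,N_j-1)\}$. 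If instead $a_j=b_j=0$, then $S_j = 1 + \sum_{\theta\in\mu_{N_j-1}}1 = 1+(N_j-1) = N_j$ in $\mathbb F_Q$, which is nonzero if and only if $p\nmid N_j$. Assembling the factors gives the claimed equivalence.

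The only genuinely delicate point is keeping the conventions straight — in particular handling the exponent $0$ correctly (the term $0^0=1$ for $j\notin J$ when $a_j=b_j=0$, versus a true zero contribution when $a_j+b_j>0$), and verifying that the ``trivial'' value $N_j-1$ of a full character sum is a nonzero element of $\mathbb F_Q$, which is where the hypothesis $N_j-1\mid Q-1$ is used a second time. Everything else is the routine factorization of a sum over a product set into a product of sums, plus the elementary orthogonality relation for powers of roots of unity.
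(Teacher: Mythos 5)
Your proof is correct and complete: the factorization of the inner product over the Cartesian product point set $Z_J$, the orthogonality relation for power sums over the cyclic group $\mu_{N_j-1}$, and the careful handling of the extra point $0$ (with $0^0=1$) for $j\notin J$ together give exactly the stated dichotomy, and you correctly use $N_j-1\mid Q-1$ twice (existence of the roots of unity and nonvanishing of $N_j-1$ in $\mathbb F_Q$). The paper itself gives no proof of this proposition, only a citation to an earlier work, but your argument is the standard one that the cited reference uses, so there is nothing of substance to compare; no gaps.
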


The analogous result for Hermitian inner product is

\begin{pro}
\label{prop3}
Let $J \subseteq \{ 1 , 2, \ldots , m\}$, consider $\boldsymbol{a}, \boldsymbol{b} \in \mathcal{H}_J$ and let $X^{\boldsymbol{a}}$ and $X^{\boldsymbol{b}}$ be two monomials representing elements in $\mathcal{R}_J$. Then, the Hermitian inner product $\mathrm{ev}_J ( X^{\boldsymbol{a}}) \cdot_h \mathrm{ev}_J (X^{\boldsymbol{b}})$ is different from $0$ if,  and only if, the following two conditions are satisfied.
\begin{itemize}
\item For every $j \in J$, it holds that $q a_j + b_j \equiv 0 \mod (N_j -1)$, (i.e.,  $b_j = - q a_j + \lambda(N_j -1)$, for some $\lambda \ge 0$).
\item For every $j \notin J$, it holds that \begin{itemize}
\item either $a_j  + b_j > 0$ and $q a_j + b_j \equiv 0 \mod (N_j -1)$\\  (i.e.,  $b_j = - q a_j + \lambda(N_j -1)$, for some $\lambda > 0$, if $0 < a_j, b_j < N_j -1$, or $(a_j,b_j) \in \left\{(0,N_j -1), (N_j -1,0), (N_j -1,N_j -1)  \right\}$, otherwise);
\item or $a_j = b_j = 0$ and $p \not | ~ N_j$.
\end{itemize}
\end{itemize}
\end{pro}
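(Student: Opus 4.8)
The plan is to reduce the computation of the Hermitian inner product $\mathrm{ev}_J(X^{\boldsymbol a})\cdot_h \mathrm{ev}_J(X^{\boldsymbol b})$ to a product, over the coordinates $j=1,\dots,m$, of one-variable character-type sums, and then to evaluate each such sum explicitly. First I would write, using the definition of $\cdot_h$,
$$
\mathrm{ev}_J(X^{\boldsymbol a})\cdot_h \mathrm{ev}_J(X^{\boldsymbol b})=\sum_{P\in Z_J}\bigl(X^{\boldsymbol a}(P)\bigr)\bigl(X^{\boldsymbol b}(P)\bigr)^{q}=\sum_{P\in Z_J}\prod_{j=1}^{m} P_j^{\,a_j}\,P_j^{\,q b_j},
$$
where $P_j$ denotes the $j$th coordinate of $P$. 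Since $Z_J$ is a full Cartesian product of the $j$th coordinate sets (the roots of $X_j^{N_j}-X_j$ for $j\notin J$, i.e.\ $\mu_{N_j-1}\cup\{0\}$; and the roots of $X_j^{N_j-1}-1$, i.e.\ $\mu_{N_j-1}$, for $j\in J$), the sum factors as
$$
\mathrm{ev}_J(X^{\boldsymbol a})\cdot_h \mathrm{ev}_J(X^{\boldsymbol b})=\prod_{j=1}^{m} S_j,\qquad S_j:=\sum_{\xi} \xi^{\,a_j+q b_j},
$$
the inner sum ranging over the appropriate coordinate set. Hence the whole product is nonzero if and only if every factor $S_j$ is nonzero, and the proof is reduced to deciding when each $S_j\neq 0$.

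Next I would evaluate $S_j$ in the two cases. For $j\in J$ the sum is over the cyclic group $\mu_{N_j-1}$ of $(N_j-1)$th roots of unity in $\F_{q^2}$ (which exists because $N_j-1\mid q^2-1=Q-1$): a standard orthogonality argument gives $\sum_{\xi\in\mu_{N_j-1}}\xi^{t}=N_j-1\neq 0$ in $\F_{q^2}$ when $(N_j-1)\mid t$ and $=0$ otherwise; with $t=a_j+qb_j$ this yields exactly the stated congruence $qa_j+b_j\equiv 0 \pmod{N_j-1}$ once one notes $a_j+qb_j\equiv q(qa_j+b_j)\pmod{N_j-1}$ using $q^2\equiv 1$, and $q$ is a unit mod $N_j-1$. (I would also record the exponent rewriting $b_j=-qa_j+\lambda(N_j-1)$ for the bound on $\lambda$ in the parenthetical remark, distinguishing $a_j=b_j=0$, which forces $\lambda=0$, from the mixed boundary cases.) For $j\notin J$ the sum runs over $\mu_{N_j-1}\cup\{0\}$, so $S_j=0^{\,a_j+qb_j}+\sum_{\xi\in\mu_{N_j-1}}\xi^{\,a_j+qb_j}$; here the term $0^{t}$ must be read as $1$ precisely when $t=a_j+qb_j=0$ (equivalently $a_j=b_j=0$) and as $0$ otherwise. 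Splitting on whether $a_j+b_j>0$ or $a_j=b_j=0$ and combining with the previous orthogonality computation, one gets: if $a_j=b_j=0$ then $S_j=1+(N_j-1)=N_j$, which is nonzero in characteristic $p$ iff $p\nmid N_j$; if $a_j+b_j>0$ then $0^{t}=0$ and $S_j\neq 0$ iff $(N_j-1)\mid (qa_j+b_j)$, which together with $0\le a_j,b_j\le N_j-1$ translates into the listed alternatives (either $0<a_j,b_j<N_j-1$ with $b_j=-qa_j+\lambda(N_j-1)$, $\lambda>0$, or $(a_j,b_j)$ one of the three boundary pairs).

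The only subtle points — not really obstacles, but what needs care — are: (i) the convention $0^{0}=1$ versus $0^{t}=0$ for $t>0$ in the evaluation of monomials at points with a vanishing coordinate, and that the exponents $a_j,b_j$ lie in a range where $a_j+qb_j=0$ happens only for $a_j=b_j=0$ (this uses $a_j,b_j\le T_j$); (ii) the passage from the congruence $(N_j-1)\mid(a_j+qb_j)$ to the symmetric-looking $(N_j-1)\mid(qa_j+b_j)$, legitimate because $q\,q\equiv 1\pmod{N_j-1}$; and (iii) bookkeeping the sign of $\lambda$ in $b_j=-qa_j+\lambda(N_j-1)$ so that the statement's parentheticals (requiring $\lambda\ge 0$ for $j\in J$, and $\lambda>0$ in the interior subcase for $j\notin J$) come out correctly. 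Once these are handled, assembling the coordinatewise conditions through $\prod_j S_j\neq 0$ yields the proposition. The Euclidean statement (Proposition~\ref{prop1}) is the special case $q=1$, $Q=q$ of this argument, so the same scheme proves it with the congruences $a_j+b_j\equiv 0$ replacing $qa_j+b_j\equiv 0$.
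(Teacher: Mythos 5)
Your argument is correct: factoring the Hermitian inner product over the Cartesian product $Z_J$ into coordinatewise sums $S_j$ and evaluating each by orthogonality of the cyclic group $\mu_{N_j-1}$ (plus the $0^t$ term for $j\notin J$, and the unit $q$ to pass from $a_j+qb_j$ to $qa_j+b_j$ modulo $N_j-1$) is exactly the standard proof. The paper itself gives no proof of Proposition~\ref{prop3}, deferring to \cite{QINP}, and your reconstruction matches the argument given there.
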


Now we give some notation. Set $\mathcal{H} := \mathcal{H}_\emptyset$, $ \mathcal{H'} := \mathcal{H}_{\{1,2, \ldots, m\}}$ and pick $\Delta \subseteq \mathcal{H}$. For $Q=q$, define $\Delta^\perp$ as
$$ \mathcal{H}_J \setminus \{ (N_1 -1 -a_1, N_2 -1 - a_2, \ldots, N_m- 1 -a_m) \; | \; \boldsymbol{a} \in \Delta\},$$
if $\Delta \subseteq \mathcal{H'}$. For $\Delta \not\subseteq \mathcal{H'}$ define $\Delta^\perp$ as 
$$
\mathcal{H}_J\setminus \left\{ \{ (N_1 -1 -a_1, N_2 -1 -a_2, \ldots , N_m -1 -a_m ) | \boldsymbol{a} \in \Delta \cap \mathcal{H'} \} \cup \{ \boldsymbol{a}' | \boldsymbol{a} \in \Delta , \boldsymbol{a} \notin \mathcal{H'} \} \right\},$$
where we set $a'_j = N_j -1 -a_j$ if $a_j \neq N_j -1$ and $a'_j$ equals either $N_j -1$ or $0$ otherwise.

For $Q=q^2$, we write  $\Delta^{\perp_h}$ as
$$ \mathcal{H}_J \setminus \{ ([-qa_1]_{N_1 -1}, [-qa_2]_{N_2 -1}, \ldots, [-qa_m]_{N_m- 1})\; | \; \boldsymbol{a} \in \Delta\},$$
when $\Delta \subseteq \mathcal{H'}$ and where $[-qa_j]_{N_j -1}$, $1\leq j \leq m$, is a suitable representative of the congruence class of $-qa_j$ modulo $N_j -1$.
Otherwise,  $\Delta^{\perp_h}$ is defined as
$$
\mathcal{H}_J \setminus \left\{ \{ ([-qa_1]_{N_1 -1}, [-qa_2]_{N_2 -1}, \ldots, [-qa_m]_{N_m- 1}) | \boldsymbol{a} \in \Delta \cap \mathcal{H'} \} \cup \{ \boldsymbol{a}' | \boldsymbol{a} \in \Delta , \boldsymbol{a} \notin \mathcal{H'} \} \right\}.$$
 Here $\boldsymbol{a}'$ is a multi-valued vector defined by $a'_j = [-qa_j]_{N_j -1}$ if $a_j \notin \{ 0, N_j -1$\}, $a'_j$ is equal to $N_j -1$ if $a_j=0$ and $a'_j$ admits two values which are $N_j -1$ and $0$ if $a_j = N_j-1$.

Next, we state the mentioned result about self-orthogonality {\it which is also true when $\perp_h$ is used instead of $\perp$ and $Q=q^2$.}

\begin{pro}
\label{AA}
With the above notations, let $\Delta$ be a subset of $\mathcal{H}_J$. Then $\left(E_\Delta^J\right)^\perp = E^J_{\Delta^\perp}$ whenever $\Delta \subseteq \mathcal{H'}$. Otherwise, it holds that $\left(E_\Delta^J\right)^\perp \subseteq E^J_{\Delta^\perp}$.
\end{pro}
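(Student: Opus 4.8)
The plan is to transport everything into the monomial basis of $\mathcal{R}_J$. First I would record that the classes $X^{\boldsymbol{a}}$ with $\boldsymbol{a}\in\mathcal{H}_J$ are the standard monomials modulo $I_J$, hence a basis of $\mathcal{R}_J$, and that $\mathrm{ev}_J\colon\mathcal{R}_J\to\mathbb{F}_Q^{n_J}$ is a linear isomorphism, being injective between spaces of the same dimension $n_J$ (the points of $Z_J$ are distinct). Thus $\{\mathrm{ev}_J(X^{\boldsymbol{b}}):\boldsymbol{b}\in\mathcal{H}_J\}$ is a basis of $\mathbb{F}_Q^{n_J}$; writing any $v$ uniquely as $v=\sum_{\boldsymbol{b}\in\mathcal{H}_J}c_{\boldsymbol{b}}\,\mathrm{ev}_J(X^{\boldsymbol{b}})$, membership $v\in E^J_{\Delta^\perp}$ becomes the condition $c_{\boldsymbol{b}}=0$ for every $\boldsymbol{b}$ in the removed set $R:=\mathcal{H}_J\setminus\Delta^\perp$, while $\dim E^J_\Delta=|\Delta|$ and $\dim(E^J_\Delta)^\perp=n_J-|\Delta|$. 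This reduces the whole statement to a vanishing statement about the coordinates $c_{\boldsymbol{b}}$.

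Next I would feed Proposition~\ref{prop1} into this picture, reading it as a description of the Gram matrix $G_{\boldsymbol{a}\boldsymbol{b}}=\mathrm{ev}_J(X^{\boldsymbol{a}})\cdot\mathrm{ev}_J(X^{\boldsymbol{b}})$. Coordinate by coordinate, the generic admissible partner $b_j$ of $a_j$ is forced to be $N_j-1-a_j$ (reduced modulo $N_j-1$ when $j\in J$), which is exactly the coordinate defining the designated dual $\boldsymbol{a}^\ast$ used to build $\Delta^\perp$; a second admissible value $b_j\in\{0,N_j-1\}$ can occur only at a boundary coordinate, namely $j\notin J$ with $a_j\in\{0,N_j-1\}$ and governed by the clause on $p\nmid N_j$. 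When $\Delta\subseteq\mathcal{H}'$ no coordinate attains $N_j-1$, so $\boldsymbol{a}\mapsto\boldsymbol{a}^\ast$ is a well-defined injection $\Delta\hookrightarrow\mathcal{H}_J$ with $R=\{\boldsymbol{a}^\ast:\boldsymbol{a}\in\Delta\}$ of size $|\Delta|$. Granting (after verifying through Proposition~\ref{prop1}, handling the $a_j=0$ coordinates via the $p\nmid N_j$ clause) that $\boldsymbol{a}^\ast$ is then the unique admissible partner of $\boldsymbol{a}$, each relation $\sum_{\boldsymbol{b}}c_{\boldsymbol{b}}G_{\boldsymbol{b}\boldsymbol{a}}=0$ collapses to $c_{\boldsymbol{a}^\ast}\cdot(\text{unit})=0$ and pins $c_{\boldsymbol{b}}=0$ for exactly $\boldsymbol{b}\in R$; this yields $(E^J_\Delta)^\perp\subseteq E^J_{\Delta^\perp}$, and the dimension equality $\dim E^J_{\Delta^\perp}=n_J-|\Delta|=\dim(E^J_\Delta)^\perp$ upgrades it to $(E^J_\Delta)^\perp=E^J_{\Delta^\perp}$.

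For arbitrary $\Delta$ I would attack the asserted inclusion $(E^J_\Delta)^\perp\subseteq E^J_{\Delta^\perp}$ in the same spirit: from $v\in(E^J_\Delta)^\perp$ I would solve the relations $\sum_{\boldsymbol{b}}c_{\boldsymbol{b}}G_{\boldsymbol{b}\boldsymbol{a}}=0$, $\boldsymbol{a}\in\Delta$, so as to force $c_{\boldsymbol{b}}=0$ on $R$, ordering $\Delta$ (say by decreasing total degree) so that the designated-dual assignment makes the system triangular and the coordinates indexed by $R$ can be eliminated block by block. The main obstacle is precisely the boundary coordinates isolated above: when some $a_j=N_j-1$, which is exactly what puts $\boldsymbol{a}$ outside $\mathcal{H}'$, the designated dual becomes the multivalued vector $\boldsymbol{a}'$ and a single relation now couples several coordinates of $R$ simultaneously, so the triangular system acquires nontrivial diagonal blocks that must be shown invertible and confined to $R$; the parallel $p\nmid N_j$ exception behaves the same way. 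Controlling these blocks is the crux, and it is exactly the point at which the clean bijective matching of the previous case breaks down and equality must weaken to the one-sided inclusion. Finally, the Hermitian version is obtained by rerunning the entire scheme with Proposition~\ref{prop3} and $Q=q^2$ in place of Proposition~\ref{prop1} and $Q=q$, the sole change being that the designated-dual coordinate $N_j-1-a_j$ is replaced throughout by $[-qa_j]_{N_j-1}$.
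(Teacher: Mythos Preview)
The paper does not prove Proposition~\ref{AA}; it is quoted from \cite{QINP} (see the sentence immediately preceding the statement), so there is no in-paper argument to compare your outline against, and one can only check whether your sketch stands on its own.

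Your monomial-basis and dimension-count framework is sound, but the step you flag as routine is in fact a genuine gap. For $\Delta\subseteq\mathcal{H}'$ you claim that $\boldsymbol{a}^\ast=(N_1-1-a_1,\ldots,N_m-1-a_m)$ is the \emph{unique} $\boldsymbol{b}\in\mathcal{H}_J$ with $G_{\boldsymbol{b}\boldsymbol{a}}\neq 0$, ``handling the $a_j=0$ coordinates via the $p\nmid N_j$ clause''. That clause works against you, not for you: when $j\notin J$, $p\nmid N_j$ and $a_j=0$, Proposition~\ref{prop1} allows both $b_j=N_j-1$ and $b_j=0$, so the relation $\sum_{\boldsymbol{b}}c_{\boldsymbol{b}}G_{\boldsymbol{b}\boldsymbol{a}}=0$ no longer isolates $c_{\boldsymbol{a}^\ast}$. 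Concretely, take $Q=q=5$, $m=1$, $J=\emptyset$, $N_1=3$, $\Delta=\{0\}\subseteq\mathcal{H}'$: then $\Delta^\perp=\{0,1\}$, $E^J_{\Delta^\perp}=\langle(1,1,1),(0,1,-1)\rangle$, and $(1,1,1)\cdot(1,1,1)=3\neq 0$ in $\mathbb{F}_5$, so neither inclusion between $(E^J_\Delta)^\perp$ and $E^J_{\Delta^\perp}$ holds. This is precisely the obstruction that forces the paper, in every subsequent application, either to assume $p\mid N_j$ for all $j\notin J$ (Proposition~\ref{S}, Theorems~\ref{F}, \ref{FF}, \ref{AS}, \ref{CS}) or to abandon a monomial description of the dual in favour of binomials (Proposition~\ref{pnodivide}). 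Under the divisibility hypothesis your argument for $\Delta\subseteq\mathcal{H}'$ is correct, since the offending branch of Proposition~\ref{prop1} becomes vacuous and the partner is truly unique; without it the statement as literally written cannot be proved. Your treatment of the general case is too vague to evaluate: the ``triangular'' ordering by total degree is not justified, and the same non-uniqueness obstructs it in exactly the same way.
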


Subfield-subcodes of $J$-affine variety codes will be used in this paper. To deal with them, we need the following notation. First of all, let $r$ and $s$ be positive integers such that $s$ divides $r$. For simplicity, we will denote by $R$ the value $r$ when we use the Euclidean inner product and $2r$ when we use the Hermitian inner product. The same notation is considered for $S$ for representing $s$ and $2s$. Set $\mathbb{Z}/T_j \mathbb{Z}= \mathbb{Z}_{T_j}$, $1 \leq j \leq m$.  A subset $\mathfrak{I}$ of the Cartesian product $\mathbb{Z}_{T_1}\times \mathbb{Z}_{T_2} \times \cdots\times\mathbb{Z}_{T_m}$ is called a {\it cyclotomic set}  with respect to $p^S$ if it satisfies
$
\mathfrak{I}= \{p^S \cdot \boldsymbol{a} \;| \; \boldsymbol{a}\in \mathfrak{I}\}
$,
where $p^S \cdot \boldsymbol{a} = (p^S a_1, p^S a_2, \ldots, p^S a_m)$. $\mathfrak{I}$ is said to be {\it minimal} (with respect to $p^S$) whenever it contains all the elements that can be expressed as  $p^{S i } \cdot \boldsymbol{a}$ for some fixed element $\boldsymbol{a} \in \mathfrak{I}$ and some nonnegative integer $i$. Consider a set  $\mathcal{A}$ representing the minimal cyclotomic sets, which will be obtained picking $\boldsymbol{a} \in \mathfrak{I}$ for each minimal cyclotomic set in such a way that each coordinate in $\boldsymbol{a}$ will be successively the minimum nonnegative integer representing it in the corresponding equivalence class. We will set $ \mathfrak{I} = \mathfrak{I}_{\boldsymbol{a}}$ and thus, the set of minimal cyclotomic sets will be $\{ \mathfrak{I}_{\boldsymbol{a}}\}_{\boldsymbol{a} \in \mathcal{A}}$. In addition, we will denote $i_{\boldsymbol{a}} : = \mathrm{card}(\mathfrak{I}_{\boldsymbol{a}})$.

\section{Stabilizer univariate $J$-affine variety codes with designed minimum distance}
\label{seccdos}

In this section, our goal consists of giving parameters of stabilizer codes coming from univariate $J$-affine variety codes with designed minimum distance and relating them with the literature. For the sake of simplicity, we will consider  $ \Delta \subseteq \mathcal{H'}$ and we will write $N_1=N$ and $T_1=T$. Results in this paper, will use both Euclidean and Hermitian inner product and the following well-known fact to provide stabilizer codes from linear ones.
\begin{teo}
\label{th1}
\cite{Akk,kkk}
Let $C$ be an $[n,k,d]$ linear  error-correcting  code over  $\mathbb{F}_{Q}$ such that $C^\perp \subseteq C$. Then, there exists an $[[n, 2k -n, \geq d]]_q$ stabilizer code which is pure to $d$. If the minimum distance of $C^\perp$ exceeds $d$, then the stabilizer code is pure and has minimum distance $d$.
\end{teo}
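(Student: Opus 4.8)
The plan is to pass from the classical code $C$ to a self-orthogonal code over $\mathbb{F}_q$ of length $2n$---for the symplectic form when $Q=q$, and via the trace-Hermitian form when $Q=q^2$---and then to invoke the standard dictionary between such codes and stabilizer codes on $\mathbb{C}^{q^n}$ provided by \cite{Akk,kkk}. Concretely, I would recall that $G_n/Z(G_n)$ is isomorphic to $\mathbb{F}_q^{2n}$ endowed with the symplectic form $\langle(\boldsymbol{a}\mid\boldsymbol{b}),(\boldsymbol{a}'\mid\boldsymbol{b}')\rangle_s=\boldsymbol{a}\cdot\boldsymbol{b}'-\boldsymbol{a}'\cdot\boldsymbol{b}$; under this identification the weight of an error equals the symplectic weight $\mathrm{swt}(\boldsymbol{a}\mid\boldsymbol{b})=\#\{i : (a_i,b_i)\neq(0,0)\}$, two errors commute precisely when the associated vectors are $\langle\,,\,\rangle_s$-orthogonal, and an abelian subgroup of $G_n$ meeting $Z(G_n)$ minimally corresponds to a subspace $D\subseteq\mathbb{F}_q^{2n}$ with $D\subseteq D^{\perp_s}$. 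The common eigenspace attached to such a $D$ is an $[[n,\,n-\dim_{\mathbb{F}_q}D,\,d']]_q$ code with $d'=\min\{\mathrm{swt}(v) : v\in D^{\perp_s}\setminus D\}$ when $D\subsetneq D^{\perp_s}$, and it is pure to every $t\leq\min\{\mathrm{swt}(v) : v\in D\setminus\{0\}\}$. So the task reduces to producing the appropriate $D$ and estimating these two quantities.

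To build $D$, I would first treat $Q=q$. The hypothesis $C^\perp\subseteq C$ says exactly that $C^\perp$ is Euclidean self-orthogonal, so $D:=C^\perp\times C^\perp$ satisfies $D\subseteq D^{\perp_s}$; a short computation with the symplectic form gives $D^{\perp_s}=C\times C$, whence $\dim_{\mathbb{F}_q}D=2(n-k)$ and $n-\dim_{\mathbb{F}_q}D=2k-n$ (note $2k-n\geq 0$ since $\dim C^\perp\leq\dim C$). For $Q=q^2$ I would instead use an $\mathbb{F}_q$-linear, symplectic-weight-preserving bijection $\phi:\mathbb{F}_{q^2}^n\to\mathbb{F}_q^{2n}$ attached to a suitable basis of $\mathbb{F}_{q^2}$ over $\mathbb{F}_q$, for which $\phi(C)^{\perp_s}=\phi(C^{\perp_h})$; then $D:=\phi(C^{\perp_h})$ is symplectic self-orthogonal exactly because $C^{\perp_h}\subseteq (C^{\perp_h})^{\perp_h}=C$, with $D^{\perp_s}=\phi(C)$, $\dim_{\mathbb{F}_q}D=2(n-k)$, and again $n-\dim_{\mathbb{F}_q}D=2k-n$. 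In both cases $D^{\perp_s}$ is (the image of) $C$ and $D$ is (the image of) its dual, and $\mathrm{swt}$ on these subspaces agrees with the Hamming weight of the relevant codewords.

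The distance and purity claims then follow from a single observation: the inclusion $C^\perp\subseteq C$ (Euclidean or Hermitian according to $Q$) forces $d(C^\perp)\geq d(C)=d$, because every codeword of $C^\perp$ is a codeword of $C$. Hence every nonzero vector of $D$ has symplectic weight at least $d$, so the stabilizer code is pure to $d$; and every vector of $D^{\perp_s}\setminus D$ is nonzero in at least one of its two coordinate blocks, where it restricts to a nonzero codeword of $C$, so its symplectic weight is at least $d$, giving designed distance $\geq d$. If moreover $d(C^\perp)>d$, then a weight-$d$ codeword $\boldsymbol{c}$ of $C$ cannot lie in $C^\perp$, so the vector having $\boldsymbol{c}$ in one block and $\boldsymbol{0}$ in the other belongs to $D^{\perp_s}\setminus D$ and has symplectic weight exactly $d$; therefore the true minimum distance equals $d$, and since $\min\{\mathrm{swt}(v) : v\in D\setminus\{0\}\}\geq d(C^\perp)>d$ the code is pure.

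I do not expect a genuine obstacle here: the entire argument is bookkeeping with the symplectic (respectively trace-Hermitian) dictionary, which is precisely what \cite{Akk,kkk} supply. The only point that deserves care is the construction of the weight-preserving map $\phi$ in the Hermitian case together with the verification that it intertwines $\perp_h$ with $\perp_s$; once that is in place, the Euclidean and Hermitian cases are handled uniformly via the weight inequality $d(C^\perp)\geq d(C)$.
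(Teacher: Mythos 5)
The paper does not prove Theorem \ref{th1} at all---it is quoted from \cite{Akk,kkk}---and your argument is precisely the standard symplectic/trace-Hermitian dictionary proof that those references supply, with the correct choices $D=C^\perp\times C^\perp$ (Euclidean) and $D=\phi(C^{\perp_h})$ (Hermitian) and the correct weight estimates on $D$ and $D^{\perp_s}\setminus D$. The proposal is correct and takes essentially the same route as the paper's source for this result.
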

Notice that when $Q=q^2$, we use the Hermitian inner product and the symbol $\perp$ in the previous statement must be regarded as $\perp_h$. By considering  $E^{\{1\}}_\Delta$, with $\Delta=\{1,2, \ldots,d-1\}$ one obtains MDS codes using this technique, these codes were found in \cite{opt}.  In order to obtain a richer family of codes, one should consider subfield-subcodes.

\subsection{Stabilizer  $J$-affine variety codes, with $J=\{1\}$, coming from subfield-subcodes}

Next we are going to show parameters of some stabilizer codes obtained from subfield-subcodes of univariate $J$-affine variety stabilizer codes with designed minimum distance. These codes can be understood as BCH codes, codes which have been considered in the literature of quantum codes (see \cite{Steane-E,Akk,lag3,lag1,lag2} and their references). We consider univariate polynomials for two reasons: we consider a different analysis that allows us to obtain slightly better codes in some cases and, above all, such analysis will be applied in Section \ref{secdos} for the multivariate case, where we provide excellent new codes.

We will consider codes with respect to both metrics, Euclidean and Hermitian, but we should consider some lemmas concerning cyclotomic sets first.

\subsubsection{Study of cyclotomic sets}

In this section $Q=q=p^r$ and let $S=s$ a positive integer that divides  $r$. We consider two cases: $r$ is even and $r$ is odd. When $r$ is even, we will consider values of $s$ that divide $r/2$. Otherwise we restrict ourselves to the case $s=1$. As mentioned $J=\{1\}$. Fix $N=p^r$ and consider the above described set of representatives  $\mathcal{A}$ of the corresponding minimal cyclotomic sets (which, in this section, are, in fact, cyclotomic cosets because $m=1$). Set $ \mathcal{A} = \{a_0 = 0  < a_1 < a_2 < \cdots < a_z\}$. Recall that we are considering classes modulo $p^r -1$ and the values $a_i$, $1 \leq i \leq z$, are the least positive integers representing their classes. Our first result is the following:

\begin{lem}
\label{UNO}
With the above notation and in the case when $r$ is even, it holds that $\left(p^{r/2}-1\right) \in \mathcal{A}$, that is  $p^{r/2}-1$ is the minimum positive integer within its minimal cyclotomic set. Otherwise, when $r$ is odd, one has that $\left( p^{\frac{r+1}{2}}-p-1 \right) \in \mathcal{A}$.
\end{lem}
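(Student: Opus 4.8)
The plan is to show that the integer $c := p^{r/2}-1$ (resp. $c := p^{\frac{r+1}{2}}-p-1$ when $r$ is odd) is the smallest element of its cyclotomic coset modulo $p^r-1$ under multiplication by $p^s$; since the elements of $\mathcal{A}$ are precisely the least representatives of the distinct cyclotomic cosets, this is equivalent to the claim. A convenient reformulation: writing any integer $0 \le a < p^r-1$ in base $p$ as an $r$-tuple of digits $(a_0,\dots,a_{r-1})$, multiplication by $p$ modulo $p^r-1$ acts as a cyclic shift of the digit string, and multiplication by $p^s$ as a cyclic shift by $s$ positions. Hence the cyclotomic coset of $a$ (with respect to $p^s$, where $s \mid r$ in the even case and $s=1$ in the odd case) consists of all cyclic shifts of the base-$p$ digit string of $a$ by multiples of $s$, and the least element of the coset is the one whose digit string (read as the number $\sum a_i p^i$) is minimized over those rotations. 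So the task reduces to: among all rotations of the digit string of $c$ by multiples of $s$, the rotation giving the smallest numerical value is the string of $c$ itself.

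First I would record the base-$p$ expansions. For $r$ even, $p^{r/2}-1 = \sum_{i=0}^{r/2-1}(p-1)p^i$, so its digit string is $(\underbrace{p-1,\dots,p-1}_{r/2},\underbrace{0,\dots,0}_{r/2})$: the low $r/2$ digits are $p-1$ and the high $r/2$ digits are $0$. For $r$ odd, $p^{\frac{r+1}{2}}-p-1 = (p^{\frac{r+1}{2}}-1)-p = \sum_{i=0}^{(r-1)/2}(p-1)p^i - p$; carrying out the subtraction gives digit $p-1$ at position $0$, digit $p-2$ at position $1$, digits $p-1$ at positions $2,\dots,(r-1)/2$, and digits $0$ at positions $(r+1)/2,\dots,r-1$. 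Then I would argue that any cyclic rotation of these strings, when interpreted as a number $\sum_i (\text{digit}_i)p^i$, is at least $c$. The key point is that $c$ places all its "large" digits in the low-order positions and all its zeros in the high-order positions; any nontrivial rotation moves at least one nonzero digit into a higher position (position $\ge r/2$ in the even case, and one checks the analogous statement in the odd case), forcing the value to strictly increase — while of course the trivial rotation returns $c$. In the even case the argument is clean because the string is a block of $r/2$ maximal digits followed by $r/2$ zeros, and after a shift by any multiple of $s$ (with $1 \le s \le r/2$, $s \mid r/2$) at least one of the top $r/2$ positions is occupied by a digit equal to $p-1$, which already exceeds $c$; I would phrase this by comparing the highest nonzero position. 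In the odd case, $s=1$, so I consider all $r$ rotations; the string has a unique "defect" digit $p-2$ at position $1$, and I would compare a given rotation with $c$ by locating the highest position where the high-order digits differ, showing the zeros of $c$ in the top $(r-1)/2$ positions cannot be matched by any rotation without a strict increase.

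I expect the main obstacle to be the odd case: unlike the even case, the digit string of $p^{\frac{r+1}{2}}-p-1$ is not a clean block-of-$(p-1)$'s followed by zeros, because of the $p-2$ at position $1$, and one must rule out all $r$ rotations rather than just shifts by a proper divisor; in particular one has to make sure no rotation produces something numerically smaller, which requires a careful position-by-position comparison of the top digits and an honest check that the almost-matching rotations still lose. A secondary technical point is to confirm that $c$ really is a valid element of $\mathcal{H}_J$ / lies in the allowed range $0 \le c \le p^r-2 = T$ (clear in both cases for $p\ge 2$, $r\ge 2$) and that $N-1 = p^r-1$, so that the cyclic-shift description of the $p^s$-action is legitimate. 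Once the combinatorial rotation comparison is in place, the statement follows immediately.
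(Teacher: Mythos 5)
Your plan is correct, and it takes a genuinely different route from the paper. The paper argues directly on the integers: it lists the coset $\{p^i c \bmod (p^r-1)\}$ explicitly, splits it into monotone runs (an increasing run $\mathfrak{I}_1$ for $i\le r/2$, a decreasing run $\mathfrak{I}_2$ for larger $i$, plus a singleton $\mathfrak{I}_3$ in the odd case), and compares endpoints of the runs to conclude that $c$ is the smallest element. You instead use the standard cyclotomic-coset picture: multiplication by $p$ modulo $p^r-1$ is a cyclic shift of the base-$p$ digit string, so minimality of $c$ in its coset is the statement that the given digit string is the numerically smallest among its rotations. For $r$ even this is immediate, since the string is a block of $r/2$ digits $p-1$ followed by $r/2$ zeros, and any nontrivial rotation pushes a digit $p-1$ to a position $\ge r/2$, giving a value $\ge (p-1)p^{r/2} > p^{r/2}-1$; this is arguably cleaner and more uniform than the paper's endpoint comparisons, and it automatically covers all $s$ at once. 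For $r$ odd the same mechanism works because $c< p^{(r+1)/2}$ while the nonzero digits of $c$ occupy positions $0$ through $(r-1)/2$, a cyclic interval of length $(r+1)/2$ whose two endpoints both carry the digit $p-1$; no nontrivial rotation can keep both endpoints inside $\{0,\dots,(r-1)/2\}$, so some nonzero digit lands at a position $\ge (r+1)/2$ and the value jumps to at least $p^{(r+1)/2}$. You correctly flag the odd case as the delicate one; the only point you should nail down explicitly is the interior ``defect'' digit $p-2$, which vanishes when $p=2$ (and degenerates the block entirely when $p=2$, $r=3$, where $c=1$ and the claim is trivial) --- but since the argument only needs the two endpoint digits, this is a routine check rather than a gap. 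The trade-off is that the paper's explicit enumeration also exhibits the coset elements themselves, whereas your argument proves minimality without listing them; both are complete proofs of the lemma.
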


\begin{proof}
We will reason assuming that $s=1$. The case $s \neq 1$ can be proved in a similiar way.

Assume firstly that $r$ is an even number, then the elements in the cyclotomic set  containing $ p^{r/2}-1$ (its positive representatives which are less than $p^{r}-1$) can be written as a disjoint union of the following two sets: $\mathfrak{I}_1=\left\{p^{r/2}-1, p^{\frac{r}{2}+1}-p, \ldots, p^r-p^{r/2}\right\}$ and
\[
\mathfrak{I}_2=\left\{p^r - p^{\frac{r}{2}+1}+p-1,  p^r - p^{\frac{r}{2}+2}+p^2-1, \ldots, p^r - p^{r-1}+p^{\frac{r}{2}-1}-1\right\}.
\]
Note that the elements of $\mathfrak{I}_1$ are pairwise different modulo $p^r -1$ and they are obtained after successive multiplication by $p$. Now, for $i \leq \frac{r}{2} -1$, it holds that $p^i (p^r - p^{r/2})= p^{r+i} - p^{\frac{r}{2}+i}$ and $p^{r+i}-(p^r +p^i -1) \equiv 0 \mod (p^r -1)$ which proves that $p^r - p^{\frac{r}{2}+i} + p^i -1 \equiv p^i (p^r - p^{r/2}) \mod (p^r -1)$. In addition, the sequence of integers given in $\mathfrak{I}_1$ is strictly increasing and the fact that $p^i \left(1 - p^{\frac{r}{2}}\right) < p^{i+1} (1 - p^{\frac{r}{2}})$ proves that the sequence in $\mathfrak{I}_2$ is strictly decreasing. Furthermore, $p^{\frac{r}{2}} \geq p^{\frac{r}{2}-1} +1$ and then $p^{\frac{r}{2}+1} \geq p^{\frac{r}{2}} +p$ which allows us to deduce that the last element in $\mathfrak{I}_1$ is larger than the first one in $\mathfrak{I}_2$. Taking into account the inequality $p^{\frac{r}{2}}-1 < p^r - p^{r-1} + p^{\frac{r}{2}-1} -1$, this concludes the proof for the case when $r$ is even.

The case when $r$ is odd can be proved in an similar way. The minimal cyclotomic set containing $p^{\frac{r+1}{2}}-p-1$ (its positive representatives which are less than $p^{r}-1$) is a disjoint union of the following sets:
\[\mathfrak{I}_1=\left\{p^{\frac{r+1}{2}}-p-1, p^{\frac{r+3}{2}+1}-p^2-p, \ldots, p^r-p^{\frac{r+1}{2}}-p^{\frac{r-1}{2}}\right\},
\]
\[
\mathfrak{I}_2=\left\{p^r - p^{\frac{r+1}{2}+1}- p^{\frac{r-1}{2}+1}+p-1,   \ldots, p^r - p^{r-1}-p^{r-2}+p^{\frac{r-3}{2}}-1\right\},
\]
and $\mathfrak{I}_3=\left\{ p^r - p^{r-1}+p^{\frac{r-1}{2}}-2\right\}$, where the last set comes from the fact that $p^{r+1}-p^r-p \equiv p^r -2 \mod(p^r-1)$. Then, the result follows from the facts that the elements in $\mathfrak{I}_1$ are an strictly increasing sequence, those in $\mathfrak{I}_2$ are strictly decreasing and the last element in $\mathfrak{I}_2$ is smaller than that in $\mathfrak{I}_3$ but larger than the first one in $\mathfrak{I}_1$. This concludes the proof.

\end{proof}

Our next result shows a property of the classes treated  in Lemma \ref{UNO}, namely a property regarding the representative of the coset. We will use the notation $[x \mod (p^r -1) ]$ to denote the nonnegative integer less than $p^r -1$ representing the coset of $x$.

\begin{lem}
\label{DOS}
Assume that $r$ is an even positive integer. Then, for any integer $0\leq b < p^{r/2}-1$ and for any index $0 \leq j < r$, the following inequality holds:
\begin{equation}
\label{IN1}
\left[p^j b \mod (p^r -1)\right] < p^r - p^{r/2}.
\end{equation}
When $r$ is odd, one gets
\[
\left[p^j b \mod (p^r -1)\right
] < p^r - p^{\frac{r+1}{2}}+p,
\]
for any integer $0\leq b \leq p^{\frac{r+1}{2}}-p-1$ and for any index $0 \leq j < r$.
\end{lem}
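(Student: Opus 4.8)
The plan is to prove Lemma \ref{DOS} by reducing the claim about an arbitrary power $p^j b$ to the extremal case $b = p^{r/2}-2$ (respectively $b = p^{\frac{r+1}{2}}-p-2$), which by Lemma \ref{UNO} is the largest representative strictly below the threshold element $p^{r/2}-1$ (resp. $p^{\frac{r+1}{2}}-p-1$). First I would observe that, since multiplication by $p$ modulo $p^r-1$ is a bijection that cyclically permutes the base-$p$ digits of an $r$-digit representation, the set $\{[p^j b \bmod (p^r-1)] : 0\le j<r\}$ is exactly the cyclotomic coset of $b$, and its maximum is attained by whichever cyclic rotation of the digit string of $b$ is lexicographically largest. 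So the lemma amounts to: every cyclic rotation of the digits of $b$ yields an integer $< p^r - p^{r/2}$ (resp. $< p^r - p^{\frac{r+1}{2}}+p$). Writing the bound as $p^r - p^{r/2} = (p^{r/2}-1)p^{r/2} + (p^{r/2}-1)\cdots$ — more precisely, its base-$p$ digit string is $(p-1,\ldots,p-1,0,p-1,\ldots,p-1)$ with a single $0$ in position $r/2$ — I would compare digit strings lexicographically.

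Next, the key step: for $0 \le b < p^{r/2}-1$, the binary... base-$p$ representation of $b$ has at most $r/2$ digits, i.e. its top $r/2$ digits (positions $r/2$ through $r-1$) are all zero, and moreover $b \ne p^{r/2}-1$ means not all of the bottom $r/2$ digits equal $p-1$. Any cyclic rotation therefore produces a string of $r$ base-$p$ digits containing at least $r/2$ zeros (the ones coming from the empty top half) — actually containing a run; the point is that a rotation can at worst move the nonzero block around, but it always leaves a block of consecutive zero digits of length $\ge r/2$ somewhere (possibly wrapping). An integer with a zero digit in position $r/2$ or higher within such a block is automatically $\le p^r - p^{r/2} - 1$; I would make this precise by checking that the unique way to get close to $p^r - 1$ is to have the high digits all equal $p-1$, which forces the zero block entirely into the low $r/2$ positions, and then the value is at most $p^r - p^{r/2}$, with equality excluded because that exact value has digit $p-1$ in position $r/2-1$ whereas here the zero block has length $\ge r/2$ hence reaches position $\ge r/2-1$ only if $b=p^{r/2}-1$, contradiction. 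The odd case is handled identically, tracking that $p^{\frac{r+1}{2}}-p-1$ has base-$p$ digits $(p-1, 0, p-1,\ldots,p-1)$ over $\frac{r+1}{2}$ positions (a single interior zero), so its cyclotomic coset elements all have a zero among their top $\frac{r-1}{2}$ digits, giving the stated bound $p^r - p^{\frac{r+1}{2}} + p$.

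An alternative, more computational route — probably closer to what the authors intend given the style of the proof of Lemma \ref{UNO} — is to argue directly: suppose for contradiction that $[p^j b \bmod(p^r-1)] \ge p^r - p^{r/2}$ for some $j$. Write $p^j b = Q(p^r-1) + R$ with $R \ge p^r - p^{r/2}$. Since $b < p^{r/2}-1 < p^{r/2}$ we have $p^j b < p^{r/2+j} \le p^{r + r/2 - 1}$, so $Q < p^{r/2}$; then $R = p^j b - Q(p^r-1) = p^j b - Q p^r + Q$, and analyzing this modulo $p^{\min(j,r/2)}$ and modulo $p^r$ pins down $Q$ and $R$ tightly enough to contradict $R \ge p^r - p^{r/2}$, using $b \le p^{r/2}-2$. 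I would carry out the two sub-ranges $j < r/2$ and $j \ge r/2$ separately.

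The main obstacle I anticipate is making the digit-rotation argument airtight at the boundary, i.e. rigorously excluding the equality case and handling the wrap-around rotations where the zero block of $b$'s missing high digits gets split across positions $0$ and $r-1$; this is where a careless argument could miss a rotation that lands exactly at the threshold. The cleanest fix is to phrase everything in terms of: the maximum of the coset equals $p^r - 1$ minus the minimum, over cyclic rotations, of the complemented digit string, and the complement of $b$'s string has a block of $p-1$'s of length $\ge r/2$, so its minimal rotation starts with that block and is $\ge (p^{r/2}-1)p^{r/2}$ — wait, more carefully, is $\ge p^{r}-1 - (p^{r/2}-1)$; subtracting from $p^r-1$ then gives the bound. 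I would double-check the arithmetic of this complementation identity and then the odd case follows by the same bookkeeping with $\frac{r+1}{2}$ in place of $\frac r2$ and an extra additive $p$ tracking the interior-zero digit.
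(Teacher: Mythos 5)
Your treatment of the even case is essentially the paper's own argument in different clothing: the paper splits into $j\le r/2$ (where $p^jb\le p^{r/2}(p^{r/2}-2)<p^r-p^{r/2}$ directly) and $j>r/2$ (where it writes the $p$-adic expansion $b=\sum c_ip^{r/2-i}$ and observes that the reduction modulo $p^r-1$ just relocates these at most $r/2$ digits, not all equal to $p-1$), which is exactly your digit-rotation picture. That part of your plan is sound. Two slips in your write-up are worth fixing but are not fatal: the value $p^r-p^{r/2}$ has digit $0$ (not $p-1$) in position $r/2-1$, so your stated reason for excluding equality is wrong (the correct reason is that equality forces the rotated string to be $(p-1)^{r/2}0^{r/2}$, i.e.\ $b=p^{r/2}-1$); and in the complementation identity the minimal rotation of the complemented string is bounded below by $p^{r/2}-1$, not by $p^r-1-(p^{r/2}-1)$ as you wrote --- with the former, subtracting from $p^r-1$ gives the desired $p^r-p^{r/2}$, with strictness again coming from $b\ne p^{r/2}-1$.

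The genuine gap is the odd case, which you dismiss as ``handled identically''; it is not. First, for $p>2$ the digits of $p^{(r+1)/2}-p-1$ are $(p-1,\dots,p-1,p-2,p-1)$, not $(p-1,0,p-1,\dots,p-1)$. Second, your key claim that every coset element has a zero among its top $\frac{r-1}{2}$ digits is false: for $p=3$, $r=3$, $b=5=(12)_3$ the coset is $\{5,15,19\}$ and $15=(120)_3$, $19=(201)_3$ have nonzero top digit. More structurally, the even-case mechanism only uses that the top $\frac{r-1}{2}$ digits of $b$ vanish, i.e.\ $b\le p^{(r+1)/2}-1$; by your complementation argument this yields only $[p^jb]\le p^r-p^{(r-1)/2}$, which is strictly weaker than the required $p^r-p^{(r+1)/2}+p$ as soon as $p^{(r-1)/2}(p-1)>p$. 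Indeed the statement is false for $b=p^{(r+1)/2}-1$ (take $p=2$, $r=3$, $b=3$: the coset is $\{3,5,6\}$ and $6\not<6$), so any correct proof must invoke the stronger hypothesis $b\le p^{(r+1)/2}-p-1$ somewhere your sketch never does. The place it is needed is precisely the rotation where the zero block sits entirely in the low positions (the paper's case $j\le\frac{r-1}{2}$ with $j=\frac{r-1}{2}$), where one must compute $p^{(r-1)/2}b\le p^r-p^{(r+1)/2}-p^{(r-1)/2}<p^r-p^{(r+1)/2}+p$; the wrap-around rotations then need a separate (easy) estimate. Your second, ``computational'' route is only a plan, so it does not close this gap either.
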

\begin{proof}
We prove the case when $r$ is even. A similar reasoning proves the odd case. Suppose that $j \leq r/2$. Then $p^j b <  p^r - p^{r/2} < p^r -1$. So (\ref{IN1}) follows because $p^j b = [p^j b \mod (p^r -1)]$. Assume now that $r/2 < j < r$ and set $j=(r/2) + j'$. Consider the $p$-adic expansion of $b$, $b = \sum_{i=1}^{r/2} c_i p^{\frac{r}{2}-i}$, $0 \leq c_i < p$. Then
\[
p^j b = p^{\frac{r}{2}+j'} \left( \sum_{i=1}^{r/2} c_i p^{\frac{r}{2}-i}
\right) = \sum_{i=1}^{r/2} c_i p^{r+j'-i}
\]
and the inequality $[p^j b \mod (p^r -1)] < (p-1) \sum_{i=1}^{r/2} p^{r-i}$ holds after taking into account that $b < p^{r/2} -1$ implies that not all the $c_i$ can reach the value $p-1$. Finally the fact that $(p-1) \sum_{i=1}^{r/2} p^{r-i} = p^r - p^{r/2}$ concludes the proof.
\end{proof}

The following remark is the key for obtaining self-dual codes which allow us to construct stabilizer codes.

\begin{rem}
\label{estrella}
{\rm
Assume that $r$ is an even number and let $a$ be a nonnegative integer such that $a<p^{r/2} -1$. This inequality is equivalent to $(p^r-1)-a > p^r - p^{r/2}$ which by Proposition \ref{prop1} and Lemma \ref{DOS} guarantees that $\mathfrak{I}_{a} \subseteq \mathfrak{I}_{a}^{\perp}$. When $r$ is odd, the same result holds for $a<p^{\frac{r+1}{2}} -p-1$.

Returning to the case when $r$ is even, the equality
\[
\left(p^r -1\right) - \left(p^{r/2} -1\right) = p^r - p^{r/2} = p^{r/2} \left(p^{r/2} -1\right)
\]
proves that $\left(p^r -1\right) - \left(p^{r/2} -1\right)  \in \mathfrak{I}_{p^{r/2} -1}$. Thus, the inclusion $\mathfrak{I}_{p^{r/2} -1} \subseteq \left(\mathfrak{I}_{p^{r/2} -1}\right)^\perp$ does not hold and $p^{r/2} -1$ is the smallest representative of a minimal cyclotomic field with that property.
%The same fact holds for $r$ odd and the value $p^{\frac{r+1}{2}} -1$.
}
\end{rem}

To conclude this section, we state a result concerning the size of some cyclotomic sets.

\begin{lem}
\label{TRES}
With the notations as at the beginning of this section, it holds that, when $r$ is even, the cardinality of the cyclotomic sets $\mathfrak{I}_{a}$, $0< a \leq p^{r/2} -1$, is equal to $r/s$. Otherwise, when $r$ is odd and $s=1$, the cardinality of $\mathfrak{I}_{a}$ , $0< a \leq p^{\frac{r+1}{2}} -p-1$, equals $r$.
\end{lem}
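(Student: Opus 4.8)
The plan is to show that each cyclotomic coset $\mathfrak{I}_a$ with $0 < a \le p^{r/2}-1$ (respectively $0 < a \le p^{(r+1)/2}-p-1$ in the odd case) has its size equal to the order of $p^s$ modulo $p^r-1$ acting on $a$, and then argue that this order is forced to be the full value $r/s$ (respectively $r$, when $s=1$). Recall that $\mathrm{card}(\mathfrak{I}_a)$ is the smallest positive integer $\ell$ with $p^{s\ell}a \equiv a \pmod{p^r-1}$, i.e. $a(p^{s\ell}-1)\equiv 0 \pmod{p^r-1}$. Since $\ell$ must in any case divide $r/s$ (because $p^{s\cdot(r/s)}=p^r\equiv 1$), it suffices to rule out every proper divisor.

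First I would treat the even case. Suppose $s\ell < r$ is a proper multiple of $s$ dividing $r$, and that $a(p^{s\ell}-1)\equiv 0 \pmod{p^r-1}$. Write $r = s\ell \cdot t$ with $t \ge 2$; then $p^r-1 = (p^{s\ell}-1)(1 + p^{s\ell} + \cdots + p^{s\ell(t-1)})$, so the condition becomes $p^r-1 \mid a(p^{s\ell}-1)$, equivalently $(1+p^{s\ell}+\cdots+p^{s\ell(t-1)}) \mid a \cdot \frac{p^{s\ell}-1}{\gcd}$ — more cleanly, $a \equiv 0 \pmod{\frac{p^r-1}{p^{s\ell}-1}}$ is \emph{not} quite what we need since $\gcd(a,\ldots)$ intervenes; instead I would argue directly that $a(p^{s\ell}-1)$ is a nonzero multiple of $p^r-1$ forces $a(p^{s\ell}-1) \ge p^r-1$, hence $a \ge \frac{p^r-1}{p^{s\ell}-1} \ge \frac{p^r-1}{p^{r/2}-1} = p^{r/2}+1$, since the largest proper divisor multiple of $s$ is at most $r/2$ and $p^{s\ell}-1 \le p^{r/2}-1$. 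This contradicts $a \le p^{r/2}-1$. Therefore no proper divisor works and $\mathrm{card}(\mathfrak{I}_a) = r/s$. (One should double-check the boundary: for $a = p^{r/2}-1$ itself, $a(p^{r/2}-1) = (p^{r/2}-1)^2 < p^r - 1$, so indeed the coset does not close up early; this is consistent with Remark~\ref{estrella}.)

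For the odd case with $s=1$, the same mechanism applies: if $\ell \mid r$ with $\ell < r$, then $\ell \le r/3$ since $r$ is odd (the largest proper divisor of an odd number is at most $r/3$), so a nonzero multiple condition $a(p^\ell-1)\equiv 0\pmod{p^r-1}$ would give $a \ge \frac{p^r-1}{p^{r/3}-1} > p^{2r/3} > p^{(r+1)/2}-p-1$ for all $r \ge 3$, again a contradiction; hence $\mathrm{card}(\mathfrak{I}_a)=r$. I expect the main obstacle to be nailing down these divisor inequalities cleanly — in particular confirming that the \emph{largest proper divisor of $r$ that is a multiple of $s$} is $\le r/2$ in the even case (true because $r/s$ has a prime factor, whose quotient gives such a divisor of size at most $r/2$ when $r$ is even, and one must also handle $r/s$ odd carefully), and $\le r/3$ in the odd case — together with verifying the elementary bound $\frac{p^r-1}{p^d-1} > p^{r-d}$. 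Everything else is bookkeeping. An alternative, perhaps cleaner, route is to invoke Lemma~\ref{UNO}'s explicit descriptions of $\mathfrak{I}_{p^{r/2}-1}$ and $\mathfrak{I}_{p^{(r+1)/2}-p-1}$, which already exhibit $r/s$ (resp.\ $r$) distinct elements, and then note that any $\mathfrak{I}_a$ with smaller $a$ embeds, via the same $p$-adic digit-shifting argument used in Lemma~\ref{DOS}, into a regime where the $r/s$ shifts $p^{si}a \bmod (p^r-1)$ remain distinct because their $p$-adic expansions occupy disjoint digit windows; I would present whichever of the two arguments turns out shorter after checking the divisor arithmetic.
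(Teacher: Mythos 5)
Your argument is correct, and it reaches the conclusion by a genuinely different mechanism than the paper's proof. The paper first reduces to $s=1$ (using that $s\mid r/2$), writes the $p$-adic expansion of $a$, and exhibits more than $r/2$ pairwise distinct elements $a, pa, p^2a,\dots$ of the orbit --- no modular reduction occurs while the products stay below $p^r-1$ --- and then concludes because the orbit size must divide $r$. You instead identify $\mathrm{card}(\mathfrak{I}_a)$ with the least $\ell$ such that $(p^r-1)\mid a(p^{s\ell}-1)$ and rule out every proper divisor $\ell$ of $r/s$ at once: for such $\ell$ the number $s\ell$ is a proper divisor of the even integer $r$, hence $s\ell\le r/2$, so $0< a(p^{s\ell}-1)\le (p^{r/2}-1)^2<p^r-1$ cannot be a positive multiple of $p^r-1$; and analogously in the odd case with $r/3$ in place of $r/2$. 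The auxiliary facts you flag as needing verification are all routine and true: the largest proper divisor of $r$ is $r/2$ for $r$ even and at most $r/3$ for $r$ odd, and $(p^r-1)/(p^d-1)>p^{r-d}$ for $0<d<r$. What your route buys is a shorter, purely arithmetic proof that works directly for general $s$ (no reduction to $s=1$) and makes transparent why the thresholds $p^{r/2}-1$ and $p^{(r+1)/2}-p-1$ are exactly the relevant ones; what the paper's route buys is reuse of the $p$-adic digit-shifting machinery already set up for Lemmas \ref{UNO} and \ref{DOS}. Your alternative sketch at the end (exhibiting the shifts $p^{si}a$ as occupying distinct digit windows) is essentially the paper's actual argument, so either write-up would be acceptable.
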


\begin{proof}
We will assume that $r$ is even, the case when $r$ is odd can be proved with a similar reasoning. Since $s$ divides $r/2$, it suffices to show our result for the case $s=1$ and therefore we are going to prove that $r$ is the cardinality of $\mathfrak{I}_{a}$ whenever $0< a \leq p^{r/2} -1$. Set
\[
a= c_{i_1} p^{\frac{r}{2}-i_1} + c_{i_2} p^{\frac{r}{2}-i_2} + \cdots + c_{i_k} p^{\frac{r}{2}-i_k}
\]
the $p$-adic expansion of $a$, where $0< c_{i_j} < p$, $1 \leq j \leq k \leq r/2$ and $1 < i_1 < i_2 < \cdots < i_k$. Clearly the elements, $a, p \; a, \ldots, p^{{r}{2}+i_1} \,a$ are different and belong to $\mathfrak{I}_{a}$, which means that $\mathfrak{I}_{a}$ contains more than $r/2$ elements. This proves our result because the cyclotomic sets have $r$ or a divisor of $r$ elements.
\end{proof}

We claim that Lemma \ref{TRES} holds for $r$ odd and $s>1$ as well, however, for the sake of simplicity we do not consider it. 

\subsubsection{Codes obtained with Euclidean inner product}
\label{221}

Stabilizer codes  obtained as in Theorem \ref{th1} with $Q=q$ can be improved with the Steane's enlargement procedure \cite{Steane-E} and their generalizations in \cite{ham} and \cite[Theorem 3]{QINP}.

The following proposition uses the above given results to give parameters of stabilizer codes.

\begin{pro}
\label{A}
Assume that $r$ is a positive even integer, $q=p^r$ and $s$ another positive integer that divides $r/2$. Write $ \mathcal{A} = \{a_0 = 0  < a_1 < a_2 < \cdots < a_z\}$ the set of representatives of cyclotomic sets modulo $p^r -1$ with respect to $p^s$. Let $t_2 < t_1$ be indices such that $a_{t_1} < p^{r/2} -1$ and
\[
a_{t_1 +1}  \leq \left\lceil  \frac{(p^s +1) a_{t_2+1}}{p^s} \right\rceil.
\]
Then, suitable $J$-affine variety codes with $J=\{1\}$ and $m=1$, provide stabilizer codes with parameters $\left[\left[q-1,q-1- \frac{2r}{s}t_1, \geq a_{t_1+1}\right]\right]_{p^s}$ and $\left[\left[q-1,q-1- \frac{r}{s}(t_1+t_2), \geq a_{t_1+1}\right]\right]_{p^s}$.

The same parameters can be obtained for $r$ odd and $s=1$ provided that $a_{t_1} < p^{\frac{r+1}{2}} -p-1$.
\end{pro}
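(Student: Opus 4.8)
The plan is to construct, for each of the two claimed parameter sets, an explicit classical code over $\mathbb{F}_{p^s}$ as a subfield-subcode of a $J$-affine variety code (with $J=\{1\}$, $m=1$, $N=p^r$), verify self-orthogonality, and then apply Theorem \ref{th1} (together with the Steane-type enlargement for the second set). First I would work over $\mathbb{F}_Q$ with $Q=q=p^r$ and set $\Delta = \bigcup_{i=0}^{t_1}\mathfrak{I}_{a_i}\subseteq\mathcal{H'}$, the union of the first $t_1+1$ minimal cyclotomic sets (counting $a_0=0$). Since all $a_i$ with $i\le t_1$ satisfy $a_i\le a_{t_1}<p^{r/2}-1$ (or $<p^{(r+1)/2}-p-1$ in the odd case), Remark \ref{estrella} gives $\mathfrak{I}_{a_i}\subseteq\mathfrak{I}_{a_i}^\perp$ for each such $i$; combined with Proposition \ref{AA} and the structure of $\Delta^\perp$ this should yield $(E^J_\Delta)^\perp\subseteq E^J_\Delta$, i.e. the subfield-subcode $C=(E^J_\Delta)^\sigma$ over $\mathbb{F}_{p^s}$ satisfies $C^\perp\subseteq C$. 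The dimension of $C$ is $\dim_{\mathbb{F}_{p^s}}=1+\sum_{i=1}^{t_1} i_{a_i}=1+\frac{r}{s}t_1$ by Lemma \ref{TRES} (each nonzero cyclotomic set among these has cardinality $r/s$), so $n=q-1$, $k=1+\frac{r}{s}t_1$, and $2k-n=q-1-\frac{2r}{s}t_1$, matching the first claimed code; the minimum distance bound $\ge a_{t_1+1}$ comes from the BCH-type bound: $\Delta$ contains the consecutive set $\{0,1,\dots,a_{t_1+1}-1\}$ because $a_{t_1+1}$ is by definition the smallest representative not already covered, so every integer below it lies in some $\mathfrak{I}_{a_i}$, $i\le t_1$, giving a string of $a_{t_1+1}-1$ consecutive "defining" exponents and hence designed distance $a_{t_1+1}$.

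For the second parameter set I would invoke the Steane enlargement (\cite{Steane-E}, \cite{ham}, \cite[Theorem 3]{QINP}): starting from the code $C$ above and a larger code $C'$ obtained by adjoining the cyclotomic sets up to index $t_2$ — more precisely using the pair $\Delta\subseteq\Delta'$ where $\Delta'=\bigcup_{i=0}^{t_2}\mathfrak{I}_{a_i}$ if $t_2<t_1$, wait, one actually wants $\Delta'\supseteq\Delta$, so I would instead take the roles so that $C\subseteq C'$ with $C'$ corresponding to the union through $a_{t_2}$ is wrong in size; the correct setup is $C$ from $\Delta=\bigcup_{i\le t_1}\mathfrak{I}_{a_i}$ and a chain giving dimension increment $\frac{r}{s}(t_1-t_2)$, so that the enlarged stabilizer code has dimension $q-1-\frac{r}{s}(t_1+t_2)$ instead of $q-1-\frac{2r}{s}t_1$. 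The hypothesis $a_{t_1+1}\le\lceil (p^s+1)a_{t_2+1}/p^s\rceil$ is exactly the condition ensuring that the enlarged code (whose natural distance estimate involves the smaller designed distance $a_{t_2+1}$ scaled by roughly $(p^s+1)/p^s$, a phenomenon typical of Steane enlargement over $\mathbb{F}_{p^s}$) still has minimum distance at least $a_{t_1+1}$; I would spell out the enlargement inequality from \cite[Theorem 3]{QINP} and check that this numeric hypothesis is what makes it work.

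The main obstacle I expect is bookkeeping on two fronts. First, carefully justifying $(E^J_\Delta)^\perp\subseteq E^J_\Delta$ at the level of subfield-subcodes: one must combine $\mathfrak{I}_a\subseteq\mathfrak{I}_a^\perp$ (Remark \ref{estrella}) over all relevant $a$, note that $\Delta$ is a union of \emph{entire} cyclotomic sets so that taking subfield-subcodes commutes appropriately with duality (the standard Delsarte-type argument: the dual of a subfield-subcode is the trace code, and on unions of cyclotomic sets the defining-set description is clean), and conclude. Second, pinning down the exact statement and distance estimate of the Steane enlargement used, since the factor $(p^s+1)/p^s$ and the ceiling in the hypothesis must be matched precisely to the version in \cite{QINP}; getting the dimension formula $q-1-\frac{r}{s}(t_1+t_2)$ to come out requires choosing the enlargement chain so that the added dimension is exactly $\frac{r}{s}(t_1-t_2)=\sum_{i=t_2+1}^{t_1} i_{a_i}$, which again uses Lemma \ref{TRES}. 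The odd-$r$, $s=1$ case is handled identically, replacing the threshold $p^{r/2}-1$ by $p^{(r+1)/2}-p-1$ throughout and using the corresponding halves of Lemmas \ref{UNO}, \ref{DOS}, \ref{TRES} and Remark \ref{estrella}; no new idea is needed, only the parallel estimates already established.
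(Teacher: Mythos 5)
Your overall strategy (union of the initial minimal cyclotomic sets, self-orthogonality via Lemmas \ref{UNO}, \ref{DOS} and Remark \ref{estrella}, a BCH/Vandermonde bound for the distance, and Hamada's enlargement for the second parameter set) is the one the paper follows, but two concrete steps in your write-up fail. First, you take $\Delta=\bigcup_{i=0}^{t_1}\mathfrak{I}_{a_i}$, i.e.\ you include $\mathfrak{I}_{a_0}=\{0\}$. For $J=\{1\}$ this destroys self-orthogonality: by the first bullet of Proposition \ref{prop1} (with $a_1=b_1=0$) one has $\mathrm{ev}_J(1)\cdot\mathrm{ev}_J(1)=q-1=-1\neq 0$ in $\mathbb{F}_p$, so $E^{\{1\}}_\Delta\not\subseteq\bigl(E^{\{1\}}_\Delta\bigr)^\perp$. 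The exponent $0$ may only be adjoined in the situation of Remark \ref{encero} ($J=\emptyset$ and $p\mid N$). The paper's proof starts the union at $\mathfrak{I}_{a_1}$, so that $\mathrm{card}(\Delta)=\frac{r}{s}t_1$ by Lemma \ref{TRES}; your extra element also shifts the stabilizer dimension by $2$ away from the stated value.

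Second, your duality bookkeeping is reversed and the arithmetic does not close. The code $C=E^{\{1\},\sigma}_\Delta$, of dimension $1+\frac{r}{s}t_1$ (or $\frac{r}{s}t_1$ once $0$ is removed), cannot satisfy $C^\perp\subseteq C$ --- its dual has far larger dimension --- and with your $k=1+\frac{r}{s}t_1$ the quantity $2k-n$ equals $2+\frac{2r}{s}t_1-(q-1)$, not $q-1-\frac{2r}{s}t_1$. The correct containment is $E^{\{1\},\sigma}_\Delta\subseteq\bigl(E^{\{1\},\sigma}_\Delta\bigr)^\perp$, and Theorem \ref{th1} must be applied to the dual code $D=\bigl(E^{\{1\},\sigma}_\Delta\bigr)^\perp$, of dimension $q-1-\frac{r}{s}t_1$; this is the code whose parity-check matrix contains the Vandermonde rows $\mathrm{ev}_J(X^a)$ for $a\in\{1,\dots,a_{t_1+1}-1\}$, hence $d(D)\geq a_{t_1+1}$, and $2\dim D-n=q-1-\frac{2r}{s}t_1$ as claimed. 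To transfer the distance bound to the subfield level one also needs $\bigl(E^{\{1\},\sigma}_\Delta\bigr)^\perp=\mathbf{tr}\bigl(\bigl(E^{\{1\}}_\Delta\bigr)^\perp\bigr)$ from \cite[Theorem 8]{gal-her-rua}, which you gesture at but do not pin down. Your handling of the enlargement step, once the chain $\Delta_{t_2}\subseteq\Delta_{t_1}$ is set up on the dual side, does match the paper's use of \cite{ham}.
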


\begin{proof}
Let us show the case when $r$ is even. The odd case can be proved similarly by using the odd version of the above stated lemmas. Note that indices $t_1$ and $t_2$ exist by Lemma \ref{TRES}. Set
\[
\Delta = \mathfrak{I}_{a_1} \cup \mathfrak{I}_{a_2} \cup \cdots \cup \mathfrak{I}_{a_{t_1}}.
\]
Lemmas \ref{UNO} and \ref{DOS} (see also Remark \ref{estrella}) show that $\Delta \subseteq \Delta^\perp$. Consider the subfield-subcode $E^{\{1\}, \sigma}_\Delta := E^{\{1\}}_\Delta \cap \mathbb{F}_{p^s}^{q-1}$. By \cite[Theorem 8]{gal-her-rua}, it holds that
\[
\left(E^{\{1\}, \sigma}_\Delta\right)^\perp = \left(\left(E^{\{1\}}_\Delta \right)^\perp\right)^\sigma = \mathbf{tr}\left(\left(E^{\{1\}}_\Delta\right)^\perp\right),
\]
where $\mathbf{tr}$ is a trace function defined componentwise by means of the map $\mathrm{tr}_r^s: \mathbb{F}_{p^r} \rightarrow \mathbb{F}_{p^s}$, $\mathrm{tr}_r^s  (x)= x + x^{p^s} + \cdots + x^{p^{s(\frac{r}{s} -1)}}$.

The choice of $\Delta$ guarantees that it contains every positive integer less than $a_{t_1+1}$ and thus the minimum distance satisfies $d\left(E^{\{1\}, \perp}_\Delta\right) \geq a_{t_1+1}$ because its corresponding parity-check matrix contains a Vandermonde matrix of rank $a_{t_1+1}-1$ (without the all-ones row). So, the inclusion $\left(E^{\{1\}, \perp}_\Delta\right)^\sigma \subseteq \left(E^{\{1\}}_\Delta\right)^\perp$ proves that $d\left(E^{\{1\}, \sigma}_\Delta\right)^\perp \geq a_{t_1+1}$. Finally, Lemma \ref{TRES} and \cite[Theorem 6(2)]{QINP} give a stabilizer code with parameters$$\left[\left[q-1,q-1- \frac{2r}{s}t_1, \geq a_{t_1+1}\right]\right]_{p^s}.$$

For obtaining stabilizer codes with the second family of parameters, it suffices to use the Hamada's enlargement procedure \cite{ham} and dual codes of the subfield-subcodes given by $\Delta_{t_1}$ and $\Delta_{t_2}$, where, for $i=1,2$, we set
\[
\Delta_{t_i} = \mathfrak{I}_{a_1} \cup \mathfrak{I}_{a_2} \cup \cdots \cup \mathfrak{I}_{a_{t_i}}.
\]
\end{proof}
 
The length of the codes we have just described in this section is $p^r-1$ for $r>0$ and $p$ a prime number. Proposition \ref{A} can be adapted for other lengths after changing our bound $p^{r/2}-1$ with another condition involving the corresponding minimal cyclotomic sets. Next we state the corresponding result.
 
Keep the above notation and let $N$ be a positive integer such that $N-1$ divides $p^r -1$. Consider a positive integer $s$ such that $s$ divides $r$ and cyclotomic sets on $ \mathbb{Z}/(N-1)\mathbb{Z}$   with respect to $p^s$. Recall that each minimal cyclotomic set $\mathfrak{I}_{a}$
 is represented by the minimum nonnegative integer $a \in \mathfrak{I}_{a}$.
 \begin{teo}
 \label{C}
 With the previous notation, let $ \mathcal{A} = \{a_0 = 0  < a_1 < a_2 < \cdots < a_z\}$ be the set of representatives of the minimal cyclotomic sets with respect to $p^s$ modulo $N-1$. Let $a_{N(i)}$ be the representative of the minimal cyclotomic set containing $N-1-a_i$, $0 \leq i \leq z$,  and consider two indices $t_1$ and $t_2$ such that:
 \begin{enumerate}[(i)]
 \item $t_2 < t_1$ and $a_{t_1} < \min \{a_{N(i)} \;|\; 1 \leq i \leq t_1\}$,
     \item the following inequality holds:
     \[
a_{t_1 +1}  \leq \left\lceil  \frac{(p^s +1) a_{t_2+1}}{p^s} \right\rceil.
     \]
 \end{enumerate}
     Then, suitable $J$-affine variety codes, with $J=\{1\}$ and $m=1$, defined by
     \[
\Delta_{t_j} = \mathfrak{I}_{a_1} \cup \mathfrak{I}_{a_2} \cup \cdots \cup \mathfrak{I}_{a_{t_j}},
\]
 where $j=1,2$,  provide stabilizer codes with parameters $$[[N-1,N-1- 2 \, \mathrm{card}(\Delta_{t_1}),\geq a_{t_1 +1}]]_{p^s}$$ and $[[N-1,N-1- (\mathrm{card}(\Delta_{t_1})+ \mathrm{card}(\Delta_{t_2})),\geq a_{t_1 +1}]]_{p^s}.$
 \end{teo}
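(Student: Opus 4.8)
The plan is to imitate the proof of Proposition \ref{A}, the only structural change being that the ad hoc bound $a_{t_1}<p^{r/2}-1$ used there --- whose sole function, via Lemmas \ref{UNO} and \ref{DOS} and Remark \ref{estrella}, is to force the defining set to be self-orthogonal --- is now replaced by the self-contained hypothesis (i). Setting $\Delta:=\Delta_{t_1}=\mathfrak{I}_{a_1}\cup\cdots\cup\mathfrak{I}_{a_{t_1}}$, the two facts to establish are that $\Delta\subseteq\Delta^{\perp}$ and that the dual of the associated subfield-subcode has minimum distance at least $a_{t_1+1}$; the construction of the codes and the parameter count then proceed exactly as in Proposition \ref{A}, via Theorem \ref{th1} and \cite[Theorem 6(2)]{QINP} for the first family and via the Steane-type enlargement of \cite{ham} (or \cite[Theorem 3]{QINP}) for the second.

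For the self-orthogonality, note that since $J=\{1\}$ and $m=1$ we have $\Delta\subseteq\mathcal{H}'=\mathcal{H}_J$, so by the description of $\Delta^{\perp}$ recalled just before Proposition \ref{AA} (equivalently, by Proposition \ref{prop1}) one has $\Delta^{\perp}=\mathcal{H}_J\setminus\{N-1-a\mid a\in\Delta\}$; hence $\Delta\subseteq\Delta^{\perp}$ amounts to $\Delta\cap\{N-1-a\mid a\in\Delta\}=\emptyset$. The key bookkeeping step is: if $a=[p^{s\ell}a_i\bmod (N-1)]\in\mathfrak{I}_{a_i}$, then $N-1-a\equiv -a\equiv p^{s\ell}(N-1-a_i)\pmod{N-1}$, so $N-1-a$ lies in the minimal cyclotomic set of $N-1-a_i$, namely $\mathfrak{I}_{a_{N(i)}}$; letting $a$ range over $\mathfrak{I}_{a_i}$ this yields $\{N-1-a\mid a\in\mathfrak{I}_{a_i}\}=\mathfrak{I}_{a_{N(i)}}$, whence $\{N-1-a\mid a\in\Delta\}=\mathfrak{I}_{a_{N(1)}}\cup\cdots\cup\mathfrak{I}_{a_{N(t_1)}}$. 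As distinct minimal cyclotomic sets are disjoint, $\Delta$ is self-orthogonal once $a_j\neq a_{N(i)}$ for all $1\le i,j\le t_1$, and this is immediate from (i): $a_j\le a_{t_1}<a_{N(i)}$. Proposition \ref{AA} then gives $E^{\{1\}}_{\Delta}\subseteq(E^{\{1\}}_{\Delta})^{\perp}$, and the same holds for $\Delta_{t_2}\subseteq\Delta_{t_1}$.

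For the distance, since $a_1<a_2<\cdots$ are the least representatives of their classes, every positive integer below $a_{t_1+1}$ lies in some $\mathfrak{I}_{a_i}$ with $a_i\le a_{t_1}$, hence in $\Delta$; so the parity-check matrix of $(E^{\{1\}}_{\Delta})^{\perp}$ contains a Vandermonde block of rank $a_{t_1+1}-1$ and $d\big((E^{\{1\}}_{\Delta})^{\perp}\big)\ge a_{t_1+1}$. By \cite[Theorem 8]{gal-her-rua} the dual of the subfield-subcode $E^{\{1\},\sigma}_{\Delta}$ equals $\big((E^{\{1\}}_{\Delta})^{\perp}\big)^{\sigma}\subseteq (E^{\{1\}}_{\Delta})^{\perp}$, so it too has distance $\ge a_{t_1+1}$; likewise $d\big((E^{\{1\},\sigma}_{\Delta_{t_2}})^{\perp}\big)\ge a_{t_2+1}$. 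Applying \cite[Theorem 6(2)]{QINP} to $\Delta_{t_1}$ (noting $\dim_{\mathbb{F}_{p^s}}E^{\{1\},\sigma}_{\Delta_{t_1}}=\mathrm{card}(\Delta_{t_1})$, as $\Delta_{t_1}$ is a union of minimal cyclotomic sets) produces $[[N-1,\,N-1-2\,\mathrm{card}(\Delta_{t_1}),\,\ge a_{t_1+1}]]_{p^s}$; enlarging the nested pair $(E^{\{1\},\sigma}_{\Delta_{t_1}})^{\perp}\subseteq(E^{\{1\},\sigma}_{\Delta_{t_2}})^{\perp}$ via the Steane/Hamada procedure, hypothesis (ii) --- which is exactly $a_{t_1+1}\le\lceil(p^s+1)a_{t_2+1}/p^s\rceil$ --- guarantees designed distance $\ge a_{t_1+1}$, while the dimension becomes $N-1-\mathrm{card}(\Delta_{t_1})-\mathrm{card}(\Delta_{t_2})$, giving the second family.

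The step I would treat most carefully, and the only genuinely new ingredient, is the identity $\{N-1-a\mid a\in\mathfrak{I}_{a_i}\}=\mathfrak{I}_{a_{N(i)}}$ together with the ensuing equivalence between (i) and the self-orthogonality of $\Delta_{t_1}$. Everything else is a transcription of the proof of Proposition \ref{A}; in fact it is marginally simpler, since the cardinalities $\mathrm{card}(\Delta_{t_j})$ are kept unevaluated, so that no analogue of Lemma \ref{TRES} and none of the parity or $s=1$ restrictions of Proposition \ref{A} are needed.
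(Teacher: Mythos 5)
Your proposal is correct and follows exactly the route the paper intends: the paper's own proof of Theorem~\ref{C} consists of the single remark that one repeats the argument of Proposition~\ref{A}, with hypothesis (i) replacing the bound $a_{t_1}<p^{r/2}-1$ as the guarantee that $\Delta_{t_1}\subseteq\Delta_{t_1}^{\perp}$. Your explicit verification of that implication via the identity $\{N-1-a\mid a\in\mathfrak{I}_{a_i}\}=\mathfrak{I}_{a_{N(i)}}$, and your observation that no analogue of Lemma~\ref{TRES} is needed since the cardinalities are left unevaluated, simply fill in details the paper leaves implicit.
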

 \begin{proof}
 The proof follows from the same reasoning we used for proving Proposition \ref{A}. Notice that our first condition implies $\Delta_{t_1} \subseteq \Delta_{t_1}^\perp$.
 \end{proof}

\begin{rem}
\label{encero}
{\rm
Second item in Proposition \ref{prop1} shows that the results concerning quantum codes in this subsection are true when using univariate  $J$-affine variety codes, with $J = \emptyset$, provided that $p$ divides $N$. In this case, we are able to construct codes of length $N$.
}
\end{rem}

 \begin{exa}
 {\rm
 Set $p=2$, $s=2$, $r=10$, $N=94$ and apply Theorem \ref{C} and Remark \ref{encero}, which allows us to consider as the first cyclotomic set  $\mathfrak{I}_{a_{0}}$. Here $\mathfrak{I}_{a_{1}}= \{1,4,16,64,70\}$,  $\mathfrak{I}_{a_{2}}= \{2,8,32,35,47\}$ and  $\mathfrak{I}_{a_{3}}= \{3,6,12,24,48\}$. Then, we get new stabilizer codes with parameters $[[94,87, \geq 3]]_4$, $[[94,77, \geq 4]]_4$ and $[[94,67, \geq 6]]_4$, they are new since by Remark \ref{encero} we can consider a J-affine variety code, with $J = \emptyset$, with length $N=94$ instead of $93 =N-1$ as it is usually considered in the literature of BCH codes since, for these codes, $\gcd(n,q)=1$.
 }
 \end{exa}

\subsubsection{Codes obtained with Hermitian inner product}
\label{hermiticoo}
We devote this section to give parameters of stabilizer codes coming from subfield-subcodes where the Hermitian inner product is used. We will show that some properties, which we proved for the Euclidean inner product, will be useful here as well. Note that our original codes are vector spaces over the field $\mathbb{F}_{p^{2r}}$ and the corresponding subfield-subcodes are over $\mathbb{F}_{p^{2s}}$, where $r$ and $s$ are positive integers such that $s$ divides $r$. The corresponding stabilizer codes will be over $\mathbb{F}_{p^{s}}$ after applying Theorem \ref{th1} with $Q= q^2$ and $q=p^s$.

As in the previous section, we give two theorems. The first uses and provides codes of length $p^{2r} -1$. We continue to use cyclotomic sets modulo $N-1$, where $N-1$ divides $p^{2r}-1$. When considering vectors over the field $\mathbb{F}_{p^{2r}}$, one can use different inner products, apart from the Euclidean and the Hermitian ones, we will also use that given by $\mathbf{x} \cdot_s \mathbf{y} = \sum x_i y_i^s$, where $(x_i)$ (respectively, $(y_i)$) are the coordinates of the vector $\mathbf{x}$ (respectively, $\mathbf{y}$). Notice that when $s=r$, this is exactly the Hermitian inner product.

\begin{pro}
\label{D}
Let $r$ and $s$ be positive integers such that $s$ divides $r$. Let  $\mathcal{A}^{(l)} = \left\{ a_0^{(l)} = 0  < a_1^{(l)} < a_2^{(l)} < \cdots < a_z^{(l)} \right\}$ be the set of representatives of minimal cyclotomic sets with respect to $p^l$ modulo $p^{2r}-1$, where $l$ is either $s$ or $2s$. Let $t$ be an index such that $a_{t'}^{(s)} = a_t^{(2s)} < p^r -1$ for some index $t'$. 

Then, from subfield-subcodes and via Hermitian inner product, one can get a stabilizer code with parameters $\left[\left[p^{2r}-1,p^{2r}-1-2 \frac{r}{s}t, \geq a_{t+1}^{(2s)}\right]\right]_{p^s}$. 

Furthermore, if we assume that $r/s$ is odd, then the result holds for $a_t^{(2s)} \leq p^r -1$.
 
\end{pro}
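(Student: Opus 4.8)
The plan is to adapt, almost verbatim, the proof of Proposition~\ref{A} to the Hermitian world, using the auxiliary product $\cdot_s$ on $\mathbb{F}_{p^{2r}}$ as a bridge between a $J$-affine variety code over $\mathbb{F}_{p^{2r}}$ and the Hermitian structure of its $\mathbb{F}_{p^{2s}}$-subfield-subcode. Take $N_1=N=p^{2r}$, so that the supporting codes ($J=\{1\}$, $m=1$) have length $n=p^{2r}-1$, and set $\Delta:=\mathfrak{I}_{a_1^{(2s)}}\cup\mathfrak{I}_{a_2^{(2s)}}\cup\cdots\cup\mathfrak{I}_{a_t^{(2s)}}$, a union of $t$ minimal $p^{2s}$-cyclotomic sets modulo $p^{2r}-1$; note that $0\notin\Delta$.

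First I would prove the analogues, for the modulus $p^{2r}-1$, of Lemmas~\ref{DOS} and~\ref{TRES}, with $p^r-1$ playing the role that $p^{r/2}-1$ plays there: for every integer $0\le b\le p^r-1$ and every $0\le k<2r$ one has $[\,p^kb\bmod(p^{2r}-1)\,]\le p^{2r}-p^r$, the inequality being strict unless $b=p^r-1$ and $k\equiv r\pmod{2r}$; and, as in Lemma~\ref{TRES} and using the hypothesis $a_t^{(2s)}=a_{t'}^{(s)}$, each $\mathfrak{I}_{a_i^{(2s)}}$ with $1\le i\le t$ has cardinality exactly $r/s$ (the underlying $p^s$-cyclotomic set has full size $2r/s$, whence the $p^{2s}$-one has size $r/s$). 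Both are obtained exactly as the originals, working with the $p$-adic expansion of $b$.

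The core step is to show that $E_\Delta^{\{1\}}$ is self-orthogonal with respect to $\cdot_s$, i.e.\ $\mathrm{ev}_{\{1\}}(X^{a})\cdot_s\mathrm{ev}_{\{1\}}(X^{b})=\sum_{i}P_i^{\,a+p^sb}=0$ for all $a,b\in\Delta$, equivalently $a+p^sb\not\equiv 0\pmod{p^{2r}-1}$ (recall $0\notin\Delta$). Writing $a=[\,p^{2sj}c\,]$ and $b=[\,p^{2sj'}c'\,]$ with $c,c'\in\{a_1^{(2s)},\dots,a_t^{(2s)}\}$, the relation $a+p^sb\equiv 0$ forces, after multiplying by a suitable power of $p^{2s}$, a congruence $c\equiv -p^{sl}c'\pmod{p^{2r}-1}$ with $l$ odd; since $0<c,c'\le p^r-1$, the $(p^{2r}-1)$-analogue of Lemma~\ref{DOS} gives $[\,-p^{sl}c'\,]=(p^{2r}-1)-[\,p^{sl}c'\,]\ge p^r-1\ge c$, so the congruence cannot hold once $a_t^{(2s)}<p^r-1$. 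When $a_t^{(2s)}=p^r-1$ is admitted these inequalities leave only the possibility $c=c'=p^r-1$ together with $sl\equiv r\pmod{2r}$ for some odd $l$; whether such an $l$ exists is precisely a parity question about $r/s$, and this boundary situation is exactly the point that needs the extra hypothesis and is the main obstacle of the proof. Granting $\cdot_s$-self-orthogonality of $E_\Delta^{\{1\}}$, its subfield-subcode $E_\Delta^{\{1\},\sigma}=E_\Delta^{\{1\}}\cap\mathbb{F}_{p^{2s}}^{\,n}$ is Hermitian self-orthogonal, since for $\mathbf{x},\mathbf{y}\in E_\Delta^{\{1\},\sigma}$ the Hermitian product $\mathbf{x}\cdot_h\mathbf{y}=\sum x_iy_i^{p^s}$ coincides with $\mathbf{x}\cdot_s\mathbf{y}$ computed in $\mathbb{F}_{p^{2r}}$ and hence vanishes.

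It remains to apply Theorem~\ref{th1} with $Q=q^2=p^{2s}$ and $q=p^s$ to $C:=\big(E_\Delta^{\{1\},\sigma}\big)^{\perp_h}$, for which $C^{\perp_h}=E_\Delta^{\{1\},\sigma}\subseteq C$. By \cite[Theorem~8]{gal-her-rua} (and the $\cdot_s$-analogue of Proposition~\ref{AA}) one identifies $C^{\perp_h}$ as a subfield-subcode whose $\mathbb{F}_{p^{2s}}$-dimension equals $\mathrm{card}(\Delta)=\frac{r}{s}t$ by the $(p^{2r}-1)$-analogue of Lemma~\ref{TRES}; equivalently $\dim_{\mathbb{F}_{p^{2s}}}C=p^{2r}-1-\frac{r}{s}t$. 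Since $\Delta\supseteq\{1,2,\dots,a_{t+1}^{(2s)}-1\}$, the parity-check matrix of $C$ contains an $(a_{t+1}^{(2s)}-1)$-row Vandermonde block (without the all-ones row), so $d(C)\ge a_{t+1}^{(2s)}$, exactly as in Proposition~\ref{A}. Theorem~\ref{th1} then produces an $\big[\big[\,p^{2r}-1,\ 2\dim_{\mathbb{F}_{p^{2s}}}C-(p^{2r}-1),\ \ge a_{t+1}^{(2s)}\,\big]\big]_{p^s}=\big[\big[\,p^{2r}-1,\ p^{2r}-1-\frac{2r}{s}t,\ \ge a_{t+1}^{(2s)}\,\big]\big]_{p^s}$ stabilizer code, and the sharper form of the $\cdot_s$-self-orthogonality step (the boundary case above) gives the final statement under the parity hypothesis on $r/s$.
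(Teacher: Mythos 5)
Your treatment of the main statement is correct and is essentially the paper's argument: you identify $\left(E^{\{1\},\sigma}_\Delta\right)^{\perp_h}$ with the trace of the $\cdot_s$-dual of $E^{\{1\}}_\Delta$, reduce Hermitian self-orthogonality of the subfield-subcode to the condition $a+p^sb\not\equiv 0 \pmod{p^{2r}-1}$ for $a,b\in\Delta$, and settle that condition with the $2r$-version of Lemma \ref{DOS}; dimension and the Vandermonde/BCH distance bound are handled exactly as in Proposition \ref{A}. The only real difference is organizational: the paper verifies the congruence condition by passing to the enlarged set $\overline{\Delta}$ of $p^s$-cyclotomic sets (on which multiplication by $p^s$ is a permutation, so the $\cdot_s$-condition collapses to the Euclidean one already proved in Proposition \ref{A}), whereas you manipulate the congruence directly. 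Both versions rest on the same inequality and on the hypothesis $a_t^{(2s)}<p^r-1$.

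The genuine gap is the ``furthermore'' clause, which you do not prove: you reduce it to ``whether such an odd $l$ exists is precisely a parity question about $r/s$'' and then assert that the parity hypothesis resolves it. Carrying out that computation shows it resolves the \emph{wrong} way for your route. For $c=c'=p^r-1$, using $-(p^r-1)\equiv p^r(p^r-1)\pmod{p^{2r}-1}$, the congruence $c\equiv -p^{sl}c'$ becomes $sl\equiv r\pmod{2r}$, i.e.\ $l\equiv r/s\pmod{2r/s}$, and this has an \emph{odd} solution $l$ exactly when $r/s$ is odd. Concretely, take $p=2$, $s=1$, $r=3$: then $\mathfrak{I}_{p^r-1}=\mathfrak{I}_7=\{7,28,49\}$ modulo $63$, and $7+2\cdot 28=63\equiv 0$, so $\mathrm{ev}_{\{1\}}(X^{7})\cdot_s\mathrm{ev}_{\{1\}}(X^{28})=\sum_{x\in\mathbb{F}_{64}^*}x^{63}=63\cdot 1\neq 0$. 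Hence the parent code $E^{\{1\}}_\Delta$ fails to be $\cdot_s$-self-orthogonal precisely in the boundary case ($r/s$ odd, $a_t^{(2s)}=p^r-1$) that the clause is meant to cover, and your strategy of deducing Hermitian self-orthogonality of the subfield-subcode from $\cdot_s$-self-orthogonality of the parent code cannot establish it. The paper argues this case by a separate cyclotomic-set consideration (when $r/s$ is odd, $r$ is not a multiple of $2s$, so the $p^{2s}$-cyclotomic set containing $p^{2r}-1-(p^r-1)=p^r(p^r-1)$ is distinct from $\mathfrak{I}_{p^r-1}$); to close your proof you would need an argument of that kind, working with the subfield-subcode itself rather than with the parent code over $\mathbb{F}_{p^{2r}}$.
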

\begin{proof}
Consider the set
\[
\Delta= \mathfrak{I}_{a_1^{(2s)}} \cup \mathfrak{I}_{a_2^{(2s)}} \cup \cdots \cup \mathfrak{I}_{a_{t}^{(2s)}}.
\]

Our proof relies on obtaining a self-orthogonal, with respect to the Hermitian inner product, code over the field $\mathbb{F}_{p^{2s}}$. This code will be a subfield-subcode $E^{\{1\},\sigma}_\Delta$ coming from a $\{1\}$-affine variety code $E^{\{1\}}_\Delta$ over the field $\mathbb{F}_{p^{2r}}$. $E^{\{1\},\sigma}_\Delta$ is well-defined because $\Delta$ is given as a union of minimal cyclotomic sets with respect to $p^{2s}$.

To prove self-orthogonality,  $E^{\{1\},\sigma}_\Delta \subseteq \left(E^{\{1\},\sigma}_\Delta\right)^{\perp_h}$, we first notice the following fact which can be deduced from reading  \cite[Theorem 7]{QINP} and its proof. $\left(E^{\{1\},\sigma}_\Delta\right)^{\perp_h}$ is the trace via the map $\mathrm{tr}_{2r}^{2s}: \mathbb{F}_{p^{2r}} \rightarrow \mathbb{F}_{p^{2s}}$, $\mathrm{tr}_{2r}^{2s}  (x)= x + x^{p^{2s}} + \cdots + x^{p^{2s(\frac{r}{s} -1)}}$, of the dual code of $E^{\{1\},\sigma}_\Delta$ with respect to the inner product $\cdot_s$. As a consequence, what one needs for checking self-orthogonality is to prove that $\Delta \subseteq \Delta^{\perp_s}$, where $\Delta^{\perp_s} := \mathcal{H}' \setminus \{N-1-p^s a \;|\; a \in \Delta\}$. The dimension of the code follows from Lemma \ref{TRES}.

To conclude the proof, we observe that if we consider the set
 \[
\overline{\Delta}= \mathfrak{I}_{a_1^{(s)}} \cup \mathfrak{I}_{a_2^{(s)}} \cup \cdots \cup \mathfrak{I}_{a_{t'}^{(s)}},
\]
then Proposition \ref{A} proves that, with our hypothesis, $\overline{\Delta} \subset (\overline{\Delta})^\perp$. Taking into account that the cyclotomic sets in $\overline{\Delta}$ are with respect to $p^s$, $(\overline{\Delta})^\perp$ is obtained by eliminating those elements of the form $N-1 - p^s a \mod (N-1)$ where $a \in \Delta$. Notice that these elements are in the same minimal cyclotomic set modulo $p^s$ corresponding to  some element $N-1- p^s a_i$ and that we have showed that minimal cyclotomic sets in $\overline{\Delta}$ and $(\overline{\Delta})^\perp$ are disjoint. Finally, we conclude our result from the fact that $\Delta \subseteq \overline{\Delta}$, which holds because each minimal cyclotomic set in $\Delta$ contains half of the elements of another cyclotomic set in $\overline{\Delta}$.

The following observation, proved in Lemma \ref{UNO}, gives a proof for our last statement. First notice that the fact that $(p^{2r}-1)-(p^r-1) = p^r (p^r-1)$ implies that the minimal cyclotomic set $\mathfrak{I}_{p^r-1}$ and the cyclotomic set which contains $N-1-(p^r-1) $ are not disjoint, so one cannot include the minimal cyclotomic set $\mathfrak{I}_{p^r-1}$ for defining our set $\Delta$. However, when $r/s$ is odd, $r$ cannot be a multiple of $2s$ and the minimal cyclotomic set given by $N-1-(p^r-1) $ (recall that $N-1 = p^{2r}-1$) is different from $\mathfrak{I}_{p^r-1}$, which finishes our reasoning.
\end{proof}

Next we are going to use the ideas in Proposition \ref{D} to prove some more restrictive cases that will give some interesting examples.

\begin{teo}
\label{Z}
Fix a positive integer $s>0$ and an integer $N$ such that $N-1$ divides $p^{4s}-1$. Consider the set of representatives $\mathcal{A} = \{a_0 = 0  < a_1 < a_2< \cdots < a_z\}$  of minimal cyclotomic sets with respect to $p^{2s}$ modulo $N-1$. Let $t$ be an index such that $a_t < \frac{N-1}{p^s + 1}$. Then, the $J$-affine variety code, with $J=\{1\}$, via Hermitian inner product, given by
\[
\Delta= \mathfrak{I}_{a_1} \cup \mathfrak{I}_{a_2} \cup \cdots \cup \mathfrak{I}_{a_{t}}
\]
allows us to construct a stabilizer code with parameters \[[[N-1,N-1- 2 \; \mathrm{card} (\Delta), \geq a_{t+1}]]_{p^s}.\]
Moreover, the following inequality holds $$N-1- 2 \,\mathrm{card} (\Delta) \geq N-1-4t$$ and the equality happens whenever $\gcd(N-1,p^s -1) =1$.
\end{teo}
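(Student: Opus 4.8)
The plan is to adapt, for $r=2s$ (so that $p^{2r}-1=p^{4s}-1$ and $r/s=2$), the strategy used to prove Proposition~\ref{D}, with the sharper hypothesis $a_t<\frac{N-1}{p^s+1}$ playing the role of the self-orthogonality condition there. Set $\Delta=\mathfrak{I}_{a_1}\cup\cdots\cup\mathfrak{I}_{a_t}$, a union of minimal cyclotomic sets modulo $N-1$ with respect to $p^{2s}$, and let $E^{\{1\},\sigma}_\Delta=E^{\{1\}}_\Delta\cap\mathbb{F}_{p^{2s}}^{N-1}$ be the subfield-subcode of the $\{1\}$-affine variety code over $\mathbb{F}_{p^{4s}}$ defined by $\Delta$, which is well defined because $\Delta$ is a union of $p^{2s}$-cyclotomic sets. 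As recalled in the proof of Proposition~\ref{D} (see \cite[Theorem~7]{QINP} and its proof), $\bigl(E^{\{1\},\sigma}_\Delta\bigr)^{\perp_h}$ is the $\mathrm{tr}_{4s}^{2s}$-image of the dual of $E^{\{1\}}_\Delta$ with respect to $\cdot_s$; hence Hermitian self-orthogonality of $E^{\{1\},\sigma}_\Delta$ will follow once I check the inclusion $\Delta\subseteq\Delta^{\perp_s}$, where $\Delta^{\perp_s}=\mathcal{H}'\setminus\{[N-1-p^s a]_{N-1}\mid a\in\Delta\}$.

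This inclusion is the step I expect to be the crux, and it is the natural analogue of Remark~\ref{estrella}. I would argue by contradiction: if some $x\in\Delta$ also equals $[N-1-p^s y]_{N-1}$ with $y\in\Delta$, then, writing $x\equiv p^{2sj}a_i$ and $y\equiv p^{2sj'}a_{i'}\pmod{N-1}$ for suitable $i,i'\le t$ and $j,j'\ge 0$, one gets $p^{2sj}a_i+p^{s(2j'+1)}a_{i'}\equiv 0\pmod{N-1}$. Cancelling a power of $p$ (a unit modulo $N-1$, since $N-1\mid p^{4s}-1$) and reducing the exponent using $p^{4s}\equiv 1$, the second summand becomes $p^s a_{i'}$ or $p^{3s}a_{i'}$, so one is reduced to $a_i+p^s a_{i'}\equiv 0$ or (after multiplying by $p^s$) $p^s a_i+a_{i'}\equiv 0\pmod{N-1}$. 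Both are impossible: $a_i,a_{i'}\le a_t<\frac{N-1}{p^s+1}$ forces each of these two positive integers to lie strictly between $0$ and $N-1$. Thus $\Delta\subseteq\Delta^{\perp_s}$ and $E^{\{1\},\sigma}_\Delta\subseteq\bigl(E^{\{1\},\sigma}_\Delta\bigr)^{\perp_h}$.

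The remaining parameters come out exactly as in the proofs of Proposition~\ref{A} and Theorem~\ref{C}. Minimality of the $a_i$ gives $\Delta\supseteq\{1,\dots,a_{t+1}-1\}$, so, by the Vandermonde argument there, $\bigl(E^{\{1\},\sigma}_\Delta\bigr)^{\perp_h}$ — which is contained in $\bigl(E^{\{1\}}_\Delta\bigr)^{\perp_h}$ — has minimum distance $\ge a_{t+1}$, while $\dim_{\mathbb{F}_{p^{2s}}}E^{\{1\},\sigma}_\Delta=\mathrm{card}(\Delta)$ by \cite[Theorem~6(2)]{QINP}. Applying Theorem~\ref{th1} with $Q=(p^s)^2$ to $C=\bigl(E^{\{1\},\sigma}_\Delta\bigr)^{\perp_h}$, which contains its Hermitian dual $E^{\{1\},\sigma}_\Delta$, yields the claimed $[[N-1,\,N-1-2\,\mathrm{card}(\Delta),\,\ge a_{t+1}]]_{p^s}$ code. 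Finally, $N-1\mid p^{4s}-1$ forces $\mathrm{ord}_{N-1}(p^{2s})\mid 2$, so each of the pairwise disjoint sets $\mathfrak{I}_{a_i}$ has at most two elements and $\mathrm{card}(\Delta)\le 2t$, i.e.\ $N-1-2\,\mathrm{card}(\Delta)\ge N-1-4t$; and if $\gcd(N-1,p^s-1)=1$ then $\gcd(N-1,p^{2s}-1)=\gcd(N-1,p^s+1)\le p^s+1$, so $(p^{2s}-1)a_i\equiv 0\pmod{N-1}$ would force $a_i\ge\frac{N-1}{\gcd(N-1,p^s+1)}\ge\frac{N-1}{p^s+1}>a_t$, which is impossible for $i\le t$; hence every $\mathfrak{I}_{a_i}$ has exactly two elements and $\mathrm{card}(\Delta)=2t$.
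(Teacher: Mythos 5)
Your proposal is correct and follows essentially the same route as the paper: self-orthogonality is reduced, exactly as in the paper's proof, to excluding congruences of the form $a_i+p^s a_{j}\equiv 0$ or $p^s a_i+a_{j}\equiv 0 \pmod{N-1}$, which the bound $a_i,a_j<\frac{N-1}{p^s+1}$ forbids, and the cardinality/gcd discussion matches the paper's argument that each $\mathfrak{I}_{a_i}$ has at most two (and, when $\gcd(N-1,p^s-1)=1$, exactly two) elements. The only blemish is the parenthetical claim that $\bigl(E^{\{1\},\sigma}_\Delta\bigr)^{\perp_h}\subseteq\bigl(E^{\{1\}}_\Delta\bigr)^{\perp_h}$ — these duals live over different fields with different forms, and the correct justification is the one you already cite (Delsarte plus the fact that the trace of $\bigl(E^{\{1\}}_\Delta\bigr)^{\perp_s}$ lies in an evaluation code over $\mathbb{F}_{p^{4s}}$ whose defining set misses the exponents $-p^s,\dots,-(a_{t+1}-1)p^s$, so the BCH/Vandermonde bound applies); this is the same level of detail the paper itself uses, so it is a minor imprecision rather than a gap.
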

\begin{proof}
We follow a similar line to that given in the proof of Proposition \ref{D}. We consider the $J$-affine variety code, with $J=\{1\}$, $E^{\{1\}}_\Delta$ over the field $\mathbb{F}_{p^{4s}}$, the corresponding subfield-subcode $E^{\{1\}, \sigma}_\Delta$ over $\mathbb{F}_{p^{2s}}$ and we will prove that $E^{\{1\}, \sigma}_\Delta$ satisfies the conditions in Theorem \ref{th1} for the case of Hermitian inner product, which will end the proof.

We desire to check self-orthogonality and we notice that any minimal cyclotomic set $\mathfrak{I}_{a_i}$, $1 \leq i \leq t$, contains at most two elements $a_i$ and $p^{2s} a_i \mod (N-1)$. According the proof of Proposition \ref{D}, self-orthogonality holds whenever for $i \leq t$, there is no $j \leq t$ such that $p^{2s} a_i \equiv N-1-p^s a_j \mod (N-1)$. Let us prove this fact by contradiction: assume the existence of indices $i$ and $j$ such that $a_i p^{2s} + p^s a_j = \alpha (N-1)$ for some positive integer $\alpha$. This means that $p^s (a_i p^s + a_{j})=\alpha (N-1)$ and, since $p^s$ does not divide $N-1$, one gets that the equality $
(a_i \, p^s + a_j) = \beta (N-1)$
holds for some positive integer $\beta$. This gives the desired contradiction because the fact that $a_i, a_j < (N-1)/(p^s + 1)$ proves that
\[
a_i \, p^s + a_j  < p^s \frac{N-1}{p^s+1} + \frac{N-1}{p^s + 1} < N-1.
\]
It only remains to notice that if $\gcd(N-1,p^s -1)=1$, then any minimal cyclotomic set has two different elements, which concludes the proof. Indeed, the fact that $a_i = p^{2s} \, a_i$ implies $a_i \, (p^s+1)(p^s -1) = 0 \mod (N-1)$ and thus $a_i \, (p^s+1) = 0 \mod (N-1)$, which is not possible because $a_i < (N-1)/(p^s +1)$.
\end{proof}

Now we state another  result. It can be proved by using similar arguments to those in the former proof and noticing that the involved minimal cyclotomic sets have at most $r/s$ different elements.

\begin{pro}
\label{Y}
Let $r$ and $s$ be positive elements such that $s$ divides $r$ and $r>s$. Consider a positive integer $N$ such that $N-1$ divides $p^{2r}-1$. Set $\mathcal{A} = \{a_0 = 0  < a_1 < a_2< \cdots < a_z\}$  the set of representatives of minimal cyclotomic sets with respect to $p^{2s}$ modulo $N-1$. Let $t$ be an index such that
\[
a_t < \frac{N-1}{p^{\left[2\left(\frac{r}{s}-1\right)\right]s}+1
}.
\]
Then the $J$-affine variety code, with $J=\{1\}$, given by \[
\Delta= \mathfrak{I}_{a_1} \cup \mathfrak{I}_{a_2} \cup \cdots \cup \mathfrak{I}_{a_{t}}
\]
allows us to construct a stabilizer code with parameters \[\left[\left[N-1, N-1 - 2 \, \mathrm{card}(\Delta),\, \geq a_{t+1}\right]\right]_{p^s}.\]
\end{pro}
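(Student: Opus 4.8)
The plan is to mimic the structure of the proof of Theorem \ref{Z} almost verbatim, replacing the constant $p^s+1$ by $p^{2(\frac{r}{s}-1)s}+1$ and keeping track of the fact that cyclotomic sets modulo $N-1$ with respect to $p^{2s}$ now may have up to $r/s$ elements rather than just two. Concretely, I would set $E^{\{1\}}_\Delta$ to be the $J$-affine variety code with $J=\{1\}$, $m=1$, over $\mathbb{F}_{p^{2r}}$, and $E^{\{1\},\sigma}_\Delta$ its subfield-subcode over $\mathbb{F}_{p^{2s}}$; this is well defined because $\Delta$ is a union of minimal cyclotomic sets with respect to $p^{2s}$. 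As in Proposition \ref{D}, invoking \cite[Theorem 7]{QINP}, the Hermitian dual $\left(E^{\{1\},\sigma}_\Delta\right)^{\perp_h}$ is the trace (via $\mathrm{tr}_{2r}^{2s}$) of the dual of $E^{\{1\},\sigma}_\Delta$ with respect to the product $\cdot_s$, so self-orthogonality reduces to checking $\Delta \subseteq \Delta^{\perp_s}$, i.e.\ that for $1 \le i,j \le t$ one never has $p^{2s(k)} a_i \equiv N-1-p^s a_j \bmod (N-1)$ for any power $p^{2s(k)}$ arising in a minimal cyclotomic set with respect to $p^{2s}$; since each such set has at most $r/s$ elements, the relevant exponents are $2s k$ with $0 \le k \le \frac{r}{s}-1$, and the worst case is $k=\frac{r}{s}-1$, giving the exponent $2(\frac{r}{s}-1)s$ in the hypothesis.

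Next I would carry out the contradiction argument: suppose $p^{2(\frac{r}{s}-1)s} a_i + p^s a_j = \alpha(N-1)$ for some positive integer $\alpha$. Factor out $p^s$ to get $p^s\!\left(p^{(2(\frac{r}{s}-1)-1)s} a_i + a_j\right) = \alpha(N-1)$; since $p^s \nmid N-1$ (because $N-1 \mid p^{2r}-1$), it follows that $p^{(2(\frac{r}{s}-1)-1)s} a_i + a_j = \beta(N-1)$ for some positive integer $\beta$. But the hypothesis $a_i, a_j < (N-1)/(p^{2(\frac{r}{s}-1)s}+1)$ gives
\[
p^{(2(\frac{r}{s}-1)-1)s} a_i + a_j \;<\; p^{2(\frac{r}{s}-1)s}\,\frac{N-1}{p^{2(\frac{r}{s}-1)s}+1} + \frac{N-1}{p^{2(\frac{r}{s}-1)s}+1} \;=\; N-1,
\]
the desired contradiction (one checks $p^{(2(\frac{r}{s}-1)-1)s} \le p^{2(\frac{r}{s}-1)s}$, which holds since $r>s$ so $\frac{r}{s}\ge 2$ and the exponent difference $s$ is nonnegative). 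This establishes $E^{\{1\},\sigma}_\Delta \subseteq \left(E^{\{1\},\sigma}_\Delta\right)^{\perp_h}$. The minimum distance bound $\ge a_{t+1}$ follows, exactly as in Proposition \ref{A} and Theorem \ref{Z}, from the fact that $\Delta$ contains every integer from $1$ to $a_{t+1}-1$ (a consequence of $\mathcal{A}$ being ordered and $\Delta = \bigcup_{i\le t}\mathfrak{I}_{a_i}$), so the parity-check matrix of the dual contains a full-rank Vandermonde block of size $a_{t+1}-1$. Then Theorem \ref{th1} with $Q=q^2$, $q=p^s$, together with the dimension count $\dim_{\mathbb{F}_{p^{2s}}} E^{\{1\},\sigma}_\Delta = \mathrm{card}(\Delta)$, yields the stated parameters $\left[\left[N-1, N-1-2\,\mathrm{card}(\Delta), \ge a_{t+1}\right]\right]_{p^s}$.

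The one genuinely new point compared with Theorem \ref{Z} — and the step I expect to need the most care — is the bookkeeping of cyclotomic set sizes: in Theorem \ref{Z} the cyclotomic sets had at most two elements because $N-1 \mid p^{4s}-1$, so only the exponent $2s$ appeared; here $N-1 \mid p^{2r}-1$ allows orbits of length up to $r/s$ under multiplication by $p^{2s}$, so one must verify that no element $p^{2sk}a_i$ of such an orbit, for $0 \le k \le \frac{r}{s}-1$, collides with any $N-1-p^s a_j$, and that the extremal exponent $2(\frac{r}{s}-1)s$ is exactly the one controlled by the hypothesis. Everything else is a routine transcription of the earlier arguments, and I would simply remark, as the paper does, that the proof proceeds "by using similar arguments to those in the former proof and noticing that the involved minimal cyclotomic sets have at most $r/s$ different elements."
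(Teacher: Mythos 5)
Your proposal is correct and follows exactly the route the paper intends: the paper gives no detailed proof of Proposition \ref{Y}, stating only that it follows ``by using similar arguments to those in the former proof and noticing that the involved minimal cyclotomic sets have at most $r/s$ different elements,'' and your write-up is a faithful and correct expansion of precisely that sketch (reduction to $\Delta\subseteq\Delta^{\perp_s}$ via the trace description of the Hermitian dual, the factor-out-$p^s$ contradiction with the worst orbit exponent $2\left(\frac{r}{s}-1\right)s$, and the Vandermonde/BCH bound for the distance).
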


Finally, we state the claimed second theorem. Here $N$ is a positive integer such that $N-1$ divides $p^{2r}-1$. We omit its proof because it follows a similar way to the proof given in Theorem \ref{C} and Proposition \ref{D}.

\begin{teo}
\label{E}
Write $\mathcal{A} = \{a_0^{(2s)} = 0 < a_1^{(2s)} < a_2^{(2s)} < \cdots < a_z^{(2s)}\}$  the set representatives of minimal cyclotomic sets with respect to $p^{2s}$ modulo $N-1$. Denote by $a_{N(i)}^{(2s)}$ the representative of the minimal cyclotomic set containing $N-1- p^s a_i^{(2s)}$, $1 \leq i \leq z$. Consider an index $t$ such that the minimum $\min \{ a_{N(i)}^{(2s)} \;|\; 1 \leq i \leq t\}$ is larger than $a_t^{(2s)}$. Then, setting $\Delta = \bigcup_{i=1}^t \mathfrak{I}_{a_i^{(2s)}}$, the corresponding  $J$-affine variety code, with $m=1$ and $J=\{1\}$, provides a stabilizer code with parameters
\[
\left[\left[N-1,N-1-2 \, \mathrm{card} \Delta, \, \geq a_{t+1}^{(2s)}\right]\right]_{p^s}.
\]
\end{teo}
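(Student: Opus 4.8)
The plan is to follow the arguments of Theorem \ref{C} and Proposition \ref{D}: construct a subfield-subcode of a univariate $\{1\}$-affine variety code over $\mathbb{F}_{p^{2r}}$, reduce Hermitian self-orthogonality to a combinatorial condition on the defining set, bound the minimum distance of its Hermitian dual by a BCH/Vandermonde argument, and conclude with Theorem \ref{th1} applied with $Q=q^2=p^{2s}$ and $q=p^s$. Concretely, I would take $E^{\{1\}}_\Delta$ over $\mathbb{F}_{p^{2r}}$ with $\Delta=\bigcup_{i=1}^{t}\mathfrak{I}_{a_i^{(2s)}}$ and pass to the subfield-subcode $E^{\{1\}, \sigma}_\Delta:=E^{\{1\}}_\Delta\cap\mathbb{F}_{p^{2s}}^{N-1}$, which is well defined since $\Delta$ is a union of minimal cyclotomic sets with respect to $p^{2s}$. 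As in the proof of Proposition \ref{D}, \cite[Theorem 7]{QINP} identifies $\left(E^{\{1\}, \sigma}_\Delta\right)^{\perp_h}$ with the image under $\mathrm{tr}_{2r}^{2s}$ of the $\cdot_s$-dual of $E^{\{1\}, \sigma}_\Delta$, so that self-orthogonality with respect to the Hermitian product becomes equivalent to the inclusion $\Delta\subseteq\Delta^{\perp_s}$, where $\Delta^{\perp_s}=\mathcal{H}'\setminus\{N-1-p^{s}a\mid a\in\Delta\}$, residues taken modulo $N-1$.

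The decisive step is to verify this inclusion using only the hypothesis $a_t^{(2s)}<\min\{a_{N(i)}^{(2s)}\mid 1\le i\le t\}$. The key observation is that the map $a\mapsto N-1-p^{s}a$ sends a whole minimal cyclotomic set with respect to $p^{2s}$ to another one: an arbitrary element of $\mathfrak{I}_{a_i^{(2s)}}$ is of the form $p^{2s\ell}a_i^{(2s)}$ modulo $N-1$, and $N-1-p^{s}\left(p^{2s\ell}a_i^{(2s)}\right)\equiv p^{2s\ell}\left(N-1-p^{s}a_i^{(2s)}\right)\pmod{N-1}$, so $\{N-1-p^{s}a\mid a\in\mathfrak{I}_{a_i^{(2s)}}\}=\mathfrak{I}_{a_{N(i)}^{(2s)}}$. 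Hence $\{N-1-p^{s}a\mid a\in\Delta\}=\bigcup_{i=1}^{t}\mathfrak{I}_{a_{N(i)}^{(2s)}}$, and since distinct minimal cyclotomic sets are disjoint while $a_i^{(2s)}\le a_t^{(2s)}<a_{N(j)}^{(2s)}$ for all $i,j\le t$ rules out $\mathfrak{I}_{a_i^{(2s)}}=\mathfrak{I}_{a_{N(j)}^{(2s)}}$, the sets $\Delta$ and $\{N-1-p^{s}a\mid a\in\Delta\}$ are disjoint. This is exactly $\Delta\subseteq\Delta^{\perp_s}$, hence $E^{\{1\}, \sigma}_\Delta\subseteq\left(E^{\{1\}, \sigma}_\Delta\right)^{\perp_h}$.

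It then remains to bound the distance and count dimensions. Since $a_1^{(2s)}<a_2^{(2s)}<\cdots$ are the consecutive cyclotomic representatives, every positive integer smaller than $a_{t+1}^{(2s)}$ has its representative among $a_1^{(2s)},\dots,a_t^{(2s)}$, so $\{1,2,\dots,a_{t+1}^{(2s)}-1\}\subseteq\Delta$; exactly as in the proof of Proposition \ref{A}, this produces a Vandermonde block of rank $a_{t+1}^{(2s)}-1$ in the relevant parity-check matrix and gives $d\left(\left(E^{\{1\}, \sigma}_\Delta\right)^{\perp_h}\right)\ge a_{t+1}^{(2s)}$. Applying Theorem \ref{th1} with $Q=q^2=p^{2s}$ to $C=\left(E^{\{1\}, \sigma}_\Delta\right)^{\perp_h}$, which satisfies $C^{\perp_h}=E^{\{1\}, \sigma}_\Delta\subseteq C$ and has dimension $N-1-\mathrm{card}(\Delta)$ by \cite[Theorem 6(2)]{QINP} (each minimal cyclotomic set with respect to $p^{2s}$ being used in full), yields the stabilizer code $\left[\left[N-1,\,N-1-2\,\mathrm{card}(\Delta),\,\ge a_{t+1}^{(2s)}\right]\right]_{p^s}$.

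I expect the main obstacle not to be any single hard estimate but rather the careful bookkeeping: keeping straight the tower $\mathbb{F}_{p^{2r}}\supseteq\mathbb{F}_{p^{2s}}\supseteq\mathbb{F}_{p^{s}}$ and the auxiliary product $\cdot_s$ in the trace-descent identification of the Hermitian dual, and justifying cleanly that the image of a full $p^{2s}$-cyclotomic set under $a\mapsto N-1-p^{s}a$ is again a $p^{2s}$-cyclotomic set modulo $N-1$. Once these are in place, the disjointness argument and the Vandermonde distance bound are routine, and the dimension formula follows as in Proposition \ref{A} and Theorem \ref{C}.
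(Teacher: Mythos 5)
Your proposal is correct and follows exactly the route the paper intends: the paper omits the proof of Theorem \ref{E}, stating only that it proceeds as in Theorem \ref{C} and Proposition \ref{D}, and your argument is precisely that combination --- trace-descent identification of the Hermitian dual via $\cdot_s$, reduction of self-orthogonality to $\Delta \subseteq \Delta^{\perp_s}$ verified through the disjointness of the $p^{2s}$-cyclotomic sets $\mathfrak{I}_{a_i^{(2s)}}$ and $\mathfrak{I}_{a_{N(j)}^{(2s)}}$ under the hypothesis on representatives, the Vandermonde bound for the distance, and Theorem \ref{th1}. The details you supply (in particular that $a\mapsto N-1-p^s a$ maps a full $p^{2s}$-cyclotomic set onto another) are the right ones and are carried out correctly.
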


We conclude this section with some examples of good stabilizer codes obtained using the results in this section.

\begin{exa}
\label{luno}
{\rm
Set $p=3$, $s=1$ and $N-1=p^{4s}-1 = 80$. We desire to apply our Theorem \ref{Z} and therefore we shall compute the cyclotomic representants $a_i$ with respect to $p^{2s} = 3^2=9$, for $i$ from 1 to $(N-1)/(p^s +1) = 80/4=20$. Some relevant cyclotomic sets are $\mathfrak{I}_{a_1 } =\{a_1=1,9\}$, $\mathfrak{I}_{a_2} =\{a_2=2,18\}$, \ldots, $\mathfrak{I}_{a_8} =\{a_8=8,72\}$, $\mathfrak{I}_{a_9} =\{a_9=10\}$, $\mathfrak{I}_{a_{10}} =\{a_{10}=11,19\}$, \ldots, $\mathfrak{I}_{a_{15}} =\{a_{15}=16,64\}$, $\mathfrak{I}_{a_{16}} =\{a_{16}=17,73\}$, $\mathfrak{I}_{a_{17}} =\{a_{17}=20\}$. Theorem \ref{Z} shows that we can get distances larger than 19 by considering the  $J$-affine variety code, with $J=\{1\}$, defined by $ \Delta = \bigcup_{i=1}^{16} \mathfrak{I}_{a_i}$. That is to say, we can construct a stabilizer code with parameters $[[80,18,\geq 20]]_3$. The parameters of the codes constructed using $\bigcup_{i=1}^{j} \mathfrak{I}_{a_i}$, for $j \le 16$, can be found in Table \ref{latabla1}.

Our procedure is very simple, only involving the computation of minimal cylotomic sets, and allows us to get codes with distance $\geq d$, where $d$ takes values from 2 to 8, from 10 to 17 and 20. We have chosen this length since one can find similar codes in several sources. Comparing with \cite[Theorem 21]{Akk}, our codes have the same distance but, there, the authors can only obtain codes with $d <9$. For $d<10$, better codes can be found in \cite[Table I]{lag3}, however our codes are better than those with $d\geq 10$ derived from \cite[Section 3 A]{lag3}. We will compare codes coming from Theorem \ref{Z} with BCH codes in Remark \ref{compBCH} below. The procedure in \cite{edel} gives  codes with the same parameters when $d> 10$, however we get a $[[80, 50, \geq 10]]_3$ quantum code improving  the code with parameters $[[80, 48, \geq 10]]_3$ in \cite{edel}. 

\begin{table}
\begin{center}
\begin{tabular}{||c|c|c|c|c|c|c|c|c|c|c|c|c|c|c|c|c||}
  \hline
  $k$ & 76 & 72 & 68 & 64 & 60 & 56 &  52 & 50 & 46 & 42& 38& 34&  30 & 26  &  22 &18  \\
   \hline
 $\geq d$ & 2 & 3 & 4 &  5 & 6 & 7 & 8 & 10 & 11 & 12& 13& 14& 15& 16& 17& 20 \\
  \hline
\end{tabular}
\end{center}
\caption{Parameters of quantum codes of length 80 over $\mathbb{F}_3$}
\label{latabla1}
\end{table}
}
\end{exa}

\begin{exa}
\label{eldos}
{\rm
Let us give  other examples of good stabilizer codes obtained from Theorem \ref{Z} and Remark \ref{encero}. Consider $p=5$, $s=1$ and $N= 105$. Since $p$ divides $N$, we can consider $J$-affine variety codes, with $J = \emptyset$, and use the cyclotomic set $\mathfrak{I}_{a_{0}}$. Table \ref{lalatabla} shows the parameters of the corresponding codes. Notice that we provide codes with distances from 2 to 19 while codes with the same parameters but only with distances up to 9 are given in \cite{edel}.
\begin{table}
\begin{center}
\begin{tabular}{||c|c|c|c|c|c|c|c|c|c|c|c|c|c|c|c|c|c||}
  \hline
  $k$ & 103 & 99 & 95 & 91 & 87 & 83 &  79 & 75 & 71 & 67& 63& 59&  55 & 53  &  49 &45 &41  \\
   \hline
 $\geq d$ & 2 & 3 & 4 &  5 & 6 & 7 & 8 & 9 & 10 & 11 & 12& 13& 14& 15& 16& 17& 19 \\
  \hline
\end{tabular}
\end{center}
\caption{Parameters of quantum codes of length $105$ over $\mathbb{F}_5$}
\label{lalatabla}
\end{table}

}
\end{exa}

\begin{exa}
\label{nuevo1}
{\rm
We can also derive codes with larger distances than those in \cite{edel} by using Theorem \ref{E} and Remark \ref{encero}. Set $p=2$, $s=2$, $r=12$ and $N=92$. It is not difficult to check that the hypotheses in Theorem \ref{E} hold, and so, we can construct stabilizer codes with parameters $[[92,84, \geq 3]]_4$, $[[92,78, \geq 4]]_4$, $[[92,72, \geq 5]]_4$, $[[92,66, \geq 6]]_4$ and $[[92,60, \geq 8]]_4$. Only codes with the same parameters as our first two codes are given in \cite{edel}.
}
\end{exa}

As mentioned, we consider subfield subcodes of $J$-affine variety codes, with $J=\{1\}$, and they can be understood as BCH codes. Let us compare the results in this section with the literature:

\begin{rem}\label{compBCH}
Some of our previous results are close to others in \cite{Akk}, where cyclotomic cosets are studied in a different way to construct stabilizer codes. Indeed, our Lemma \ref{TRES} can be deduced from Lemma 8 in \cite{Akk} and one can note that our Propositions \ref{A} and \ref{D} are equivalent to Theorem 18 and Theorem 21 in \cite{Akk}, except by the fact that we consider Hamada's enlargement and not only CSS construction. We have included these results for the sake of completeness, and because the notation in \cite{Akk} and our proofs are completely different and the other results in this section follow the same techniques. Notice also that some results in this section support others in Section \ref{secdos} sharing the same terminology.

A series of articles by La Guardia {\it et al.} \cite{lag3,lag1,lag2} improved the codes in \cite{Akk}. Our analysis of cyclotomic cosets allows us to construct better codes in some particular cases.  We have already mentioned that the codes given in Example \ref{luno} are better than the codes given in \cite{lag3}. Codes in \cite{lag3,lag1,lag2} are very good, however by Theorem \ref{Z} we can construct codes with minimum distance up to $(p^{4s}-1)/(p^s +1) = (p^{2s} +1)(p^s-1)$, distance which cannot be reached in the above papers, because for the same length, most of the constructions in \cite{lag3,lag1,lag2} cannot have greater minimum distance than $2p^{2s}+2$. The only construction that may exceed such bound is quite restrictive since the length must be a prime number for having cyclotomic cosets of size one (see \cite{lag2}).
\end{rem}

\section{Stabilizer  multivariate $J$-affine variety codes with designed minimum distance}
\label{secdos}

In this section, we extend our construction of stabilizer codes with designed minimum distance to the multivariate case. We will use the notation introduced in Section \ref{secuno}. We will consider the footprint bound approach, a bound based on Gr\"obner basis techniques that has been successfully used for estimating the minimum distance of affine variety codes \cite{ho,gh,Geil-Affine,geilthomsen,geil3}. For  $\boldsymbol{a} \in \mathcal{H}_J$, we consider \[
\delta_{\boldsymbol{a}} := \prod_{j=1}^m \left( N_j - \epsilon_j - a_j \right),
\]
where $\epsilon_j =1$ if $j \in J$ and $\epsilon_j = 0$ if $j \not\in J$.

\begin{pro}
\label{R}
Consider the ring $\mathcal{R}_J$ and fix a monomial ordering. Let $f(X_1,\ldots, X_m)$ be a polynomial of minimum total degree representing a class in $\mathcal{R}_J$ and let $X^{\boldsymbol{a}} = X_1^{a_1}   \cdots X_m^{a_m}$  be the leading monomial of $f$. Then
\[
\mathrm{card} \left\{ P \in Z_J \; | \; f(P) \neq 0 \right\} \geq \delta_{\boldsymbol{a}}.
\]
\end{pro}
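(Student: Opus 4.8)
The plan is to apply the footprint bound (Gröbner basis / order-domain techniques) to the quotient ring $\mathcal{R}_J = \mathcal{R}/I_J$. First I would recall that the number of $\mathbb{F}_Q$-points in $Z_J$ equals $n_J = \dim_{\mathbb{F}_Q} \mathcal{R}_J$, and that a Gröbner basis of $I_J$ with respect to the chosen monomial ordering is given by the generating binomials $X_j^{N_j} - X_j$ (for $j \notin J$) and $X_j^{N_j-1} - 1$ (for $j \in J$): since these are univariate in distinct variables, their leading terms $X_j^{N_j}$ resp.\ $X_j^{N_j-1}$ are pairwise coprime, so all S-polynomials reduce to zero by Buchberger's first criterion. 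Hence the footprint $\Delta(I_J)$ — the set of monomials not divisible by any leading term — is exactly $\{X_1^{b_1}\cdots X_m^{b_m} : 0 \le b_j \le N_j - 1 \text{ for } j \notin J,\ 0 \le b_j \le N_j - 2 \text{ for } j \in J\}$, which is $\mathcal{H}_J$, and $|\Delta(I_J)| = n_J$ as it must be.

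Next I would localize the footprint argument at the polynomial $f$. Let $X^{\boldsymbol a}$ be the leading monomial of $f$; because $f$ has minimal total degree in its class, $X^{\boldsymbol a} \in \mathcal{H}_J$, i.e.\ $a_j \le N_j - \epsilon_j - 1$ for all $j$. Consider the ideal $I_J + \langle f \rangle$. A Gröbner basis of this ideal (with respect to the same ordering) has leading-term ideal containing all the $X_j^{N_j - \epsilon_j + 1 - \epsilon_j}$... more precisely, it contains $X^{\boldsymbol a}$ together with the leading terms $X_j^{N_j-\epsilon_j}$ of the binomials; whatever further leading terms arise, they only shrink the footprint. Therefore
\[
\mathrm{card}\, V(I_J + \langle f\rangle) \le \bigl| \Delta\bigl(I_J + \langle f\rangle\bigr)\bigr| \le \bigl|\{ X^{\boldsymbol b} \in \mathcal{H}_J : X^{\boldsymbol a} \nmid X^{\boldsymbol b}\}\bigr|^{c},
\]
oops — I should instead bound the complement directly: the monomials of $\mathcal{H}_J$ divisible by $X^{\boldsymbol a}$ are exactly those $X^{\boldsymbol b}$ with $a_j \le b_j \le N_j - \epsilon_j - 1$, and there are $\prod_{j=1}^m (N_j - \epsilon_j - a_j) = \delta_{\boldsymbol a}$ of them. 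Since $X^{\boldsymbol a}$ lies in the leading-term ideal of $I_J + \langle f\rangle$, none of these $\delta_{\boldsymbol a}$ monomials lie in the footprint of that ideal, so $\mathrm{card}\, V(I_J+\langle f\rangle) \le n_J - \delta_{\boldsymbol a}$. Finally $V(I_J + \langle f\rangle) = \{P \in Z_J : f(P) = 0\}$, whence
\[
\mathrm{card}\{P \in Z_J : f(P)\neq 0\} = n_J - \mathrm{card}\{P \in Z_J : f(P)=0\} \ge n_J - (n_J - \delta_{\boldsymbol a}) = \delta_{\boldsymbol a}.
\]

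The main obstacle — and the only genuinely delicate point — is justifying that the footprint counts the zero set exactly, i.e.\ that $\mathrm{card}\, V(\mathfrak{a}) = \dim_{\mathbb{F}_Q} \mathcal{R}/\mathfrak{a} = |\Delta(\mathfrak{a})|$ for the ideals in play, with no multiplicity issues. This follows because $I_J$ is a radical ideal whose variety consists of $n_J$ distinct $\mathbb{F}_Q$-rational points (each coordinate ranges over a set of distinct field elements: all of $\mathbb{F}_Q$, or all nonzero elements, cut out by a separable binomial), so $\mathcal{R}_J \cong \mathbb{F}_Q^{n_J}$ via $\mathrm{ev}_J$ is a product of fields; any ideal $\mathfrak{a} \supseteq I_J$ corresponds to a coordinate subspace, is again radical, and $\dim \mathcal{R}/\mathfrak{a}$ equals the number of points of $V(\mathfrak{a})$. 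I would state this as a lemma (or cite the footprint-bound references \cite{ho,gh,Geil-Affine,geilthomsen,geil3}) and then the displayed chain of (in)equalities above closes the argument; the rest is the routine count of lattice points in the box $\prod_j [a_j, N_j - \epsilon_j - 1]$.
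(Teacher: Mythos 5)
Your proposal is correct and takes essentially the same route as the paper: the paper's (much terser) proof likewise passes to the ideal $L = I_J + \langle f\rangle$, observes that the footprint of $L$ omits the $\delta_{\boldsymbol a}$ monomials of $\mathcal{H}_J$ divisible by $X^{\boldsymbol a}$, and uses surjectivity of the evaluation map to bound the number of zeros of $f$ on $Z_J$ by $n_J - \delta_{\boldsymbol a}$. Your write-up simply fills in the details (Buchberger's first criterion for $I_J$, the box count, the radicality/point-count discussion) that the paper delegates to a citation of \cite{CLO}.
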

\begin{proof}
This is a well-known result, the proof follows by considering the ideal $L$ in $\mathcal{R}$ generated by the polynomial $f(X_1,X_2, \ldots, X_m)$ and the generators of $I_J$. By \cite[Proposition 7 in Section 5.3]{CLO} the monomials in $\mathcal{R}$ that are not leading monomial of any polynomial in $L$ (this set is sometimes called footprint) constitute a basis for $\mathcal{R}_J$ as a vector space over $\mathbb{F}_Q$. Combining this fact with the observation that the evaluation map (the map that evaluates a polynomial in all the points of the variety) is surjective, the result holds.
\end{proof}

Given a designed minimum distance, this bound suggests the construction of a code whose dimension is maximized. For instance, if $N_j=Q^r$ for all $j$,  $1 \leq j \leq m$, and $J= \emptyset$, then one obtains the so-called hyperbolic codes \cite{mas,saint,gh}.

\begin{de}
\label{la32}
{\rm
Consider a subset $J \subseteq \{1,2, \ldots, m\}$ and a positive integer $t \leq n_J =  \prod_{j \notin J} N_j \prod_{j \in J} (N_j -1)$. We define the classical code  $E(J,t)$, as the vector space  of $\mathbb{F}_Q^{n_J}$ defined by the evaluation by $\mathrm{ev}_J$ of the elements in the vector space generated by the classes in $\mathcal{R}_J$ of the monomials in the following set
\[
M(J,t) = \left\{ X^{\boldsymbol{a}} \; | \; \boldsymbol{a} \in \mathcal{H}_J \mbox{  and $\delta_{\boldsymbol{a}} \geq t$} \right\}.
\]
}
\end{de}

As as consequence of Proposition \ref{R}, it is clear that the minimum distance of the code $E(J,t)$ is larger than $t-1$.

Duality is crucial for constructing quantum stabilizer codes. Hence, let us study the Euclidean dual code of $E(J,t)$. For a start, consider the following set of monomials in $R$:
\[
N(J,t) = \left\{  X^{\boldsymbol{b}} \; | \; \epsilon_j \leq b_j \leq N_j -1, \; 1 \leq j \leq m, \; \mathrm{and} \; \prod_{j=1}^m \left(b_j +1 - \epsilon_j \right) < t \right\},
\]where $\epsilon_j =1$ if $j \in J$ and it equals zero otherwise. Denote by $F(J,t)$ the classical code given by the vector subspace of $\mathbb{F}_Q^{n_J}$ generated by the evaluation by $\mathrm{ev}_J$  of the classes in $\mathcal{R}_J$ of the monomials in $N(J,t)$. By definition, the $(J,t)$-{\it hyperbolic code}, $\mathrm{Hyp}(J,t)$, is the Euclidean dual code $(F(J,t))^\perp$.

Notice that, for the sake of simplicity, to define $N(J,t)$ we have considered a shift for the exponent of the monomials defining the code, such a set is $$
\overline{\mathcal{H}}_J = \{\epsilon_1, \epsilon_1 +1, \ldots, T_1+ \epsilon_1\} \times \{\epsilon_2, \epsilon_2 +1, \ldots, T_2+ \epsilon_2\} \times \cdots \times \{\epsilon_m, \epsilon_m +1, \ldots, T_m+ \epsilon_m\}. $$One may just identify $T_j + \epsilon_j $ with $0$, for $j \in J$, to obtain an element in  $\mathcal{H}_J$.

The following proposition is a generalization of \cite[Theorem 3]{gh}, which treated the case $N_j=Q^r$ for all $j$,  $1 \leq j \leq m$, and $J= \emptyset$. It shows that $E(J,t)^\perp = F(J,t)$.
 
\begin{pro}
\label{S}
With the above notation and assuming that $p  | N_j$ for all $j \not \in J$,  the following equality holds
\[
 E(J,t) = \mathrm{Hyp}(J,t).
\]
\end{pro}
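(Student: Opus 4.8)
The plan is to establish the equivalent statement $E(J,t)^{\perp}=F(J,t)$ by combining a dimension count with an orthogonality relation. As in the proof of Proposition~\ref{R}, the classes of the monomials $X^{\boldsymbol{a}}$, $\boldsymbol{a}\in\mathcal{H}_J$, form a basis of $\mathcal{R}_J$ and $\mathrm{ev}_J$ is an $\mathbb{F}_Q$-vector space isomorphism onto $\mathbb{F}_Q^{n_J}$; hence $\dim E(J,t)=\mathrm{card}\,M(J,t)$, and, after reducing every exponent $b_j=N_j-1$ with $j\in J$ to $0$ (which leaves $\mathrm{ev}_J(X^{\boldsymbol{b}})$ unchanged because $X_j^{N_j-1}\equiv 1$ in $\mathcal{R}_J$), also $\dim F(J,t)=\mathrm{card}\,N(J,t)$. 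The substitution $d_j:=N_j-\epsilon_j-1-a_j$ is a bijection $\mathcal{H}_J\to\mathcal{H}_J$ with $\delta_{\boldsymbol{a}}=\prod_j(d_j+1)$, so $\mathrm{card}\,M(J,t)=\mathrm{card}\{\boldsymbol{d}\in\mathcal{H}_J:\prod_j(d_j+1)\ge t\}$; likewise $c_j:=b_j-\epsilon_j$ identifies $N(J,t)$ with $\{\boldsymbol{c}\in\mathcal{H}_J:\prod_j(c_j+1)<t\}$. Since these two conditions partition $\mathcal{H}_J$, we get $\dim E(J,t)+\dim F(J,t)=\mathrm{card}\,\mathcal{H}_J=\prod_{j\notin J}N_j\prod_{j\in J}(N_j-1)=n_J$.

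The heart of the argument is the orthogonality $E(J,t)\perp F(J,t)$. Fix $\boldsymbol{a}$ with $\delta_{\boldsymbol{a}}\ge t$ and $\boldsymbol{b}$ with $\epsilon_j\le b_j\le N_j-1$ and $\prod_j(b_j+1-\epsilon_j)<t$, and assume for contradiction that $\mathrm{ev}_J(X^{\boldsymbol{a}})\cdot\mathrm{ev}_J(X^{\boldsymbol{b}})\neq 0$. Reducing $\boldsymbol{b}$ to a representative in $\mathcal{H}_J$ as above, Proposition~\ref{prop1} applies, and the hypothesis $p\mid N_j$ for $j\notin J$ is precisely what discards the exceptional alternative ``$a_j=b_j=0$ and $p\nmid N_j$''. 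A short case check---for $j\in J$; for $j\notin J$ with $0<b_j<N_j-1$; and for the boundary cases $b_j\in\{0,N_j-1\}$ with $j\notin J$ (in the last of which $a_j$ may equal either $0$ or $N_j-1$)---shows that in every case $N_j-\epsilon_j-a_j\le b_j+1-\epsilon_j$. Taking the product over all $j$ yields $\delta_{\boldsymbol{a}}=\prod_j(N_j-\epsilon_j-a_j)\le\prod_j(b_j+1-\epsilon_j)<t$, contradicting $\delta_{\boldsymbol{a}}\ge t$. Hence $E(J,t)\subseteq F(J,t)^{\perp}$.

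Finally, from $\dim E(J,t)=n_J-\dim F(J,t)=\dim F(J,t)^{\perp}$ together with the inclusion just proved, we conclude $E(J,t)=F(J,t)^{\perp}=\mathrm{Hyp}(J,t)$. I expect the coordinate-wise verification in the orthogonality step to be the only delicate point: the boundary exponents force one either to pass through the relation $X_j^{N_j-1}=1$ (for $j\in J$) or to allow two admissible values of $a_j$ (for $j\notin J$), and it is exactly there that the inequality $N_j-\epsilon_j-a_j\le b_j+1-\epsilon_j$ may be strict; the remainder is routine bookkeeping with the reindexing bijections, and it is also worth recording explicitly why the hypothesis $p\mid N_j$ is indispensable, since without it the offending monomial pair $a_j=b_j=0$ would violate the key inequality.
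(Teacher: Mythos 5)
Your proposal is correct and follows essentially the same route as the paper: orthogonality of $E(J,t)$ and $F(J,t)$ via Proposition \ref{prop1} (with the hypothesis $p\mid N_j$ for $j\notin J$ ruling out the exceptional case $a_j=b_j=0$), followed by a dimension count using the bijection $x_j\mapsto N_j-1-\epsilon_j-x_j$. Your coordinate-wise inequality $N_j-\epsilon_j-a_j\le b_j+1-\epsilon_j$ is just a cleaner packaging of the paper's case split between indices with $a_j+b_j=N_j-1$ and those with $a_j=b_j=N_j-1$, so the two arguments are the same in substance.
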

\begin{proof}

We start by proving that for the Euclidean inner product
\begin{equation}
\label{gama}
\mathrm{ev}_J \left(X^{\boldsymbol{a}}\right) \cdot \mathrm{ev}_J \left(X^{\mathbf{b}} \right) = 0
\end{equation}
holds for all monomials  $X^{\boldsymbol{a}} \in M(J,t)$ and $X^{\boldsymbol{b}} \in N(J,t)$. Reasoning by contradiction, we assume the existence of vectors $\boldsymbol{a}$ and $\boldsymbol{b}$ as above for which (\ref{gama}) does not hold. By Proposition \ref{prop1} and its proof (see \cite{QINP}), it must happen that, for any index $j$ as above, either $a_j + b_j =N_j -1$ or $a_j = b_j = N_j -1$. Suppose for simplicity that the first equality holds for those indices $j$ such that $1 \leq j \leq n$ and the second happens for the remaining indices. Then,
\begin{equation}
\label{beta}
t \leq \prod_{j=1}^m \left( N_j - a_j - \epsilon_j  \right) = \prod_{j=1}^n \left( b_j + 1 - \epsilon_j  \right)
\end{equation}
because  the case  $a_j = b_j = N_j -1$ only holds for $j \not \in J$ and then, $\epsilon_j=0$. Now
\begin{equation}
\label{delta}
\prod_{j=1}^n \left( b_j + 1 - \epsilon_j  \right) = \frac{\prod_{j=1}^m \left( b_j + 1 - \epsilon_j  \right)}{\prod_{j=n+1}^m \left( b_j + 1 - \epsilon_j  \right)} < t
\end{equation}
 since $\prod_{j=n+1}^m \left( b_j + 1 - \epsilon_j  \right) = \prod_{j=n+1}^m N_j  \geq 1$. Therefore we have proved the inclusion $E(J,t) \subseteq \mathrm{Hyp}(J,t)$.

To conclude the proof, it suffices to check that $\dim \left(\mathrm{Hyp}(J,t) \right) = \dim   (E(J,t) )$. Set $M^l (J,t)$ and $N^l (J,t)$ the set of $m$-tuples which are exponents of the monomials in $M (J,t)$ and $N (J,t)$, respectively. Notice that
\[
\dim  \left( \mathrm{Hyp}(J,t) \right) = n_J - \mathrm{card} \left (N^l (J,t) \right) =
\]
\[
\mathrm{card} \left(  \left\{  \boldsymbol{b} \; | \; \epsilon_j \leq b_j \leq N_j -1, \; 1 \leq j \leq m, \; \mathrm{and} \; \prod_{j=1}^m \left(b_j +1 - \epsilon_j \right) \geq t \right\}
\right) =
\]
\[
\mathrm{card} \left(M^l (J,t) \right),
\]
where the last equality holds because of the bijectivity of the componentwise map given by $x_j \rightarrow N_j - 1- x_j $. This finishes our reasoning.
\end{proof}

Next, we are going to provide our main result concerning multivariate $J$-affine stabilizer codes obtained with Euclidean inner product. The proof follows from Propositions \ref{prop1}, \ref{R} and \ref{S}.

With notation as above and as in Section \ref{secuno}, for each index $1 \leq i \leq m$, define
\[
r(N_i) :=  \begin{cases}  \lfloor \frac{N_i -1}{2}\rfloor & \text { if } N_i \text{ is even},\\
\frac{N_i -1}{2} - 1  & \text{ otherwise}
\end{cases}
\]
 and consider the sets of monomials in $\mathcal{R}$,
 \[
 \mathcal{R}_i(J) := \left\{ X^{\boldsymbol{b}} \; | \; \boldsymbol{b} \in \overline{\mathcal{H}}_J \mbox{ and $\epsilon_i \leq b_i \leq r(N_i)$ }
 \right\}.
 \]

 \begin{teo}
 \label{F}
 Assume $Q=q$ and that $p$ divides $N_j$ for all $j \not \in J$. Suppose also that there exists an index $i$, $1 \leq i \leq m$, and a value $t \leq n_J$ such that $N(J,t) \subseteq \mathcal{R}_i(J)$. Then, it holds that
 $\mathrm{Hyp}(J,t)^\perp \subseteq \mathrm{Hyp}(J,t)$ and, therefore, we are able to construct a stabilizer code with parameters $$\left[\left[n_J, n_J - 2 \, \mathrm{card} \left( N(J,t) \right), \geq t \right]\right]_q.$$
\end{teo}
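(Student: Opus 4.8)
The plan is to recast the assertion $\mathrm{Hyp}(J,t)^{\perp} \subseteq \mathrm{Hyp}(J,t)$ as a self-orthogonality statement for a code we already control. Since $p \mid N_j$ for all $j \notin J$, Proposition \ref{S} gives $\mathrm{Hyp}(J,t) = E(J,t)$; on the other hand $\mathrm{Hyp}(J,t) = F(J,t)^{\perp}$ by definition. Dualizing, $\mathrm{Hyp}(J,t)^{\perp} = F(J,t)$, so the desired inclusion is equivalent to $F(J,t) \subseteq F(J,t)^{\perp}$, i.e.\ to the Euclidean self-orthogonality of $F(J,t)$.

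To establish that, I would pick any two monomials $X^{\boldsymbol b}, X^{\boldsymbol c} \in N(J,t)$ and show $\mathrm{ev}_J(X^{\boldsymbol b}) \cdot \mathrm{ev}_J(X^{\boldsymbol c}) = 0$. First reduce the exponent vectors modulo $I_J$ so that they lie in $\mathcal{H}_J$ (for $j \in J$ this merely turns an exponent $N_j - 1$ into $0$, leaving $\mathrm{ev}_J$ of the monomial unchanged); crucially this does not alter the $i$-th coordinates, because the hypothesis $N(J,t) \subseteq \mathcal{R}_i(J)$ forces $b_i, c_i \le r(N_i) < N_i - 1$. Now assume the inner product is nonzero and apply Proposition \ref{prop1} at the index $i$: either $b_i + c_i$ is a positive multiple of $N_i - 1$, or $i \notin J$ with $b_i = c_i = 0$ and $p \nmid N_i$. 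The second alternative contradicts $p \mid N_j$ for $j \notin J$. For the first, note $b_i + c_i > 0$ (automatic when $i \in J$, since then $\epsilon_i = 1$ forces $b_i, c_i \ge 1$; and stipulated in this branch when $i \notin J$), while $b_i + c_i \le 2 r(N_i)$, which equals $N_i - 2$ if $N_i$ is even and $N_i - 3$ if $N_i$ is odd, in either case strictly smaller than $N_i - 1$; hence $b_i + c_i$ cannot be a positive multiple of $N_i - 1$. This contradiction proves $F(J,t) \subseteq F(J,t)^{\perp}$, hence $\mathrm{Hyp}(J,t)^{\perp} \subseteq \mathrm{Hyp}(J,t)$.

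It then remains to read off the parameters by applying Theorem \ref{th1} to $C := \mathrm{Hyp}(J,t)$, which now satisfies $C^{\perp} = F(J,t) \subseteq C$. By Proposition \ref{R} (see the observation following Definition \ref{la32}) together with $\mathrm{Hyp}(J,t) = E(J,t)$, the minimum distance of $C$ is at least $t$; and the dimension count carried out in the proof of Proposition \ref{S} gives $\dim C = \dim E(J,t) = n_J - \mathrm{card}(N(J,t))$. Theorem \ref{th1} then yields a stabilizer code with parameters $\left[\left[n_J, 2\dim C - n_J, \ge t\right]\right]_q = \left[\left[n_J, n_J - 2\,\mathrm{card}(N(J,t)), \ge t\right]\right]_q$.

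The genuine content is confined to the middle paragraph, and the main (only mildly delicate) obstacle there is handling the reduction modulo $I_J$ carefully and invoking the precise alternatives of Proposition \ref{prop1} at the distinguished index $i$. Everything else is formal; the numerical inequality $2 r(N_i) < N_i - 1$ — which is exactly what the two-case definition of $r(N_i)$ is designed to guarantee in both parities — makes the argument close with no further work.
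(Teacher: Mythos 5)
Your proposal is correct and takes essentially the same route as the paper's (much terser) proof: reduce to the self-orthogonality of $F(J,t)$, establish it via Proposition \ref{prop1} at the distinguished index $i$ using the inequality $2r(N_i)<N_i-1$ together with $p\mid N_j$ for $j\notin J$ (with the convention identifying exponent $N_j-1$ with $0$ for $j\in J$), and obtain the distance and dimension from Propositions \ref{R} and \ref{S}. You merely spell out the details the paper leaves implicit.
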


\begin{proof}
Propositions  \ref{S} and \ref{R} show the bound for the distance of the stabilizer code. The self-orthogonality of the code defined by $N(J,t)$ follows from the inclusion $N(J,t) \subseteq \mathcal{R}_i(J)$ and Proposition \ref{prop1} because $\mathrm{ev}_J (X^{\boldsymbol{a}}) \cdot \mathrm{ev}_J (X^{\boldsymbol{b}}) = 0$ for $\boldsymbol{a}, \boldsymbol{b} \in \mathcal{R}_i(J)$, where we consider $a_j=0$ instead of $a_j = N_j -1$ when $j \in J$.
\end{proof}

The following corollary shows some particular bivariate cases where the above theorem can be easily used in practice.

\begin{cor}
\label{L}
Keep the above notations, setting $Q=q=p^r$ and $N_1$ and $N_2$ positive integers such that $N_j - 1$ divides $q -1$ for $j=1,2$. Consider a subset $J$ of the set $\{1,2\}$ and assume that $p$ divides $N_j$ for each $j \not \in J$; consider also a positive integer $t \leq n_J$ and suppose that one the following conditions is satisfied:
\begin{enumerate}[(i)]
  \item $J= \emptyset$  and either $t \leq r(N_1)+2$ or $t \leq r(N_2)+2$.
  \item $J=\{1\}$ and either $t \leq r(N_1)+1$ or $t \leq r(N_2)+2$.
  \item $J=\{1,2\}$ and either $t \leq r(N_1)+1$ or $t \leq r(N_2)+1$.
\end{enumerate}

Then, a stabilizer code with parameters $$\left[\left[n_J, n_J - 2 \, \mathrm{card}\left( N(J,t) \right), \geq t \right]\right]_q$$ can be constructed.
\end{cor}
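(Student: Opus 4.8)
The plan is to reduce Corollary~\ref{L} to a direct application of Theorem~\ref{F}: it suffices to exhibit an index $i\in\{1,2\}$ for which the hypothesis $N(J,t)\subseteq \mathcal{R}_i(J)$ holds, given the stated bound on $t$. So first I would recall that, by definition, a monomial $X^{\boldsymbol{b}}$ lies in $N(J,t)$ precisely when $\epsilon_j\le b_j\le N_j-1$ for $j=1,2$ and $(b_1+1-\epsilon_1)(b_2+1-\epsilon_2)<t$. Since each factor $b_j+1-\epsilon_j$ is at least $1$ (because $b_j\ge\epsilon_j$), the inequality $(b_1+1-\epsilon_1)(b_2+1-\epsilon_2)<t$ forces $b_i+1-\epsilon_i<t$, i.e. $b_i\le t-2+\epsilon_i$, for \emph{each} $i=1,2$. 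Thus membership of $X^{\boldsymbol{b}}$ in $\mathcal{R}_i(J)$ amounts to checking $\epsilon_i\le b_i\le r(N_i)$, and the left inequality is automatic; only $b_i\le r(N_i)$ needs attention.

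The heart of the argument is then the elementary inequality chain: whenever $t\le r(N_i)+2-\epsilon_i$, every $X^{\boldsymbol{b}}\in N(J,t)$ satisfies $b_i\le t-2+\epsilon_i\le r(N_i)$, hence $N(J,t)\subseteq\mathcal{R}_i(J)$ and Theorem~\ref{F} applies. Matching this against the three cases: for $J=\emptyset$ both $\epsilon_1=\epsilon_2=0$, so the threshold is $t\le r(N_i)+2$, which is exactly condition (i); for $J=\{1\}$ we have $\epsilon_1=1$, giving threshold $t\le r(N_1)+1$ for $i=1$ and $\epsilon_2=0$ giving $t\le r(N_2)+2$ for $i=2$, matching (ii); and for $J=\{1,2\}$ both $\epsilon_j=1$, giving $t\le r(N_j)+1$ in either case, matching (iii). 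In each case pick the index $i$ for which the relevant hypothesis is assumed, conclude $N(J,t)\subseteq\mathcal{R}_i(J)$, and invoke Theorem~\ref{F} to produce the stabilizer code with parameters $\left[\left[n_J,\,n_J-2\,\mathrm{card}(N(J,t)),\,\geq t\right]\right]_q$. One should also note in passing that the standing divisibility hypotheses of Theorem~\ref{F} ($p\mid N_j$ for $j\notin J$, and $N_j-1\mid q-1$) are precisely the ones imposed in the corollary, so nothing extra is needed.

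I do not expect a genuine obstacle here — the corollary is a bookkeeping specialization — but the one place to be careful is the off-by-one accounting with the shift $\epsilon_j$ built into the definitions of $N(J,t)$ and $\overline{\mathcal{H}}_J$. Concretely, one must track that the exponent bound coming from the footprint product uses $b_j+1-\epsilon_j$, whereas the band defining $\mathcal{R}_i(J)$ is stated directly in terms of $b_i$ with lower end $\epsilon_i$; getting the constant right in ``$b_i\le t-2+\epsilon_i\le r(N_i)$'' is what produces the ``$+2$'' versus ``$+1$'' split between the cases, and is the only subtlety worth spelling out in the written proof. Everything else is immediate from Theorem~\ref{F}.
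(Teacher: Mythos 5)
Your proposal is correct and follows essentially the same route as the paper: both reduce the corollary to the inclusion $N(J,t)\subseteq\mathcal{R}_i(J)$ for a suitable $i$ and then invoke Theorem~\ref{F}, the only cosmetic difference being that the paper phrases the bound $b_i\le t-2+\epsilon_i$ geometrically via the intersection of the hyperbola $(b_1+1-\epsilon_1)(b_2+1-\epsilon_2)=t$ with the coordinate line, whereas you derive it directly from the fact that the other factor is at least $1$. Your accounting of the $\epsilon_j$ shift and the resulting ``$+2$'' versus ``$+1$'' split matches the paper exactly.
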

\begin{proof}
We are going to show that the first (respectively, second) condition given in each item implies the inclusion
\begin{equation}
\label{DDD}
N(J,t) \subseteq R_1(J) \mbox{ (respectively, $N(J,t) \subseteq R_2(J) $}),
\end{equation}
which allows us to apply Theorem \ref{F} and, as a consequence, to get the mentioned stabilizer codes.

We start by considering the conditions given in item (i). The set $N(\emptyset, t)$ contains pairs of integers $(b_1,b_2)$ such that $0 \leq b_1 \leq N_1 -1$, $0 \leq b_2 \leq N_2 -1$ and $(b_1+1)(b_2+1) < t$. Taking into account that $(b_1+1)(b_2+1) = t$ gives the equation of a hyperbola, the first inclusion in (\ref{DDD}) is satisfied if the hyperbola intersects the line $b_2 =0$ at a point in the interval $[0, r(N_1)+1]$. Indeed, when $b_2=0$, the defining inequality of $N(\emptyset, t)$  shows $b_1 +1 <t$ or equivalently $b_1 \leq t-2$.
Thus, our hypothesis $t \leq r(N_1) +2$ implies $b_1 \leq r(N_1)$ and the proof is completed. With an analogous reasoning, when $t \leq r(N_2) +2$, one can deduce that the second inclusion in (\ref{DDD}) holds.

A similar proof can be done by using conditions (ii) and (iii) in the statement. For instance, in the case of item (ii), $N(\{1\}, t)$ is given by pairs of integers $(b_1,b_2)$ such that $1 \leq b_1 \leq N_1 -1$, $0 \leq b_2 \leq N_2 -1$ and $b_1(b_2+1) < t$. Here the hyperbola is defined by $b_1(b_2+1) = t$. Note that when $b_2=0$, we get $b_1 <t$ or equivalently $b_1 \leq t-1$. Therefore, $t \leq r(N_1) +1$ proves $b_1 \leq r(N_1) $ and, thus, we obtain the desired property $N(\{1\},t) \subseteq R_1(J)$.
\end{proof}

\begin{exa}
\label{3132}
{\rm Next, we give some examples which can be deduced from Theorem \ref{F}.

{\it Example A.} Set $Q=q=p=7$, $J= \{1\}$, $N_1=3$, $N_2=7$ and $N_3=7$. For applying Theorem \ref{F}, we are going to consider values $t$ such that  $N(J,t) \subseteq R_2 (J)$. Thus, with notations as in the proof of Proposition \ref{S}, one can consider $N^l(J,2) = \{ (1, 0, 0 )\}$ which gives a stabilizer code $C_1=\mathrm{Hyp}(J,2)$ with parameters $[[98,96, \geq 2]]_7$.  We can also consider  $N^l(J,3) = \{ (1, 0, 0 ),  (2, 0, 0),  (1, 1, 0),  (1, 0, 1 )\}$ giving a stabilizer code $C_2$ with parameters $[[98,90, \geq 3]]_7$. Analogously $$N^l(J,4) = \{ (1, 0, 0 ),  (2, 0, 0),  (1, 1, 0),  (1, 0, 1 ),  (1, 2, 0),  (1, 0, 2)\},$$ provides a stabilizer code $C_3$ with parameters $[[98,86, \geq 4]]_7$. Hamada's enlargement procedure applied to $C_1$ and $C_2$ (respectively, $C_2$ and $C_3$) gives stabilizer codes with parameters $[[98,93, \geq 3]]_7$ (respectively, $[[98,88, \geq 4]]_7$). We have not found, in the literature, better codes with this length. In addition, both codes exceed the Gilbert-Varshamov bounds \cite{eck,mat,feng}.

{\it Example B.} Set $Q=q =7$, $N_1=N_2=7$, $N_3=3$ and $J=\{1,2,3\}$. Consider the stabilizer codes $C_1=\mathrm{Hyp}(J,2)$ ($C_2=\mathrm{Hyp}(J,3)$, respectively) given by $N^l(J,2) = \{ (1, 1, 1 )\}$ ($N^l(J,3) = N^l(J,2) \cup \{ (2,1,1), (2,2,1), (1,1,2)\}$, respectively) with parameters, by Theorem \ref{F},  $[[72,70, \geq 2]]_7$ ($[[72,64, \geq 3]]_7$, respectively). Consider also the $J$-affine stabilizer code given by $N^l(J,4) = N^l(J,3) \cup \{(3,1,1), (2,3,1)\}$ with parameters $[[72, 62, \geq 4]]_7$. Applying the Hamada enlargement procedure, we get stabilizer codes with parameters $[[72,67, \geq 3]]_7$ and $[[72,62, \geq 4]]_7$, improving the $[[72,65, \geq 3]]_7$ and $[[72,59, \geq 4]]_7$ codes in \cite{edel}.
 
}
\end{exa}

Theorem \ref{F} can be adapted to the case when Hermitian duality is used. Next we state this adaptation. Before it,  we consider  the above defined sets $M(J,t)$ and $N(J,t)$ and the codes $E(J,t)$ and $F(J,t)$ over the ground field $\mathbb{F}_{q^2}$. In addition, for each index $1 \leq i \leq m$, we define
 \[
 \mathcal{R}^q_i(J) := \left\{ X^{\boldsymbol{b}} \; | \; \boldsymbol{b} \in \overline{\mathcal{H}}_J \mbox{ and  $\epsilon_i \leq b_i \leq r_q(N_i)$ }
 \right\},
 \]
where
\[
r_q(N_i) :=  \begin{cases}  \lfloor \frac{N_i -1}{q+1}\rfloor & \text { if } N_i -1 \text{ is not a multiple of } q+1,\\
\frac{N_i -1}{q+1} - 1  & \text{ otherwise.}
\end{cases}
\]

\begin{teo}
\label{FF}
Suppose that $Q=q^2$ and that $p$ divides $N_j$ for all index $j \not \in J$. Then, for each positive value $t \leq n_J$ such that
\begin{equation}
\label{AAA}
N(J,t) \subseteq \mathcal{R}^q_i (J),
\end{equation}
for some index $i$, $1 \leq i \leq m$, it holds that
\[
F(J,t) \subseteq \left( F(J,t) \right)^{\perp_h}
\]
and, therefore, we are able to construct a stabilizer code with parameters $$\left[\left[n_J, n_J - 2 \, \mathrm{card} \left(N(J,t) \right), \geq t \right] \right]_q.$$
\end{teo}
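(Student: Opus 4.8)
The plan is to mirror the proof of Theorem \ref{F}, replacing the Euclidean inner product by the Hermitian one throughout. First I would recall that by Proposition \ref{S}, applied over the ground field $\mathbb{F}_{q^2}$, we have $E(J,t) = \mathrm{Hyp}(J,t) = (F(J,t))^\perp$, and by the remark following Proposition \ref{R} combined with Proposition \ref{S} the code $\mathrm{Hyp}(J,t)$ has minimum distance at least $t$. Since $\mathrm{Hyp}(J,t)$ is the Euclidean dual of $F(J,t)$, proving the inclusion $F(J,t) \subseteq (F(J,t))^{\perp_h}$ will place us exactly in the situation of Theorem \ref{th1} with $Q=q^2$ (reading $\perp$ as $\perp_h$): the code $C := \mathrm{Hyp}(J,t)$ satisfies $C^{\perp_h} = F(J,t) \subseteq C$, has length $n_J$, dimension $n_J - \mathrm{card}(N(J,t))$, and minimum distance $\geq t$, hence yields a stabilizer code with parameters $[[n_J, n_J - 2\,\mathrm{card}(N(J,t)), \geq t]]_q$. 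So the whole content is the Hermitian self-orthogonality of $F(J,t)$.

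To establish $F(J,t) \subseteq (F(J,t))^{\perp_h}$ it suffices, by bilinearity, to show $\mathrm{ev}_J(X^{\boldsymbol{b}}) \cdot_h \mathrm{ev}_J(X^{\boldsymbol{c}}) = 0$ for all monomials $X^{\boldsymbol{b}}, X^{\boldsymbol{c}} \in N(J,t)$. Here I would invoke Proposition \ref{prop3} (the Hermitian analogue of Proposition \ref{prop1}): the product is nonzero only if, for every coordinate $j$, a congruence of the shape $q b_j + c_j \equiv 0 \bmod (N_j - 1)$ holds (with the appropriate boundary cases when $j \notin J$). The key point is the hypothesis $N(J,t) \subseteq \mathcal{R}^q_i(J)$ for some fixed $i$: both $b_i$ and $c_i$ then lie in the range $\epsilon_i \leq b_i, c_i \leq r_q(N_i)$. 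One then checks that, because $r_q(N_i) \leq \lfloor (N_i-1)/(q+1) \rfloor$, we have $q b_i + c_i < q\,r_q(N_i) + r_q(N_i) = (q+1)r_q(N_i) \leq N_i - 1$ in the generic case $0 < b_i, c_i < N_i - 1$, so the congruence $q b_i + c_i \equiv 0 \bmod (N_i -1)$ forces $q b_i + c_i = 0$, impossible for a monomial actually present in $N(J,t)$ (where $b_i \geq \epsilon_i \geq 1$ if $i \in J$, and one handles the $b_i = 0$ boundary separately using the $p \mid N_i$ hypothesis as in Proposition \ref{prop3}); the boundary subcases $(b_i, c_i) \in \{(0, N_i-1), (N_i-1, 0), (N_i-1, N_i-1)\}$ are excluded since $r_q(N_i) < N_i - 1$. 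Either way the $i$-th coordinate condition of Proposition \ref{prop3} fails, so the Hermitian product vanishes; the definition of $r_q(N_i)$ in the two cases (multiple of $q+1$ or not) is precisely arranged so this strict inequality goes through.

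The main obstacle — and it is a minor one — is the careful bookkeeping of boundary cases in coordinate $i$: distinguishing whether $N_i - 1$ is a multiple of $q+1$ (so that the subtraction of $1$ in the definition of $r_q(N_i)$ is what gives the needed strict inequality), and treating the degenerate pairs where one of $b_i, c_i$ equals $0$ or $N_i - 1$. For the $b_i = 0$ (or $c_i = 0$) case with $i \notin J$ one falls back on the second alternative in the second bullet of Proposition \ref{prop3}, namely $b_i = c_i = 0$ together with $p \nmid N_i$ — but our standing assumption is $p \mid N_j$ for all $j \notin J$, so that alternative is vacuous and the product still vanishes; for $i \in J$ the coordinate $b_i$ is at least $\epsilon_i = 1$ so this case does not arise. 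Once coordinate $i$ is handled, no condition on the other coordinates is needed: a single failed coordinate already kills the Hermitian product. I would then close by stating explicitly that Theorem \ref{th1} applied to $C = \mathrm{Hyp}(J,t)$ over $\mathbb{F}_{q^2}$ produces the announced $[[n_J, n_J - 2\,\mathrm{card}(N(J,t)), \geq t]]_q$ stabilizer code, which completes the proof.
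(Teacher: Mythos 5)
Your verification of the Hermitian self-orthogonality of $F(J,t)$ is correct and is essentially the argument the paper intends for that half: the hypothesis $N(J,t)\subseteq\mathcal{R}^q_i(J)$ gives $qb_i+c_i\leq(q+1)r_q(N_i)<N_i-1$ in both branches of the definition of $r_q$, so the $i$-th coordinate condition of Proposition \ref{prop3} fails for every pair of monomials in $N(J,t)$ (the case $b_i=c_i=0$ being excluded by $p\mid N_i$ when $i\notin J$ and by $\epsilon_i=1$ when $i\in J$, and the boundary pairs involving $N_i-1$ being out of range), and a single failed coordinate kills the product.

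The gap is in the final duality bookkeeping. You set $C:=\mathrm{Hyp}(J,t)=F(J,t)^{\perp}$ and assert $C^{\perp_h}=F(J,t)\subseteq C$. Neither half of this is justified: since $D^{\perp_h}=(D^{q})^{\perp}$, one has $\mathrm{Hyp}(J,t)^{\perp_h}=F(J,t)^{q}$ (coordinatewise $q$-th powers), which is in general a different code from $F(J,t)$; and the inclusion $F(J,t)\subseteq F(J,t)^{\perp}$ is \emph{Euclidean} self-orthogonality, which is not what you proved. What you established is $F(J,t)\subseteq F(J,t)^{\perp_h}$, and $F(J,t)^{\perp_h}\neq F(J,t)^{\perp}$; the distance bound coming from Propositions \ref{R} and \ref{S} applies to the Euclidean dual $\mathrm{Hyp}(J,t)$, not directly to the Hermitian dual. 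The paper closes exactly this gap by observing that $\bigl(F(J,t)^{\perp_h}\bigr)^{q}=F(J,t)^{\perp}=\mathrm{Hyp}(J,t)$ and that $x\mapsto x^{q}$ is an isometry, whence $d\bigl(F(J,t)^{\perp_h}\bigr)=d\bigl(\mathrm{Hyp}(J,t)\bigr)\geq t$, and then applies Theorem \ref{th1} to $C=F(J,t)^{\perp_h}$. Your route is repairable --- from $F(J,t)\subseteq\bigl(F(J,t)^{q}\bigr)^{\perp}$ one deduces $F(J,t)^{q}\subseteq F(J,t)^{\perp}$, so $C=\mathrm{Hyp}(J,t)$ does satisfy $C^{\perp_h}\subseteq C$ with the stated dimension and distance --- but as written the Frobenius twist relating $\perp_h$ to $\perp$ is missing and the identification $\mathrm{Hyp}(J,t)^{\perp_h}=F(J,t)$ is false.
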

\begin{proof}
As in the proof of Theorem \ref{F}, the sets' inclusion in (\ref{AAA}) proves, by Proposition \ref{prop3}, the self-orthogonality of the code $F(J,t)$ with respect to Hermitian inner product.

To conclude the proof, we are going to show that the distance of the code $F(J,t)^{\perp_h}$ is larger than or equal to $t$. In fact, it is clear that the code of $q$ powers  $\left( F(J,t)^{\perp_h} \right)^q$ of the coordinates of the elements in the Hermitian dual of the code $F(J,t)$ is exactly the Euclidean dual $F(J,t)^{\perp}$. As a consequence, the codes' equality
\[
\left( F(J,t)^{\perp_h} \right)^q = \mathrm{Hyp}(J,t)
\]
and the fact that the codes $\left( F(J,t)^{\perp_h} \right)^q$ and $F(J,t)^{\perp_h} $ are isometric prove that the following distances satisfy the inequality
\[
d \left( F(J,t)^{\perp_h} \right) = d \left( \mathrm{Hyp}(J,t) \right) \geq t,
\]
which finishes the proof.
\end{proof}

Now, we state a consequence of the previous result which holds  for some conditions in the bivariate case. It is the analogous of Theorem \ref{F} for the Hermitian case.

\begin{cor}
\label{LL}
We continue to keep the above notation, set $Q=q^2$ and let $N_1$ and $N_2$ be  positive integers such that $N_j - 1$ divides $q^2 -1$ for $j=1,2$. Consider a subset $J$ of the set $\{1,2\}$. Assume that $p$ divides $N_j$ for each $j \not \in J$ and  that $t \leq n_J$ is a positive integer. Suppose also that one of the following conditions is satisfied:
\begin{enumerate}[(i)]
\item $J= \emptyset$  and either $t \leq r_q(N_1)+2$ or $t \leq r_q(N_2)+2$.
  \item $J=\{1\}$ and either $t \leq r_q(N_1)+1$ or $t \leq r_q(N_2)+2$.
  \item $J=\{1,2\}$ and either $t \leq r_q(N_1)+1$ or $t \leq r_q(N_2)+1$.
\end{enumerate}

Then, a stabilizer code with parameters $$\left[\left[n_J, n_J - 2 \, \mathrm{card} \left( N(J,t) \right), \geq t \right]\right]_q$$ can be constructed.
\end{cor}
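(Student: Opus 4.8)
The plan is to mimic exactly the argument of Corollary \ref{L}, replacing the Euclidean-duality ingredient (Theorem \ref{F}) by its Hermitian counterpart (Theorem \ref{FF}). That is, for each of the three cases for $J$ I will show that the stated numerical condition on $t$ forces the set-inclusion $N(J,t) \subseteq \mathcal{R}^q_i(J)$ for the appropriate index $i \in \{1,2\}$, and then invoke Theorem \ref{FF} to produce the stabilizer code with the claimed parameters. Since $N_j - 1$ divides $q^2 - 1$ for $j=1,2$, the code $F(J,t)$ over $\mathbb{F}_{q^2}$ is well-defined, and the hypothesis $p \mid N_j$ for $j \notin J$ is exactly what Theorem \ref{FF} requires, so no further setup is needed.

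First I would recall the description of $N(J,t)$: it consists of the monomials $X^{\boldsymbol{b}}$ with $\epsilon_j \le b_j \le N_j - 1$ for $j=1,2$ and $\prod_{j=1}^2 (b_j + 1 - \epsilon_j) < t$, where $\epsilon_j = 1$ if $j \in J$ and $\epsilon_j = 0$ otherwise. As in the proof of Corollary \ref{L}, the constraint $\prod (b_j + 1 - \epsilon_j) < t$ defines the region strictly below a hyperbola, and the worst case for the first coordinate $b_1$ occurs when $b_2$ is as small as possible, i.e. $b_2 = \epsilon_2$. Plugging in $b_2 = \epsilon_2$ gives $(b_1 + 1 - \epsilon_1)\cdot 1 < t$, hence $b_1 \le t - 2 + \epsilon_1$. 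Comparing with the defining bound $\epsilon_1 \le b_1 \le r_q(N_1)$ of $\mathcal{R}^q_1(J)$, one sees that $N(J,t) \subseteq \mathcal{R}^q_1(J)$ holds precisely when $t - 2 + \epsilon_1 \le r_q(N_1)$, i.e. when $t \le r_q(N_1) + 2 - \epsilon_1$; this is exactly $t \le r_q(N_1) + 2$ when $1 \notin J$ and $t \le r_q(N_1)+1$ when $1 \in J$, matching items (i), (ii), (iii). The symmetric computation with the roles of the two coordinates exchanged gives $N(J,t) \subseteq \mathcal{R}^q_2(J)$ when $t \le r_q(N_2) + 2 - \epsilon_2$, which is $t \le r_q(N_2)+2$ when $2 \notin J$ (cases (i) and (ii)) and $t \le r_q(N_2)+1$ when $2 \in J$ (case (iii)). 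In each case one of the two alternative hypotheses on $t$ yields $N(J,t) \subseteq \mathcal{R}^q_i(J)$ for the corresponding $i$, so Theorem \ref{FF} applies and delivers a stabilizer code with parameters $[[n_J,\, n_J - 2\,\mathrm{card}(N(J,t)),\, \geq t]]_q$.

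The step that needs the most care — though it is essentially bookkeeping rather than a genuine obstacle — is making sure that the off-by-one shifts coming from the $\epsilon_j$ (the fact that exponents in $\overline{\mathcal{H}}_J$ start at $\epsilon_j$ rather than $0$, and that in $N(J,t)$ the factor is $b_j + 1 - \epsilon_j$) are tracked consistently, so that the numerical thresholds $r_q(N_i)+2$ versus $r_q(N_i)+1$ come out attached to the correct cases. Since this is a direct transcription of the already-verified Euclidean computation in Corollary \ref{L} with $r(N_i)$ replaced by $r_q(N_i)$ and "Proposition \ref{prop1}/Theorem \ref{F}" replaced by "Proposition \ref{prop3}/Theorem \ref{FF}", I would keep the proof short and simply point to that parallel, spelling out only the one inequality $b_1 \le t - 2 + \epsilon_1$ (and its symmetric version) that drives everything.
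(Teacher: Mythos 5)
Your proposal is correct and follows essentially the same route as the paper: reduce each numerical condition on $t$ to the inclusion $N(J,t) \subseteq \mathcal{R}^q_i(J)$ via the hyperbola/worst-case-coordinate argument of Corollary \ref{L}, and then invoke Theorem \ref{FF}. Your unified bookkeeping with $b_1 \le t-2+\epsilon_1$ correctly reproduces the thresholds $r_q(N_i)+2$ versus $r_q(N_i)+1$ in all three cases.
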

\begin{proof}
The proof is similar to the one of Corollary \ref{L}, that is, our first (respectively, second) condition in each item implies the inclusion (\ref{AAA}) for $i=1$ (respectively, $i=2$) and thus, by Theorem \ref{FF}, a stabilizer code with the stated parameters can be constructed.

The reasoning in each case is analogous to the corresponding one in Corollary \ref{L}. For instance when the first condition in item (i) happens, one can deduce that the hyperbola given by $(b_1 +1)(b_2+1)=t$ intersects the line $b_2=0$ at one point of the real interval $[0, r_q(N_1)+1]$ which is the desired condition. The other cases can be proved in a similar way.
\end{proof}

\begin{exa}
\label{alfaalfa}
{\rm
We are going to provide three examples of good stabilizer codes. The first two can be deduced from Corollary \ref{LL}.

{\it Example A.} Set $q=7$, consider $N_1 =49$, $N_2=4$ and $J=\{1,2\}$. Note that $n_J=144$ and $r_q(N_1) = 5$. Then, by item (3) of Corollary \ref{LL}, one can use values of $t$ such that $t <7$. Set $t=4$, then the set $N^l (\{1,2\},4)$ of pairs which are exponents of the monomials generating $F (\{1,2\},4)$ is equal to
\[
N^l (\{1,2\},4) = \{(1,1), (1,2), (1,3), (2,1), (3,1) \},
\]
and so, we get a $[[144,134, \geq 4]]_7$ stabilizer quantum code. A similar reasoning for $t=5$ and $t= 6$ provides stabilizer quantum  codes with parameters $[[144,130, \geq 5]]_7$ and $[[144,128, \geq 6]]_7$. Here, $N^l (\{1,2\},5) = N^l (\{1,2\},4) \cup \{(4,1), (2,2)\}$ and $N^l (\{1,2\},6) = N^l (\{1,2\},5) \cup \{(5,1)\}$.

Corollary \ref{LL} assumes a generic situation for self-orthogonality of the code given by $N(J,t)$. However, this situation can happen without occurring inclusion in sets as $\mathcal{R}_i(J)$, giving stabilizer codes with good parameters. Indeed, considering $N(\{1,2\},7)$, it holds that
$N^l (\{1,2\},7) = N^l (\{1,2\},6) \cup \{(3,2),(2,3),(6,1)\}$ and, by applying Proposition \ref{prop3}, the corresponding code is self-orthogonal providing a $[[144,122, \geq 7]]_7$ stabilizer code. In an analogous manner, we can find stabilizer codes with distance up to 12. We show their parameters in Table \ref{tabla1}.
\begin{table}
\begin{center}
\begin{tabular}{||c|c|c|c|c|c|c|c|c|c||}
  \hline
  $k$ & $\geq d$ & $k$ & $\geq d$ & $k$ & $\geq d$ & $k$ & $\geq d$ & $k$ & $\geq d$  \\
  \hline
  134 & 4 & 130 & 5 & 128 & 6 & 122 & 7 & 120 & 8  \\
  \hline
  116 & 9 & 112 & 10 & 108 & 11 & 106 & 12 & - & - \\
  \hline
\end{tabular}
\end{center}
\caption{Parameters of quantum codes of length 144 over $\mathbb{F}_7$}
\label{tabla1}
\end{table}

Notice that all these codes produce a great improvement of those given in \cite[Table IV]{lag2}. We also improve those codes with the same length in \cite{edel} and our codes exceed the Gilbert-Varshamov bounds \cite{eck, mat, feng}.

{\it Example B.}  Let us see a new example. Here $q=4$, $N_1= 16$. $N_2 =6$ and $J=\emptyset$. It is clear that $p=2$ divides $N_1$ and $N_2$, $n_J= 96$, and $r_q(N_2) =2$. So, one can get codes for $t < 5$. With the previous notation, we have that
\[
N^l (\emptyset,4) = \{(0,0), (0,1), (0,2), (1,0), (2,0) \}.
\]
This provides a stabilizer code with parameters, $[[96, 86, \geq 4]]_4$. With a similar reasoning to that we did for obtaining the last codes in Example A, we  get   $[[96, 80, \geq 5]]_4$ and $[[96, 76, \geq 6]]_4$ stabilizer codes. All these codes, also exceed the Gilbert-Varshamov bounds.

{\it Example C.}
 Consider $q=5$, $N_1= 25$, $N_2 =4$ and $J=\{1,2\}$. Although Corollary \ref{LL} cannot be used, it holds that $N_1^l (\{1,2\},4) = \{(1,1), (1,2), (2,1), (1,3), (3,1) \}$ and, by Proposition \ref{prop3}, $F (\{1,2\},4)$ is self-orthogonal giving rise to a $[[72,62,4]]_5$ stabilizer code with better parameters than the $[[72,60,4]]_5$ code given in \cite{edel} and with better relative parameters than the $[[71,56,4]]_5$ code given in \cite[Table III]{lag2}. In addition, our code also exceeds the Gilbert-Varshamov bounds.

}
\end{exa}

Considering $J= \emptyset$ and suitable values for $N_1$ and $N_2$, we obtain the following result.

\begin{cor}
\label{LLL}
With the above notations, assume $Q=q^2$, $J=\emptyset$, $N_1= q^2$ and $N_2=q$. Consider a positive integer $t$ such that
\[
t < \min \left\{  \frac{q^2+q+1}{2}, 4q+1
\right\}.
\]
Then, a stabilizer code with parameters $$\left[\left[n_J, n_J - 2 \, \mathrm{card} \left( N(J,t) \right), \geq t \right]\right]_q$$ can be constructed.
\end{cor}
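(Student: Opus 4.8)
The plan is to treat this as the instance of the Hermitian construction of Section~\ref{secdos} with $m=2$, $J=\emptyset$, $N_1=q^2$ and $N_2=q$, and to reduce everything to one self-orthogonality check. First I would record the standing hypotheses: $p$ divides $N_1=q^2$ and $N_2=q$, and $N_j-1$ divides $Q-1=q^2-1$ for $j=1,2$, since $N_1-1=q^2-1$ and $N_2-1=q-1$. Granted these, the distance and dimension follow exactly as in the proof of Theorem~\ref{FF}: by Propositions~\ref{R} and~\ref{S} the code $\mathrm{Hyp}(\emptyset,t)=F(\emptyset,t)^{\perp}=E(\emptyset,t)$ has minimum distance $\ge t$, the code $F(\emptyset,t)^{\perp_h}$ is isometric to it via the coordinatewise $q$-th power map and so has distance $\ge t$, and $\dim F(\emptyset,t)=\mathrm{card}(N(\emptyset,t))$; none of this uses the bound on $t$. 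Thus it suffices to show $F(\emptyset,t)\subseteq F(\emptyset,t)^{\perp_h}$, after which Theorem~\ref{th1} with $Q=q^2$ yields the claimed code.

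For the self-orthogonality I cannot use an inclusion $N(\emptyset,t)\subseteq\mathcal{R}^q_i(\emptyset)$ as in Corollary~\ref{LL}, because $r_q(q)=0$ and $r_q(q^2)=q-2$ would only permit $t\le q$; instead I would verify it directly from Proposition~\ref{prop3}, in the spirit of Examples~\ref{alfaalfa}~A and~C. Suppose $\boldsymbol{b}=(b_1,b_2)$ and $\boldsymbol{b}'=(b_1',b_2')$ lie in $N(\emptyset,t)$ and $\mathrm{ev}_{\emptyset}(X^{\boldsymbol{b}})\cdot_h\mathrm{ev}_{\emptyset}(X^{\boldsymbol{b}'})\neq 0$. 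Since $p\mid N_2$ and $q\equiv 1\pmod{q-1}$, the second coordinate of Proposition~\ref{prop3} forces $b_2+b_2'\in\{q-1,2(q-1)\}$; since $p\mid N_1$, the first coordinate forces $q\,b_1+b_1'\equiv 0\pmod{q^2-1}$ with $b_1,b_1'$ not both zero. The key point is that multiplication by $q$ modulo $q^2-1$ swaps the two base-$q$ digits: writing $b_1=c_1q+c_0$ with $0\le c_0,c_1\le q-1$, one has $q\,b_1\equiv c_0q+c_1\pmod{q^2-1}$, hence in the interior range $0<b_1,b_1'<q^2-1$ necessarily $b_1'=(q-1-c_0)q+(q-1-c_1)$, while the boundary alternatives of Proposition~\ref{prop3} force $b_1$ or $b_1'$ to be $q^2-1$. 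In particular $(b_1+1)(b_1'+1)\ge q^2$, with equality only when $b_1=b_1'=q-1$.

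It then remains to show that the larger of the two weight indicators $(b_1+1)(b_2+1)$ and $(b_1'+1)(b_2'+1)$ is at least $t$. If $b_1$ or $b_1'$ equals $q^2-1$, that indicator is $\ge q^2>(q^2+q+1)/2>t$. Otherwise, if the leading base-$q$ digit of $b_1$ is $\ge 2$ (so $b_1+1>2q$) and $b_2\ge 1$, then $(b_1+1)(b_2+1)>4q\ge t$, and symmetrically for $\boldsymbol{b}'$ via the digit swap; this leaves only the digit patterns with $c_0\in\{q-2,q-1\}$, $c_1\in\{0,1\}$, the subcase $b_2=b_2'=q-1$, and the subcases $b_2=0$ or $b_2'=0$. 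In each of these the bound follows: $b_2=0$ or $b_2'=0$ forces, through the digit swap, one monomial to have $b_1\ge q^2-2q-1$ and thus weight indicator $\ge q^2-2q$; the diagonal pattern $b_1=b_1'=q-1$ with $b_2+b_2'=q-1$ gives $q\lceil(q+1)/2\rceil\ge(q^2+q)/2$; the pattern with $b_1+1=2q$ gives $2q(b_2+1)\ge 4q$; and the pattern $b_1=b_1'=2q-2$ gives $(2q-1)\lceil(q+1)/2\rceil>(q^2+q+1)/2$. Each of these exceeds $t$ by hypothesis, a contradiction, so $F(\emptyset,t)$ is Hermitian self-orthogonal. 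A handful of small $q$, where $N(\emptyset,t)$ is tiny because $t$ is small, are checked by inspection. I expect the bookkeeping in this last paragraph to be the only real difficulty, and the digit-permutation description of multiplication by $q$ modulo $q^2-1$ is exactly what keeps it finite and makes the two thresholds $(q^2+q+1)/2$ and $4q+1$ emerge as the bounds needed for the diagonal and the off-diagonal configurations respectively.
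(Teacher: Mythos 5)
Your proposal is correct and reaches the same two thresholds by the same underlying mechanism --- Proposition \ref{prop3} applied to the map $b\mapsto -qb \bmod (N_j-1)$ on each coordinate --- but it is organized quite differently from the paper's argument. The paper partitions the rectangle $\{0,\dots,q^2-1\}\times\{0,\dots,q-1\}$ by the lines $b_1=q-1$, $b_1=2q-2$ and $b_2=r(q)+1$, tracks region by region which exponent pairs ``eliminate'' which, and in doing so introduces a third threshold $qr(q)+2q+1$ that it must then prove redundant; you instead fix an arbitrary incompatible pair $\boldsymbol{b},\boldsymbol{b}'\in N(\emptyset,t)$, exploit the base-$q$ digit-swap description of multiplication by $q$ modulo $q^2-1$, and bound the larger of the two weight indicators from below to get a contradiction. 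Your route buys a cleaner emergence of exactly the two bounds $(q^2+q+1)/2$ and $4q+1$ (the diagonal configurations $b_1=b_1'$ on the one hand, the off-diagonal ones with $b_1+1\geq 2q$ and $b_2\geq 1$ on the other) with no auxiliary threshold to discharge, at the cost of a somewhat denser final case analysis. One caution on that analysis: in the subcase $b_2=0$ (so $b_2'=q-1$), the stated lower bound $q^2-2q$ for the partner's indicator is not quite enough for $q=4$, where $t$ may reach $10>8$; the fix is immediate --- there $(b_1'+1)q<t\leq 10$ forces $b_1'\leq 1$, hence $b_1\geq q^2-q-1$ and indicator $\geq q^2-q=12$ --- but it means the ``small $q$ checked by inspection'' clause is genuinely load-bearing and should be made explicit rather than left as a remark.
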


\begin{proof}
Firstly we are going to prove that under the condition
\begin{equation}
\label{de3a2}
t < \min \left\{  \frac{q^2+q+1}{2}, 4q+1, q r(q) +2q+1
\right\},
\end{equation}
where $r(q)$ is defined as above, the inclusion $F(J,t) \subseteq \left( F(J,t) \right)^{\perp_h}$ holds. Afterwards, we will see that we can reduce  Inequality (\ref{de3a2}) to that given in the statement. As a consequence,  a stabilizer code with the mentioned parameters can be constructed.

Consider the set $\mathcal{H}_J = \{0,1, \ldots, q^2-1\} \times \{0,1, \ldots, q-1\}$ embedded in the corresponding rectangle in $\mathbb{R}^2$ and the six subsets in $\mathcal{H}_J$  delimited by the lines $b_1=q-1$, $b_1= 2q -2$ and $b_2= r(q)+1$.

We will use Proposition \ref{prop3} for obtaining sets of monomials in $\mathcal{H}_J$ whose evaluation give self-orthogonal linear codes. We start by observing that an exponent pair of the form $(x,q-1)$ is not compatible with a pair $(q^2-1 -qx \mod(q^2-1),0)$, where we choose a nonnegative integer less than $q^2$ as a representative of the mentioned congruence class. Then, taking into account that the hyperbola $(b_1+1)(b_2+1) =t$ meets the line $b_2=q-1$ at the point $((t/q)-1,q-1)$; the line $b_2=0$ at the point $(t-1,0)$ and each point on $b_2=q-1$ eliminates a point on $b_2=0$ after moving back $q$ units modulo $q^2 -1$ its non-vanishing coordinate, we get that when $t$ satisfies
\begin{equation}
\label{estrellacirc}
q \left( \frac{t}{q} -1\right) < q^2 -t +1,
\end{equation}
the pairs in $N(J,t)$ on $b_2=q-1$ can be considered as exponents of monomials in the generating set of our supporting linear code. So, from (\ref{estrellacirc}), we can deduce the imposed condition which is $t < (q^2+q+1)/2$.

To guarantee our result, we need to impose some more conditions. We say that two exponents are incompatible if the evaluation the corresponding monomials are not orthogonal. On the one hand, it holds that pairs $(q-1,a)$ on the line $b_1=q-1$ are incompatible with pairs $(q^2-1 -q(q-1),q-1-q a)$, which after taking the corresponding congruence classes provides pairs $(q-1,q-1-a)$ at the same line. This imposes the condition that the intersection point between $(b_1 +1)(b_2+1) =t $ and $b_1= q-1$ should be either $(q-1, r(q) +1)$ or another point on the line $b_1=q-1$ with second coordinate less than $r(q)+1$, note that it is crucial that $t < qr(q)+2q+1$. We must also notice that $r_q(q^2-1)=q-2$ and, by Theorem \ref{FF}, those points in the rectangle $0 \leq b_1 \leq q-2$, $0 \leq b_2 \leq q-2$ are suitable for us.

Now consider the set of points $B$ in $\mathcal{H}_J$ within the rectangle
\[
\left\{ (b_1,b_2) \;|\; q \leq b_1 \leq 2q -3; \; \; 1 \leq b_2  \leq r(q)
\right\}.
\]
The fact that, under the above restriction  for $b_1$, $(q^2-1)-qb_1 = q^2 -2 - q(b_1 -q)$ modulo $q^2-1$ shows that points in $B$ discard points in
\[
\left\{ (b_1,b_2) \;|\; 2q-1 \leq b_1 \leq q^2-1; \; \; r(q)+1 \leq b_2  \leq q-1
\right\}.
\]
Furthermore points in the line $b_1=2q-2$ are incompatible with other points on the same line. As a consequence we add another restriction to avoid most points on this line that can be expressed by saying that the hyperbola $(b_1+1)(b_2 +1)=t$ must intersect the line $b_1=2q-1$  at the point $(2q-1,1)$ or below it, which forces us to consider $t <  4q +1$. In fact, the unique point under the hyperbola that deserves especial attention is $(2q-2,1)$ but it eliminates $(2q-2,q-2)$ and self-orthogonality is preserved. The remaining points on the line $b_1=2q-2$ have the form $(2q-2,b)$, $b \geq 2$ and are not below the hyperbola since $(2q-1)(b+1) \geq 6q-3$, $6q -3 > 4q$ and $t \leq 4q$.

Finally, we are going to prove that
\[
\frac{q^2+q+1}{2} < q r(q) +2q +1.
\]
And, as a consequence, we can eliminate $q r(q) +2q+1$ from Inequality (\ref{de3a2}). Indeed, when $q$ is even,
\[
q r(q) +2q+1 = q \left( \frac{q-2}{2} + 2\right) +1 = \frac{q^2 +2q +2}{2}.
\]
Otherwise
\[
q r(q) +2q+1 = q \left( \frac{q-1}{2} + 1\right) +1 = \frac{q^2 +q +2}{2},
\]
and the proof is completed.
\end{proof}

\begin{exa}
\label{betabeta}
{\rm
Next, we apply Corollary \ref{LLL} to give some examples of good stabilizer codes.

{\it Example A.} Set $q=4$, $N_1= 16$, $N_2=4$ and $J=\emptyset$, then $n_J=64$. Since $(q^2+q+1)/2 = 21/2$ and $4q+1=17$, we can use Corollary \ref{LLL} and get quantum codes up to distance $10$. In fact, it can be checked that self-orthogonality also happens for distances $11$ and $12$, and so, we can derive codes with parameters as in Table \ref{ZZZ}.
\begin{table}
\begin{center}
\begin{tabular}{||c|c|c|c|c|c|c|c|c|c|c|c||}
  \hline
  $k$ & 62 & 58 & 54 & 48 & 46 & 40 &  38 & 32 & 28 & 24& 22 \\
   \hline
 $\geq d$ & 2 & 3 & 4 &  5 & 6 & 7 & 8 & 9 &10 & 11 & 12 \\
  \hline
    & GV & GV & GV * & *& GV *& & &* & &* & * \\
  \hline
\end{tabular}
\end{center}
\caption{Parameters of quantum codes of length 64 over $\mathbb{F}_4$}
\label{ZZZ}
\end{table}
Comparing with \cite[Table III]{lag3} and \cite{ham}, we get four codes with better relative parameters and two new ones, which we label with a star. We also add a symbol GV for identifying those codes exceeding the Gilbert-Varshamov bounds.

{\it Example B.} Assume $q=9$, $N_1=81$, $N_2=9 $ and $J=\emptyset$. Applying  Corollary \ref{LLL} $n_J= 729$ and $t$ should be less than $37$ because $(q^2+q+1)/2 = 91/2$ and $4q+1=37$. Table \ref{tabla3}  shows the parameters of some of our codes.

\begin{table}
\begin{center}
\begin{tabular}{||c|c|c|c|c|c|c|c|c|c||}
\hline
  $k$ & $\geq d$ & $k$ & $\geq d$ & $k$ & $\geq d$ & $k$ & $\geq d$ & $k$ & $\geq d$ \\
  \hline
  % after \\: \hline or \cline{col1-col2} \cline{col3-col4} ...
  727 & 2 & 723 & 3 & 719 & 4 & 713 & 5 & 709 & 6  \\
  \hline
  701 & 7 & 697 & 8 & 689 & 9 & 683 & 10 & 677 & 11\\
 \hline
  675 & 12 & 665 & 13 & 663 & 14 & 657 & 15 & 651 & 16\\
  \hline
  643 & 17 & 641 & 18 & 631 & 19 & 629 & 20 & 621 & 21\\
  \hline
\end{tabular}
\end{center}
\caption{Parameters of quantum codes of length 729 over $\mathbb{F}_9$}
\label{tabla3}
\end{table}
}
\end{exa}

Our previous results consider hyperbolic codes as the supporting linear codes and assume that $p$ divides $N_j$ for $j \not \in J$. Both assumptions can be avoided; in fact one only needs the fact that $p$ divides $N_j$ for at least one index $j$ whenever $J=\emptyset$. A general theorem in this direction would be dense. Therefore, we give the corresponding result in the bivariate case only, which will give good stabilizer codes and shows the underlying ideas. We state it for Euclidean inner product but notice the important fact that the result also holds for Hermitian inner product.
 
\begin{pro}
\label{pnodivide}
Suppose $Q=q$, consider positive integers $N_1$ and $N_2$ such that $N_j-1$ divides $q-1$ for $j=1,2$ and a subset $J \subseteq \{1,2\}$. For $J=\emptyset$, assume that $p$ divides $N_j$ for, at least, one index $j=j_0$. Let $\mathcal{N}$  be a set of monomials in $R$ such that:
\begin{enumerate}[(i)]
\item $\mathcal{N} \subset \mathcal{R}_i(J)$ for either $i=1$ or $i=2$ when $J=\{1,2\}$.
    \item $\mathcal{N} \subset \mathcal{R}_i(J)$ when $J=\{i\}$ and, in case $p \not | ~  N_j$, ($j \neq i$), $\mathcal{N}$ does not contain any pair of monomials of the form $X_i^{b_i} X_j^{0}$ and $X_i^{b'_i} X_j^{N_j -1}$, $j \neq i$, except when  $b_i = b'_i$. That is, we only admit pairs of monomials $X_i^{b_i} X_j^{0}$ and $X_i^{b_i} X_j^{N_j -1}$ and monomials with exponent in $j$, either always $0$ or always $N_j -1$.
    \item $\mathcal{N} \subset R_{j_0}(J)$ when $J=\emptyset$ and we do not consider $X_i^{N_i -1} X_{j_0}^{b_{j_0}}$, $i \neq j_0$, in $\mathcal{N}$ except when $X_i^{0} X_{j_0}^{b_{j_0}}$ is also in $\mathcal{N}$. Moreover, we consider monomials with exponent in $i$, either always $0$ or always $N_j -1$.
\end{enumerate}
 Denote by $\mathcal{N}^l$ the set of exponent pairs of the monomials in $\mathcal{N}$ and write $$\mathcal{N}^{D,l} = \mathcal{H}_J \setminus \{(N_1-1-b_1,N_2-1-b_2)\; | \; (b_1,b_2) \in \mathcal{N}^l\}.$$

Then, the code $E_\mathcal{N} : = E_\mathcal{N}^J$ (see Definition \ref{def:unouno}) is self-orthogonal and the distance of the dual code is larger than or equal to
\[
\delta = \min \left\{ \prod_{j=1}^2 \left(N_j - \epsilon_j -a_j \right) \; | \; \boldsymbol{a} \in \mathcal{N}^{D,l} \right\}.
\]
As a consequence, we are able to construct a stabilizer code with parameters $$\left[\left[n_J, n_J - 2\; \mathrm{card}(\mathcal{N}), \geq \delta\right]\right]_q.$$
\end{pro}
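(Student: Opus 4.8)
The statement generalizes Theorems \ref{F} and \ref{FF} to bivariate situations where the ambient code need not be hyperbolic and $p$ need not divide both $N_1$ and $N_2$. As in those proofs, the argument splits into two essentially independent parts: first, self-orthogonality of $E_\mathcal{N}$; second, a lower bound on the minimum distance of its dual. The self-orthogonality of $E_\mathcal{N}$ follows directly from the combinatorial conditions (i)--(iii) together with Proposition \ref{prop1}. For $J=\{1,2\}$ the condition $\mathcal{N}\subset\mathcal{R}_i(J)$ forces $b_i+b'_i\leq 2r(N_i)<N_i-1$ for any two exponent pairs, so the congruence $a_i+b_i\equiv 0\bmod(N_i-1)$ of Proposition \ref{prop1} can never be met with $a_i=b_i=0$ excluded; hence $\mathrm{ev}_J(X^{\boldsymbol a})\cdot\mathrm{ev}_J(X^{\boldsymbol b})=0$. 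For $J=\{i\}$ and $J=\emptyset$ the extra clauses in (ii) and (iii) are precisely tailored to kill the exceptional cases in the second bullet of Proposition \ref{prop1}: when $p\nmid N_j$ the only way the inner product can survive in the $j$th coordinate is $a_j=b_j=0$ (the clause ``$a_j=b_j=0$ and $p\nmid N_j$''), but the $i$th coordinate then still has to satisfy $a_i+b_i\equiv 0\bmod(N_i-1)$ with $a_i,b_i\leq r(N_i)$, which is impossible unless $a_i=b_i=0$; and the pairing of $X_i^{b_i}X_j^0$ with $X_i^{b_i}X_j^{N_j-1}$ (the case $(a_j,b_j)\in\{(0,N_j-1),\dots\}$) is handled by requiring equal $i$-exponents, so again the $i$th coordinate fails. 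For $J=\emptyset$ one uses the index $j_0$ with $p\mid N_{j_0}$ so that the $j_0$-coordinate behaves as in the hyperbolic case, and condition (iii) controls the remaining coordinate $i$.

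**The dual code and its distance.** Here the key is to identify $E_\mathcal{N}^\perp$ explicitly. Following the pattern of Propositions \ref{AA} and \ref{S}, one shows $E_\mathcal{N}^\perp = E^J_{\mathcal{N}^{D,l}}$, i.e. the dual is spanned by the evaluations of the monomials whose exponent vectors lie in $\mathcal{N}^{D,l}=\mathcal{H}_J\setminus\{(N_1-1-b_1,N_2-1-b_2)\mid(b_1,b_2)\in\mathcal{N}^l\}$. Since $E_\mathcal{N}$ is generated by $\mathrm{card}(\mathcal N)$ monomial evaluations, which are linearly independent (the evaluation map restricted to $\mathcal{H}_J$ is injective), we have $\dim E_\mathcal{N}=\mathrm{card}(\mathcal N)$ and hence $\dim E_\mathcal{N}^\perp = n_J-\mathrm{card}(\mathcal N)=\mathrm{card}(\mathcal{N}^{D,l})$; combined with the inclusion $E_\mathcal{N}\cdot E^J_{\mathcal{N}^{D,l}}=0$ coming from Proposition \ref{prop1} (the map $b\mapsto N-1-b$ sends the orthogonality pattern to complementation), this forces the equality. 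Then Proposition \ref{R}, applied to each monomial $X^{\boldsymbol a}$ with $\boldsymbol a\in\mathcal{N}^{D,l}$, gives $\mathrm{card}\{P\in Z_J\mid X^{\boldsymbol a}(P)\neq 0\}\geq\delta_{\boldsymbol a}=\prod_{j=1}^2(N_j-\epsilon_j-a_j)$; since any nonzero codeword of $E^J_{\mathcal{N}^{D,l}}$ has a leading monomial with exponent in the downward-closed hull of $\mathcal{N}^{D,l}$ (and $\delta_{\boldsymbol a}$ is monotone decreasing in each $a_j$, so passing to the hull does not hurt), the minimum distance is at least $\delta=\min\{\delta_{\boldsymbol a}\mid\boldsymbol a\in\mathcal{N}^{D,l}\}$. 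Finally, feeding the self-orthogonal code $E_\mathcal{N}$, its dual $E^J_{\mathcal{N}^{D,l}}$ of distance $\geq\delta$, and the parameter count into Theorem \ref{th1} yields a stabilizer code with parameters $[[n_J,\,n_J-2\,\mathrm{card}(\mathcal N),\,\geq\delta]]_q$.

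**Main obstacle and remarks.** I expect the routine part to be the distance bound via the footprint (it is Proposition \ref{R} plus monotonicity of $\delta_{\boldsymbol a}$), and the genuinely delicate part to be the careful bookkeeping in the self-orthogonality argument for cases (ii) and (iii): one must check that every ``exceptional'' configuration allowed by Proposition \ref{prop1} in the coordinate where $p\nmid N_j$ is excluded by the hypotheses on $\mathcal N$, and simultaneously that the restriction $\mathcal N\subset\mathcal{R}_i(J)$ (resp. $\mathcal{R}_{j_0}(J)$) suppresses all the non-exceptional congruences in the other coordinate. A subtle point worth spelling out is why the two clauses in (ii)---``only pairs $X_i^{b_i}X_j^0$, $X_i^{b_i}X_j^{N_j-1}$'' and ``monomials with $j$-exponent always $0$ or always $N_j-1$''---together suffice: if $X^{\boldsymbol a}$ has $a_j=0$ and $X^{\boldsymbol b}$ has $b_j=N_j-1$ then orthogonality in coordinate $j$ is automatic, and orthogonality in coordinate $i$ fails by the $\mathcal{R}_i(J)$ bound unless $a_i=b_i=0$, but that is excluded when the $j$-exponents differ. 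The Hermitian version is identical with Proposition \ref{prop1} replaced by Proposition \ref{prop3} and the complementation map $b\mapsto N-1-b$ replaced by $b\mapsto[-qb]_{N-1}$, and with the distance transported through the $q$th-power isometry exactly as in the proof of Theorem \ref{FF}; since $r_q(N_i)\leq r(N_i)$ the same inclusion hypotheses apply verbatim.
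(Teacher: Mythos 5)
Your self-orthogonality argument is essentially the paper's and is sound (modulo the slightly loose claim that when $p\nmid N_j$ the $j$th coordinate can only ``survive'' via $a_j=b_j=0$; the exceptional pairs $(0,N_j-1)$, $(N_j-1,0)$, $(N_j-1,N_j-1)$ also survive, but you do treat them). The genuine gap is in the distance bound: the identification $E_\mathcal{N}^\perp = E^J_{\mathcal{N}^{D,l}}$ is false precisely in the new situations this proposition is designed to cover. Concretely, take $J=\{2\}$, $p\nmid N_1$ and $\mathcal{N}=\{X_1^{0}X_2^{b_2}\}$ with $1\le b_2\le r(N_2)$, which satisfies (ii). Then $\mathcal{N}^{D,l}$ removes only $(N_1-1,\,N_2-1-b_2)$ and retains $(0,\,N_2-1-b_2)$; but by Proposition \ref{prop1} (the clause $a_1=b_1=0$ with $p\nmid N_1$, together with $b_2+(N_2-1-b_2)\equiv 0 \bmod (N_2-1)$) one has $\mathrm{ev}_J(X_1^{0}X_2^{b_2})\cdot\mathrm{ev}_J(X_1^{0}X_2^{N_2-1-b_2})\neq 0$. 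So $E_\mathcal{N}$ is \emph{not} orthogonal to $E^J_{\mathcal{N}^{D,l}}$, and your ``orthogonality plus dimension count'' step collapses. This is consistent with Proposition \ref{AA}, which for $\Delta\not\subseteq\mathcal{H}'$ only gives an inclusion, not an equality.

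The paper repairs this by describing $(E_\mathcal{N})^\perp$ explicitly: it is spanned by monomial evaluations together with evaluations of binomials such as $\frac{1}{\alpha}X_1^{0}X_2^{N_2-1-b_2}-\frac{1}{\beta}X_1^{N_1-1}X_2^{N_2-1-b_2}$, one for each pair $(0,b_2)\in\mathcal{N}^l$, where $\alpha$ and $\beta$ are the two nonvanishing inner products arising from the exceptional cases of Proposition \ref{prop1}. This is also the real reason conditions (ii) and (iii) insist that the exponents $0$ and $N_j-1$ occur only paired with the same complementary exponent, or else uniformly: it guarantees that the monomials and binomials just described do span the dual. The footprint bound of Proposition \ref{R} is then applied not monomial-by-monomial but to this generating set, after choosing a monomial order for which $X_1^{0}X_2^{N_2-1-b_2}$ is the leading monomial of the binomial; only then does $\min\{\delta_{\boldsymbol{a}} \mid \boldsymbol{a}\in\mathcal{N}^{D,l}\}$ bound the distance. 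You would need to add this explicit description of the dual (or an equivalent one) to make your argument complete.
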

\begin{proof}
Proposition \ref{prop1} shows that the inclusions imposed in the proposition imply the self-orthogonality of the code $E_\mathcal{N}$. Indeed, by Proposition \ref{prop1}, when $p$ divides $N_j$ for all $j \notin J$, it holds that every monomial $X_1^{b_1} X_2^{b_2}$, $b_1 < N_1 -1$, $b_2 < N_2 -1$, admits a unique  monomial with exponents in $\mathcal{H}_J$, $X_1^{N_1 -1-b_1} X_2^{N_2-1-b_2}$, such that the inner product of their evaluations by $\mathrm{ev}_J$ does not vanishes. However when $p$ does not divide $N_j$ for some $j \not \in J$, say $J=\{2\}$ and $p$ does not divide $N_1$, it happens that $\mathrm{ev}_J(X_1^{0} X_2^{b_2}) \cdot \mathrm{ev}_J(X_1^{0} X_2^{N_2-1-b_2}) = \alpha \neq 0$ and $\mathrm{ev}_J(X_1^{0} X_2^{b_2}) \cdot \mathrm{ev}_J(X_1^{N_1-1} X_2^{N_2-1-b_2}) = \beta \neq 0$.  In addition, when one admits $N_j -1$ as an exponent, it also holds that $\mathrm{ev}_J(X_1^{N_1 -1} X_2^{b_2}) \cdot \mathrm{ev}_J(X_1^{0} X_2^{N_2-1-b_2})  \neq 0$ and $\mathrm{ev}_J(X_1^{N_1 -1} X_2^{b_2}) \cdot \mathrm{ev}_J(X_1^{N_1-1} X_2^{N_2-1-b_2})  \neq 0$ and the same result is true if we switch the roles of the indices $1$  and $2$. Notice that the monomial $X_1^{N_1 -1} X_2^{N_2-1}$ is not eligible for our codes. As a consequence, we get that our definition of sets $\mathcal{N}$ avoid the existence of conflicting pairs in $\mathcal{N}^l$, and therefore the evaluation of any monomial $X^{\boldsymbol{b}} \in \mathcal{N}$ is orthogonal to that of  all monomials in $\mathcal{N}$, which proves the self-orthogonality of the code.

To conclude our proof, notice that the dual code $(E_\mathcal{N})^\perp$ could not be only generated by evaluating monomials. This can be checked by computing the dimension $n_J - \dim E_\mathcal{N}$ and the set of monomials whose evaluation is not orthogonal to the evaluation of the monomials generating our code. In fact,  the existence of monomials of the form (say) $X_i^{b_i}X_j^0$ (or $X_i^{b'_i}X_j^{N_j -1}$) forces to eliminate from the dual code the evaluation of $X_i^{N_i -1-b_i}X_j^{N_j -1}$ and $X_i^{N_i -1-b_i}X_j^{0}$ (or $X_i^{N_i -1-b'_i}X_j^{N_j -1}$ and $X_i^{N_i -1-b'_i}X_j^{0}$) because both evaluations are not orthogonal to that of the previous given monomial. Notwithstanding, conditions in (ii) or (iii) allow us to consider pairs in $\mathcal{N}^l$ corresponding to monomials with exponents $0$ and $N_j -1$, when the other exponent is the same, since both monomials eliminate the same pair of monomials.

Furthermore the same conditions show that we also admit monomials with exponents always all $0$ or always $N_j -1$. This is imposed  because we can obtain their corresponding generating polynomials for  the dual code and, as a consequence, we can decide about minimum distance. Let us see it, for instance, for the case $J=\{2\}$ and $p$ does not divide $N_1$. The remaining cases can be reasoned analogously. Notice that, in this case, $(E_\mathcal{N})^\perp$ could not be  generated only by evaluating monomials, however, taking the above given values $\alpha$ and $\beta$, it is clear that
\begin{equation}
\label{binomial}
\mathrm{ev}_J \left(X_1^{0} X_2^{b_2}\right) \cdot \mathrm{ev}_J \left(  \frac{1}{\alpha} X_1^{0} X_2^{N_2-1-b_2} - \frac{1}{\beta} X_1^{N_1 -1} X_2^{N_2-1-b_2} \right) = 0,
\end{equation}
which proves that $(E_\mathcal{N})^\perp$ is generated by evaluating monomials $X_1^{a_1} X_2^{a_2}$ such that
\[
(a_1,a_2) \in \mathcal{H}_J \setminus \left\{(N_1 -1-b_1,N_2 -1-b_2) \; | \; b_1 \neq 0; (b_1,b_2) \in \mathcal{N}^l \right\}
\]
and binomials as  in (\ref{binomial}) for each pair $(0,b_2) \in \mathcal{N}^l$. This concludes our proof by considering a suitable monomial ordering for which $X_1^{0} X_2^{N_2-1-b_2}$ is the leading monomial of $ \frac{1}{\alpha} X_1^{0} X_2^{N_2-1-b_2} - \frac{1}{\beta} X_1^{N_1 -1} X_2^{N_2-1-b_2}$ and after applying Proposition \ref{R}.
\end{proof}

\begin{rem}
{\rm
Proposition \ref{pnodivide} is stated for the case $Q=q$ but, as mentioned, it also holds when $Q=q^2$ and the Hermitian inner product is considered. The difference is that we should consider a set of monomials $\mathcal{N}_q$ such that
\begin{enumerate}[(i)]
\item $\mathcal{N}_q \subset \mathcal{R}_i^q(J)$ for either $i=1$ or $i=2$ when $J=\{1,2\}$.
    \item $\mathcal{N}_q \subset \mathcal{R}_i^q(J)$ when $J=\{i\}$ and, in case $p \not | ~ N_j$, ($j \neq i$), $\mathcal{N}_q$ does not contain any pair of monomials $X_i^{b_i} X_j^{0}$ and $X_i^{b'_{i}} X_j^{N_j-1}$ except when  $b_i = b'_i$; that is, we only admit pairs of monomials of the form $X_i^{b_i} X_j^{0}$ and $X_i^{b_i} X_j^{N_j -1}$ and monomials with exponent in $j$, either always $0$ or always $N_j -1$.
    \item $\mathcal{N}_q \subset R_{j_0}^q(J)$ when $J=\emptyset$ and we do not allow $X_i^{N_i -1} X_{j_0}^{b_{j_0}}$, $i \neq j_0$, in $\mathcal{N}_q$ except when $X_i^{0} X_{j_0}^{b_{j_0}}$ is also in $\mathcal{N}_q$. We also permit monomials with exponents in $i$, either always $0$ or always $N_j -1$.
\end{enumerate}
And, for computing the distance, $\mathcal{N}^{D,l}$ should be replaced with
$$\mathcal{N}_q^{D,l} = \mathcal{H}_J \setminus \{(N_1-1-qb_1 \mod T_1,\;\; N_2-1-qb_2 \mod T_2\;\; | \; \; (b_1,b_2) \in \mathcal{N}_q^l\}.$$  Proposition \ref{prop3}, some ideas in Theorem \ref{FF} and Proposition \ref{R} establish the proof.}
\end{rem}

\subsection{The subfield-subcode case}
We conclude our paper with this subsection, where we show that some similar results can be stated by considering subfield-subcodes. Here we use the notation introduced after Proposition \ref{AA} and in the first two paragraphs of Subsection \ref{hermiticoo}. Ideas behind the proof of our results are supported in previous results.

Next, keeping the above notations, we state our first two results for which we will give a joint proof. With respect to quantum stabilizer codes determined by Euclidean inner product we get the following theorem:
\begin{teo}
\label{AS}
Let $r$ and $s$ be positive integers such that $s$ divides $r$. Set $Q=q=p^r$ and consider values $N_j$, $1 \leq j \leq m$, as above. Consider a subset $J \subseteq \{1,2, \ldots, m\}$ and assume that $p$ divides $N_j$ for all $j \notin J$. Consider also the minimal cyclotomic sets $\mathfrak{I}_{\boldsymbol{a}}$, $\boldsymbol{a} \in \mathcal{A} $ with respect to $p^s$ and, for a positive integer $t \leq n_J$, set
\[
\Delta\left(N(J,t)\right) = \bigcup_{\boldsymbol{a} \in \mathcal{A} \cap N(J,t)} \mathfrak{I}_{\boldsymbol{a}}.
\]
Denote also by $\mathfrak{I}_{-\boldsymbol{a}}$ the cyclotomic set with respect to $p^s$ containing the congruence classes of the tuple $-\boldsymbol{a}$, and assume that $\mathfrak{I}_{\boldsymbol{a}} \neq \mathfrak{I}_{-\boldsymbol{a'}}$ for all pair $\boldsymbol{a}$, $\boldsymbol{a'}$ of elements in $\mathcal{A} \cap N(J,t)$. Then, the subfield-subcode over $\mathbb{F}_{p^s}^{n_J}$ of the $J$-affine variety code given by $\Delta\left(N(J,t)\right)$, $E^{J,\sigma}_{\Delta\left(N(J,t)\right)} = E_{\Delta\left(N(J,t)\right)}^J \cap \mathbb{F}_{p^s}^{n_J}$, satisfies
\[
E^{J,\sigma}_{\Delta\left(N(J,t)\right)} \subseteq \left( E^{J,\sigma}_{\Delta\left(N(J,t)\right)}
\right)^\perp
\]
and gives rise to a stabilizer code with parameters
$[[n_J, n_J - 2 \sum_{\boldsymbol{a} \in \mathcal{A} \cap N(J,t)} i_{\boldsymbol{a}}, \geq t ]]_{p^s}$.
\end{teo}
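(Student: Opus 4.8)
The plan is to produce the stabilizer code by applying Theorem~\ref{th1} with $Q=q=p^s$ to the code $C:=\big(E^{J,\sigma}_{\Delta(N(J,t))}\big)^{\perp}$ over $\mathbb{F}_{p^s}$. For this it suffices to establish three facts: (a) the self-orthogonality $E^{J,\sigma}_{\Delta(N(J,t))}\subseteq\big(E^{J,\sigma}_{\Delta(N(J,t))}\big)^{\perp}$, equivalently $C^{\perp}\subseteq C$; (b) that $\dim_{\mathbb{F}_{p^s}}E^{J,\sigma}_{\Delta(N(J,t))}=\sum_{\boldsymbol{a}\in\mathcal{A}\cap N(J,t)}i_{\boldsymbol{a}}$, so that $\dim_{\mathbb{F}_{p^s}}C=n_J-\sum i_{\boldsymbol{a}}$ and Theorem~\ref{th1} returns the announced length and dimension; and (c) the inequality $d(C)\ge t$.

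For (a) I would first work over $\mathbb{F}_{p^r}$. Since $p\mid N_j$ for every $j\notin J$, Proposition~\ref{prop1} shows that $\mathrm{ev}_J(X^{\boldsymbol{a}})\cdot\mathrm{ev}_J(X^{\boldsymbol{b}})\neq 0$ forces $\boldsymbol{b}$ to be one of the ``complementary'' exponents $\boldsymbol{a}'$ of $\boldsymbol{a}$ described just before Proposition~\ref{AA}; because forming complements commutes with the map $\boldsymbol{x}\mapsto p^{s}\cdot\boldsymbol{x}$, the complementary exponents of the members of a minimal cyclotomic set $\mathfrak{I}_{\boldsymbol{a}'}$ again form cyclotomic sets, and the only one of these containing a minimal representative is $\mathfrak{I}_{-\boldsymbol{a}'}$. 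The hypothesis $\mathfrak{I}_{\boldsymbol{a}}\neq\mathfrak{I}_{-\boldsymbol{a}'}$ for all $\boldsymbol{a},\boldsymbol{a}'\in\mathcal{A}\cap N(J,t)$, together with the disjointness of distinct minimal cyclotomic sets, then yields $E^{J}_{\Delta(N(J,t))}\subseteq\big(E^{J}_{\Delta(N(J,t))}\big)^{\perp}$; and since $E^{J,\sigma}_{\Delta(N(J,t))}\subseteq E^{J}_{\Delta(N(J,t))}$ and the Euclidean inner product of two vectors with entries in $\mathbb{F}_{p^s}$ is computed the same way over $\mathbb{F}_{p^s}$ and over $\mathbb{F}_{p^r}$, the self-orthogonality descends to the subfield-subcode. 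For (b) I would invoke \cite[Theorem~6(2)]{QINP} exactly as in the proof of Proposition~\ref{A}: as $\Delta(N(J,t))$ is a union of minimal cyclotomic sets with respect to $p^s$, the $\mathbb{F}_{p^s}$-dimension of $E^{J,\sigma}_{\Delta(N(J,t))}$ equals the sum of the cardinalities $i_{\boldsymbol{a}}$ of the cyclotomic sets involved, which fixes the dimension of $C$ and of the stabilizer code.

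Step (c) is the one I expect to be the real work. Using \cite[Theorem~8]{gal-her-rua} as in Proposition~\ref{A} one has $C=\big((E^{J}_{\Delta(N(J,t))})^{\perp}\big)^{\sigma}$, and by Proposition~\ref{AA} this sits inside $E^{J}_{\Delta(N(J,t))^{\perp}}$; hence, viewed over $\mathbb{F}_{p^r}$, $C$ is a subcode of $E^{J}_{\Delta(N(J,t))^{\perp}}$ and it is enough to bound the distance of the latter from below. Now $\Delta(N(J,t))$, being a union of complete minimal cyclotomic sets, contains $N(J,t)$, so $\Delta(N(J,t))^{\perp}$ is contained in the exponent set $M^{l}(J,t)=\{\boldsymbol{a}\in\mathcal{H}_J\mid\delta_{\boldsymbol{a}}\ge t\}$ of the hyperbolic code, where one uses the identity $\delta_{\boldsymbol{a}'}=\prod_{j}(b_j+1-\epsilon_j)$ relating the complement of $\boldsymbol{b}\in N(J,t)$ to $\delta$, already exploited in the proof of Proposition~\ref{S}. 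Consequently $E^{J}_{\Delta(N(J,t))^{\perp}}\subseteq E^{J}_{M^{l}(J,t)}=\mathrm{Hyp}(J,t)$, and since $d(\mathrm{Hyp}(J,t))\ge t$ by Propositions~\ref{R} and~\ref{S}, we obtain $d(C)\ge d\big(E^{J}_{\Delta(N(J,t))^{\perp}}\big)\ge t$. The two points that need care here are the bookkeeping with the multi-valued complements when $\Delta(N(J,t))\not\subseteq\mathcal{H}'$, and the verification that, with the chosen representatives $\mathcal{A}$, no minimal cyclotomic set meeting $N(J,t)$ escapes $\Delta(N(J,t))$ in a way that would enlarge $\Delta(N(J,t))^{\perp}$ beyond $M^{l}(J,t)$; both are controlled by the conventions fixed after Proposition~\ref{AA}. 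Finally, the companion Hermitian statement announced together with Theorem~\ref{AS} is proved along identical lines, replacing Proposition~\ref{prop1} by Proposition~\ref{prop3}, the trace $\mathrm{tr}_{r}^{s}$ by $\mathrm{tr}_{2r}^{2s}$, and complementation by the operation $\boldsymbol{a}\mapsto[-q\boldsymbol{a}]$, exactly as in the passage from Theorem~\ref{F} to Theorem~\ref{FF}.
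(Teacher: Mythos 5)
Your argument is correct and follows essentially the same route as the paper's (which writes out the Hermitian companion, Theorem \ref{CS}, and notes the Euclidean case is analogous): Proposition \ref{prop1} plus the cyclotomic hypothesis for self-orthogonality, the dimension formula from the earlier papers, and containment of the dual in the hyperbolic code combined with the footprint bound of Propositions \ref{R} and \ref{S} for the distance. The only difference is cosmetic: you get self-orthogonality by first proving $E^{J}_{\Delta(N(J,t))}\subseteq \bigl(E^{J}_{\Delta(N(J,t))}\bigr)^{\perp}$ over $\mathbb{F}_{p^r}$ and restricting to the subfield-subcode, which is slightly more direct than the paper's passage through the trace description of the dual of the subfield-subcode.
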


Our result when one considers the  Hermitian inner product is:

\begin{teo}
\label{CS}
Let $r$ and $s$ be positive integers such that $s$ divides $r$. Set $Q=q^2=p^{2r}$ and consider values $N_j$, $1 \leq j \leq m$, as above (i.e., $N_j -1$ divides $Q-1$). Consider a subset $J \subseteq \{1,2, \ldots, m\}$ and assume that $p$ divides $N_j$ for all $j \notin J$. Consider also the minimal cyclotomic sets $\mathfrak{I}_{\boldsymbol{a}}$ with respect to $p^{2s}$, where $\boldsymbol{a} \in \mathcal{A}$ and for a positive integer $t \leq n_J$, define $
\Delta\left(N(J,t)\right) = \bigcup_{\boldsymbol{a} \in \mathcal{A} \cap N(J,t)} \mathfrak{I}_{\boldsymbol{a}}$. Denote also by $\mathfrak{I}_{-p^s \boldsymbol{a}}$ the cyclotomic set with respect to $p^{2s}$ containing the  congruence classes of the tuple $-p^s \boldsymbol{a}$ and assume that $\mathfrak{I}_{\boldsymbol{a}} \neq \mathfrak{I}_{-p^s \boldsymbol{a'}}$ for all $\boldsymbol{a}$, $\boldsymbol{a'}$ in $\mathcal{A} \cap N(J,t)$. Then the subfield-subcode of the $J$-affine variety code given by $\Delta\left(N(J,t)\right)$ over $\mathbb{F}_{p^{2s}}^{n_J}$, $E^{J,\sigma}_{\Delta\left(N(J,t)\right)} = E_{\Delta\left(N(J,t)\right)}^J \cap \mathbb{F}_{p^{2s}}^{n_J}$, satisfies
\[
E^{J,\sigma}_{\Delta\left(N(J,t)\right)} \subseteq \left( E^{J,\sigma}_{\Delta\left(N(J,t)\right)}
\right)^{\perp_h}
\]
and it gives rise to a stabilizer code with parameters
$[[n_J, n_J - 2 \sum_{\boldsymbol{a} \in \mathcal{A} \cap N(J,t)} i_{\boldsymbol{a}}, \geq t ]]_{p^s}$.
\end{teo}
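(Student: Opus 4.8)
The plan is to establish Theorems \ref{AS} and \ref{CS} in one stroke, the Hermitian statement being obtained from the Euclidean blueprint of Theorem \ref{AS} by systematically replacing Proposition \ref{prop1} with Proposition \ref{prop3}, cyclotomic sets with respect to $p^s$ with those with respect to $p^{2s}$, the trace map $\mathrm{tr}_r^s$ with $\mathrm{tr}_{2r}^{2s}$, and by inserting the auxiliary inner product $\mathbf{x}\cdot_s\mathbf{y}=\sum x_iy_i^{p^s}$ from Section \ref{hermiticoo}. Write $\Delta:=\Delta(N(J,t))$ and $D:=E^{J,\sigma}_{\Delta}=E^J_\Delta\cap\mathbb{F}_{p^{2s}}^{n_J}$. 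Since $\Delta$ is a union of minimal cyclotomic sets with respect to $p^{2s}$, it is invariant under componentwise multiplication by $p^{2s}$ modulo the $N_j-1$, so $D$ is well defined over $\mathbb{F}_{p^{2s}}$ and, by \cite[Theorems 6 and 7]{QINP}, has $\mathbb{F}_{p^{2s}}$-dimension $\mathrm{card}(\Delta)=\sum_{\boldsymbol{a}\in\mathcal{A}\cap N(J,t)}i_{\boldsymbol{a}}$.

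First I would prove self-orthogonality, $D\subseteq D^{\perp_h}$. By the Hermitian counterpart of \cite[Theorem 8]{gal-her-rua} (see \cite[Theorem 7]{QINP}), used exactly as in the proof of Proposition \ref{A}, the Hermitian dual over $\mathbb{F}_{p^{2s}}$ satisfies $D^{\perp_h}=\mathrm{tr}_{2r}^{2s}\big((E^J_\Delta)^{\perp_s}\big)$, so it suffices to verify the combinatorial inclusion $\Delta\subseteq\Delta^{\perp_s}$, where $\Delta^{\perp_s}:=\mathcal{H}_J\setminus\{[-p^s\boldsymbol{a}]\mid\boldsymbol{a}\in\Delta\}$ and $[-p^s\boldsymbol{a}]$ denotes the tuple of representatives of $-p^sa_j$ modulo $N_j-1$. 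Reading Proposition \ref{prop3} with $q$ replaced by $p^s$, and using $p\mid N_j$ for $j\notin J$ to rule out the degenerate case $a_j=b_j=0$, the only monomial whose evaluation is not $\cdot_s$-orthogonal to $\mathrm{ev}_J(X^{\boldsymbol{a}})$ has exponent $[-p^s\boldsymbol{a}]$; since $[-p^s(p^{2si}\boldsymbol{a})]\in\mathfrak{I}_{-p^s\boldsymbol{a}}$, the whole $\cdot_s$-partner set of $\mathfrak{I}_{\boldsymbol{a}}$ is $\mathfrak{I}_{-p^s\boldsymbol{a}}$. As $\Delta$ is $p^{2s}$-cyclotomically closed, $\Delta\cap\{[-p^s\boldsymbol{a}]\mid\boldsymbol{a}\in\Delta\}=\emptyset$ holds precisely when $\mathfrak{I}_{\boldsymbol{a}}\neq\mathfrak{I}_{-p^s\boldsymbol{a}'}$ for all $\boldsymbol{a},\boldsymbol{a}'\in\mathcal{A}\cap N(J,t)$, which is the hypothesis.

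Next I would bound $d(D^{\perp_h})$. The coordinatewise $p^s$-th power map is a Hamming isometry of $\mathbb{F}_{p^{2s}}^{n_J}$ carrying $D^{\perp_h}$ onto the Euclidean dual $D^\perp$, so $d(D^{\perp_h})=d(D^\perp)$, and by \cite[Theorem 8]{gal-her-rua} applied to $\mathbb{F}_{p^{2r}}/\mathbb{F}_{p^{2s}}$ one has $D^\perp=(E^J_\Delta)^\perp\cap\mathbb{F}_{p^{2s}}^{n_J}$. The construction of $\Delta=\Delta(N(J,t))$ ensures $N^l(J,t)\subseteq\Delta$, whence $F(J,t)\subseteq E^J_\Delta$ and, dualizing, $(E^J_\Delta)^\perp\subseteq F(J,t)^\perp=\mathrm{Hyp}(J,t)$, which equals $E(J,t)$ by Proposition \ref{S} (this is the second place where $p\mid N_j$ for $j\notin J$ is needed). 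Hence $D^\perp\subseteq E(J,t)$, and Proposition \ref{R} together with Definition \ref{la32} gives $d(D^{\perp_h})=d(D^\perp)\geq d(E(J,t))\geq t$.

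Finally, with $C:=D^{\perp_h}$ one has $C^{\perp_h}=D\subseteq C$, $\dim_{\mathbb{F}_{p^{2s}}}C=n_J-\mathrm{card}(\Delta)$ and $d(C)\geq t$, so Theorem \ref{th1} with $Q=q^2$ and $q=p^s$ yields an $[[n_J,\,2(n_J-\mathrm{card}(\Delta))-n_J,\,\geq t]]_{p^s}=[[n_J,\,n_J-2\sum_{\boldsymbol{a}\in\mathcal{A}\cap N(J,t)}i_{\boldsymbol{a}},\,\geq t]]_{p^s}$ stabilizer code, as claimed. The step I expect to be the main obstacle is the distance bound: one has to juggle two layers of duality — Euclidean versus Hermitian over $\mathbb{F}_{p^{2s}}$, and subfield subcode versus trace over $\mathbb{F}_{p^{2r}}/\mathbb{F}_{p^{2s}}$ — and argue carefully that the cyclotomic packaging $\Delta(N(J,t))$ does not shrink the code below $N^l(J,t)$, so that the footprint bound of Proposition \ref{R} delivers the full designed distance $t$; the self-orthogonality argument, by contrast, is essentially a transcription of Proposition \ref{prop3}.
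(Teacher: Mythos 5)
Your proposal is correct and follows essentially the same route as the paper: self-orthogonality is reduced, via the trace/subfield-subcode description of the Hermitian dual, to the combinatorial disjointness $\mathfrak{I}_{\boldsymbol{a}} \neq \mathfrak{I}_{-p^s\boldsymbol{a}'}$ through Proposition \ref{prop3}, and the distance bound comes from $N(J,t)\subseteq\Delta(N(J,t))$, Proposition \ref{S} and the footprint bound of Proposition \ref{R}. The only cosmetic difference is where you apply the Frobenius-power isometry — you relate $D^{\perp_h}$ to the Euclidean dual $D^{\perp}$ over $\mathbb{F}_{p^{2s}}$, while the paper relates the $\cdot_s$-dual of $E^J_{\Delta}$ to its Euclidean dual over $\mathbb{F}_{p^{2r}}$ via the $p^{2r-s}$-power map — and both versions deliver the same conclusion.
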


Proofs for Theorems \ref{AS} and  \ref{CS} are analogous except that, for the first result, we have to consider Euclidean dual and subfield-subcodes over $\mathbb{F}_{p^{s}}^{n_J}$. So, let us show a proof for Theorem  \ref{CS}.

For a start, consider the code, over $\mathbb{F}_{p^{2s}}^{n_J}$, $E^{J,\sigma}_{\Delta\left(N(J,t)\right)}$. By \cite[Theorem 4]{galindo-hernando},  its dimension is $\sum_{\boldsymbol{a} \in \mathcal{A} \cap N(J,t)} i_{\boldsymbol{a}}$. Now a similar reasoning as we did in the proof of Proposition \ref{D} proves that the Hermitian dual of the code $E^{J,\sigma}_{\Delta\left(N(J,t)\right)}$ coincides with the trace $\mathbf{tr}_{2r}^{2s}: \mathbb{F}_{p^{2r}}^{n_J} \rightarrow \mathbb{F}_{p^{2s}}^{n_J}$ of the dual code of $E^{J}_{\Delta\left(N(J,t)\right)}$ with respect to the above defined  inner product $\cdot_{s}$. Notice that $\mathbf{tr}_{2r}^{2s}$ is given componentwise by the map $tr_{2r}^{2s}$ defined in the mentioned proof of Proposition \ref{D}. By the proof of \cite[Theorem 7]{QINP}, this trace is the subfield subcode over $\mathbb{F}_{p^{2s}}^{n_J}$ of the mentioned dual code with respect to $\cdot_s$.

The above ideas allow us to prove the inclusion
\[
E^{J,\sigma}_{\Delta\left(N(J,t)\right)} \subseteq \left( E^{J,\sigma}_{\Delta\left(N(J,t)\right)}
\right)^{\perp_h}.
\]
 
Indeed, by Proposition \ref{prop3} and  \cite[Theorem 3]{galindo-hernando}, $\left( E^{J,\sigma}_{\Delta\left(N(J,t)\right)}\right)^{\perp_h}$ is obtained by evaluating polynomials defined by representatives of minimal cyclotomic sets in
\[
\mathcal{H}_J \setminus \left\{\left([-p^s a_1]_{N_1 -1}, [-p^s a_2]_{N_2 -1}, \ldots, [-p^s a_m]_{N_m -1}\right) \; | \; \boldsymbol{a}=(a_1,a_2, \ldots, a_m) \in \Delta\left(N(J,t)\right) \right\},
\]
therefore, the condition $\mathfrak{I}_{\boldsymbol{a}} \neq \mathfrak{I}_{-p^s \boldsymbol{a'}}$ for all $\boldsymbol{a}$, $\boldsymbol{a'}$ in $\mathcal{A} \cap N(J,t)$, given in the statement, allows us to conclude the mentioned self-orthogonality of the code $E^{J,\sigma}_{\Delta\left(N(J,t)\right)}$.

It only remains to check that the distance is greater than or equal to $t$. In fact, for a linear code $C$ and a positive integer $k$, if we write $C^k$ for the set of words $\mathbf{x}^k =(x_1^k, x_2^k, \ldots, x_m^k)$, then the following inequality of distances of codes holds: $d \left( E^{J,\sigma}_{\Delta\left(N(J,t)\right)}
\right)^{\perp_h} \geq d \left( E^{J}_{\Delta\left(N(J,t)\right)}
\right)^{\perp_{p^s}}$
and, moreover, it also holds that
\[
\left(
E^{J}_{\Delta\left(N(J,t)\right)}
\right)^{\perp_{p^s}} = \left[ \left(
E^{J}_{\Delta\left(N(J,t)\right)}
\right)^{\perp} \right]^{p^{2r-s}},
\]
because $\mathbf{y}^{p^r} = (\mathbf{y}^{p^s})^{p^{2r-s}}$ and $\mathbf{y} \in \left( E^{J}_{\Delta\left(N(J,t)\right)}
\right)^{\perp_{p^s}}$ if and only if $\mathbf{y}^{p^s} \in \left( E^{J}_{\Delta\left(N(J,t)\right)}
\right)^{\perp}$ if and only if $\mathbf{y} \in \left[ \left( E^{J}_{\Delta\left(N(J,t)\right)}
\right)^{\perp} \right]^{p^{2r-s}}$. This concludes the proof because $\left[ \left( E^{J}_{\Delta\left(N(J,t)\right)}
\right)^{\perp} \right]^{p^{2r-s}}$ is isometric to $\left( E^{J}_{\Delta\left(N(J,t)\right)}
\right)^{\perp}$, whose distance is larger than or equal to $t$ since $E_{N(J,t)}^J \subseteq E_{\Delta(N(J,t))}^J$.\\

To finish, we state the following result which follows straightforwardly because our hypotheses on cyclotomic sets in Theorems \ref{AS} and  \ref{CS} can be guaranteed by imposing conditions to one variable, as we showed in Propositions \ref{A} and \ref{D}.

\begin{cor}
\label{el35}
Let $p, R, S, Q$ and $J$ as above. Set $N_1=p^R$ and suppose that $N_j -1$ divides $p^R -1$ for $j \in J\setminus\{1\}$. Denote by $\mathcal{A}^1$ a set of univariate representatives as in Section \ref{seccdos} of minimal cyclotomic sets with respect to $p^s$ modulo $N_1 -1$. Set $
\mathfrak{a} := \max \{a\in\mathcal{A}^1\mid a<p^{R/2}-1\}$ when $R$ is even; and  $\mathfrak{a} := \max \{a\in\mathcal{A}^1 \mid a\le p^{(R+1)/2}-p-1 \}$, otherwise. Then:
\begin{itemize}
\item  In the Euclidean case, that is $Q=q=p^r$, $R=r$ and $S=s$, it holds that $E_{\Delta(N(J,t))}\subset E_{\Delta(N(J,t))}^{\perp}$ for any $t\le \mathfrak{a} + 1$. Therefore, the above self-orthogonal code gives rise to a quantum code with parameters $[[n_J,n_J -2 \, \mathrm{card}(\Delta(N(J,t))),\ge t]]_{p^s}$.
    \item In the Hermitian case, that is $Q=p^{2r}$, $R=2r$ and $S=2s$, it holds that $E_{\Delta(N(J,t))}\subset E_{\Delta(N(J,t))}^{\perp_h}$ for any $t\leq \mathfrak{a} +1$ and the above self-orthogonal code yields a code with parameters $[[n_J,n_J -2 \, \mathrm{card}(\Delta(N(J,t))),\geq t]]_{p^s}$.
    \end{itemize}
\end{cor}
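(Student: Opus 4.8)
The plan is to deduce the statement directly from Theorems~\ref{AS} and~\ref{CS}. Both the asserted dimension $n_J - 2\,\mathrm{card}\bigl(\Delta(N(J,t))\bigr)$ and the bound $\geq t$ on the minimum distance are already furnished by those theorems, so the only thing that needs to be verified is the hypothesis on cyclotomic sets: $\mathfrak{I}_{\boldsymbol{a}} \neq \mathfrak{I}_{-\boldsymbol{a'}}$ in the Euclidean case, and $\mathfrak{I}_{\boldsymbol{a}} \neq \mathfrak{I}_{-p^s\boldsymbol{a'}}$ in the Hermitian case, for every pair $\boldsymbol{a},\boldsymbol{a'} \in \mathcal{A}\cap N(J,t)$. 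The key observation is that this condition can be forced by controlling the first coordinate only, exactly as in the proofs of Propositions~\ref{A} and~\ref{D}.

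First I would bound the first coordinate of the exponents occurring in $N(J,t)$. If $\boldsymbol{b}\in N(J,t)$, then, since every factor of $\prod_{j=1}^m (b_j+1-\epsilon_j)$ is at least $1$, we get $b_1+1-\epsilon_1 \leq \prod_{j=1}^m (b_j+1-\epsilon_j) < t \leq \mathfrak{a}+1$, hence $b_1 \leq \mathfrak{a}-1+\epsilon_1 \leq \mathfrak{a}$. By the very definition of $\mathfrak{a}$ this yields $b_1 < p^{R/2}-1$ when $R$ is even and $b_1 \leq p^{(R+1)/2}-p-1$ when $R$ is odd. In particular the first coordinates $a_1, a_1'$ of any $\boldsymbol{a},\boldsymbol{a'}\in\mathcal{A}\cap N(J,t)$ are small in this sense; moreover $1\in J$, so $\epsilon_1=1$ and therefore $a_1,a_1'\geq 1$, which rules out the degenerate situation $a_1=a_1'=0$.

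Next, assume for contradiction that $\mathfrak{I}_{\boldsymbol{a}} = \mathfrak{I}_{-\boldsymbol{a'}}$ in the Euclidean case. Two minimal cyclotomic sets coincide as soon as they share one element, so there is an integer $i\geq 0$ with $p^{si}\boldsymbol{a} \equiv -\boldsymbol{a'}$ componentwise modulo the relevant moduli; on the first coordinate this reads $p^{si}a_1 \equiv -a_1' \pmod{p^{R}-1}$. Since the order of $p$ modulo $p^{R}-1$ is $R$ and $s\mid R$, we may assume $0\leq si < R$, so Lemma~\ref{DOS} gives $\bigl[p^{si}a_1 \bmod (p^{R}-1)\bigr] < p^{R}-p^{R/2}$; on the other hand $-a_1' \equiv (p^{R}-1)-a_1' \pmod{p^{R}-1}$, and $a_1' < p^{R/2}-1$ forces $(p^{R}-1)-a_1' > p^{R}-p^{R/2}$, a contradiction. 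This is precisely the computation recorded in Remark~\ref{estrella}. The Hermitian case is handled the same way after clearing the factor $p^s$: from $p^{2si}a_1 \equiv -p^s a_1' \pmod{p^{R}-1}$ with $R=2r$ and $\gcd(p^s,p^{R}-1)=1$, multiplication by $p^{-s}\equiv p^{R-s}$ gives $p^{j}a_1 \equiv -a_1' \pmod{p^{R}-1}$ for a suitable $0\leq j < R$, and Lemma~\ref{DOS} (applied with $2r$ in the role of $r$) produces the same contradiction. The odd-$R$ variants of Lemmas~\ref{UNO} and~\ref{DOS} cover the case where $R$ is odd verbatim.

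I expect the only delicate point to be the bookkeeping in the Hermitian case: normalizing the exponent $2si+(R-s)$ modulo $R=2r$ so that Lemma~\ref{DOS} is applicable, and making sure that $\mathfrak{a}$, although computed from $p^s$-cyclotomic sets, still controls the $p^{2s}$-cyclotomic sets of Theorem~\ref{CS} — it does, because what the argument actually uses is just the inequality $a_1,a_1'\leq\mathfrak{a}$. Everything else reduces to a direct invocation of Theorems~\ref{AS} and~\ref{CS}.
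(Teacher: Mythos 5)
Your proposal follows exactly the route the paper intends: the paper's entire justification is one sentence saying the corollary follows because the cyclotomic-set hypotheses of Theorems \ref{AS} and \ref{CS} can be guaranteed by imposing conditions on a single variable, as in Propositions \ref{A} and \ref{D}. Your write-up is a faithful expansion of that plan --- bound the first coordinate of every exponent in $N(J,t)$ by $\mathfrak{a}$, then run the Lemma \ref{DOS}/Remark \ref{estrella} computation on that coordinate alone, with the extra normalization by $p^{-s}\equiv p^{R-s}$ in the Hermitian case. Your closing observation that only the inequality $a_1,a_1'\le\mathfrak{a}$ is used (so the $p^s$- versus $p^{2s}$-cyclotomic distinction is immaterial) is also the right remark to make.

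One step is wrong as written: you assert that $1\in J$, hence $\epsilon_1=1$ and $a_1,a_1'\ge 1$, in order to exclude the degenerate pair $a_1=a_1'=0$. The corollary does not assume $1\in J$; it only requires $N_j-1\mid p^R-1$ for $j\in J\setminus\{1\}$, and the paper applies it with $J=\emptyset$ in Example \ref{gamagama}~B. When $1\notin J$ the exponent $a_1=0$ is permitted, and for pairs with $a_1=a_1'=0$ your first-coordinate contradiction evaporates (both sides of the congruence are $0$). That case is still harmless, but for a different reason: since $N_1=p^R$ we have $p\mid N_1$, so by the second bullet of Proposition \ref{prop1} (resp.\ Proposition \ref{prop3}) two monomials whose first exponents are both $0$ in a coordinate $j\notin J$ with $p\mid N_j$ already have orthogonal evaluations; equivalently, under the paper's conventions the ``dual'' exponent of $0$ in such a coordinate is $N_1-1\neq 0$, so the forbidden cyclotomic set has first coordinate $N_1-1$ and the first-coordinate argument goes through after all. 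With that substitution the degenerate case is covered and the rest of your argument is sound.
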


    In addition, when $J \subseteq \{1,2\}$ we can reason as in corollaries \ref{L} and \ref{LL} and, then, the above results are true for $t\le \mathfrak{a} + 2$ whenever $J=\emptyset$ or $J=\{2\}$.

\begin{exa}
\label{gamagama}
{\rm
Above results are the support of the following examples of quantum codes.

{\it Example A.}
With notation as in Theorem \ref{CS}, set $p=5$, $r=2$, $s=1$, $Q= 5^4$, $J=\emptyset$, $N_1=15$ and $N_2=6$. Consider $t=3$ and then $
\Delta\left(N(J,t)\right)  = \{ (0, 0),  (1, 0),  (12, 0),  (0, 1)\}$ which satisfies the hypotheses in the mentioned theorem. Then, we get a $[[70,62, \geq 3]]_5$ quantum code which can be extended to a quantum code with parameters $[[71,62, \geq 3]]_5$, which improves the $[[71,61, \geq 3]]_5$ quantum code given in \cite[Table 1]{lag2} and  \cite{edel}.

{\it Example B.} We conclude with an example obtained applying Corollary \ref{el35}. Set $p=2$, $r=4$, $s=2$, $N_1=256$, $N_2=2$ and $J=\emptyset$. Here $\mathfrak{a}=14$ and we get codes of length $512$ over $\mathbb{F}_4$. Table \ref{TT} displays the corresponding parameters. Note that the codes with distance less than 9, except that with distance 7, exceed the Gilbert-Varshamov bound.

\begin{table}
\begin{center}
\begin{tabular}{||c|c|c|c|c|c|c|c|c|c|c|c|c|c|c|c||}
  \hline
  $k$ & 510 & 504 & 500 & 492 & 488 & 480 &  476 & 468 & 464 & 456& 452& 444&  440 & 432 &  428 \\
   \hline
 $\geq d$ & 2 & 3 & 4 &  5 & 6 & 7 & 8 & 9 & 10 & 11 & 12& 13& 14& 15&  16 \\
  \hline
\end{tabular}
\end{center}
\caption{Parameters of quantum codes of length $512$ over $\mathbb{F}_4$}
\label{TT}
\end{table}

}
\end{exa}


\begin{thebibliography}{99}
\footnotesize \setlength{\baselineskip}{3mm}

\bibitem{Akk} Aly, S.A., Klappenecker, S., Sarvepalli, P.K. On quantum and classical BCH codes, {\it IEEE Trans. Inf. Theory} {\bf 53} (2007) 1183-1188.

\bibitem{geil} Andersen, H.E., Geil, O. Evaluation codes from order domain theory, {\it Finite Fields Appl.} {\bf 14} (2008) 92-123.

\bibitem{7kkk} Ashikhmin, A., Barg, A., Knill, E., Litsyn, S. Quantum error-detection I: Statement of the problem, {\it IEEE Trans. Inf. Theory} {\bf 46} (2000) 778-788.

\bibitem{8kkk} Ashikhmin, A., Barg, A., Knill, E., Litsyn, S. Quantum error-detection II: Bounds, {\it IEEE Trans. Inf. Theory} {\bf 46} (2000) 789-800.

\bibitem{AK} Ashikhmin, A., Knill, E. Non-binary quantum stabilizer codes, {\it IEEE Trans. Inf. Theory} {\bf 47} (2001) 3065-3072.

 

\bibitem{BE} Bierbrauer, J., Edel, Y. Quantum twisted codes, {\it J. Comb. Designs} {\bf 8} (2000) 174-188.


\bibitem{18kkk} Calderbank, A.R., Rains, E.M., Shor, P.W., Sloane, N.J.A. Quantum error correction and orthogonal geometry, {\it Phys. Rev. Lett.} {\bf 76} (1997) 405-409.

\bibitem{19kkk} Calderbank, A.R., Rains, E.M., Shor, P.W., Sloane, N.J.A. Quantum error correction via codes over GF(4), {\it IEEE Trans. Inf. Theory} {\bf 44} (1998) 1369-1387.

\bibitem{20kkk} Calderbank A.R., Shor, P. Good quantum error-correcting codes
exist, {\it Phys. Rev. A} {\bf 54} (1996) 1098-1105.

\bibitem{chen} Chen, B., Ling, S. Zhang, G. Application of constancyclic codes to quantum MDS codes, {\it IEEE Trans. Inform. Theory} {\bf 61} (2015) 1474-1484.

 

\bibitem{CLO} Cox, D., Little, J., O'Shea, D. \newblock {\em Ideals, Varieties, and Algorithms: an Introduction to Computational Algebraic Geometry and Commutative Algebra}. \newblock Undergraduate Texts in Mathematics, forth edition. Springer-Verlag, 2015.

\bibitem{8AS} Dieks, D. Communication by EPR devices, {\it Phys. Rev. A} {\bf 92} (1982) 271.

\bibitem{edel} Edel, Y. \emph{Some good quantum twisted codes}.
Online available at {\tt http://\-www.\-mathi.\-uni-hei\-delberg.de\-/~yves/Matritzen/QTBCH/QTBCHIndex.html}.


\bibitem{eck} Ekert, A., Macchiavello, C. Quantum error correction for communication, {\it Phys. Rev. Lett.} {\bf 77} (1996) 2585.

 

\bibitem{35kkk} Feng, K. Quantum error correcting codes. In Coding Theory and Cryptology, Word Scientific, 2002, 91-142.

\bibitem{feng} Feng, K., Ma, Z.  A finite Gilbert-Varshamov bound for pure stabilizer quantum codes, {\it IEEE Trans. Inf. Theory} {\bf 50} (2004) 3323-3325.
 

\bibitem{galindo-hernando} Galindo, C., Hernando, F. Quantum codes from affine variety codes and their subfield subcodes, {\it Des. Codes Crytogr.} {\bf 76} (2015) 89-100.

\bibitem{gal-her-rua} Galindo, C., Hernando, F., Ruano, D. New quantum codes from evaluation and matrix-product codes, {\it Finite Fields Appl.} {\bf 36} (2015) 98-120.

\bibitem{QINP} Galindo, C., Hernando, F., Ruano, D. Stabilizer quantum codes from $J$-affine variety codes and a new Steane-like enlargement, {\it Quantum Inf. Process.} {\bf 14} (2015) 3211-3231.

\bibitem{rey}  Galindo, C., P\'erez-Casales, R. On the evaluation codes given by simple $\delta$-sequences, {\it AAECC} {\bf 27} (2016), 59-90.

\bibitem{galmon} Galindo C., Monserrat, F. Delta-sequences and evaluation codes defined by plane valuations at infinity, {\it Proc. London Math. Soc.} {\bf 98} (2009) 714-740.

\bibitem{galmon2} Galindo C., Monserrat, F. Evaluation codes defined by finite families of plane valuations at infinity. {\it Des. Codes Crytogr.} {\bf 70} (2014) 189-213.

\bibitem{Geil-Affine} Geil, O. {\it Evaluation codes from an affine variety code perspective}.  Advances in algebraic geometry codes, Ser. Coding Theory Cryptol. 5 (2008) 153-180. World Sci. Publ., Hackensack, NJ. Eds.: E. Martinez-Moro, C. Munuera, D. Ruano.



\bibitem{geil3} Geil, O. Roots and coefficients of multivariate polynomials over finite fields, {\it Finite Fields Appl.} {\bf 23} (2015) 35-52.

\bibitem{gh} Geil, O., H{\o}holdt, T. On hyperbolic codes, {\it Lect. Notes Comp. Sc.} {\bf 2227} (2001) 159-171.

\bibitem{geil2} Geil, O., Matsumoto, R., Ruano, D. Feng-Rao decoding of primary codes, {\it Finite Fields Appl.} {\bf 34} (2013) 36-44.

\bibitem{geilthomsen} Geil, O., Thomsen, C. Weighted {R}eed-{M}uller codes revisited, {\it Des. Codes Cryptogr.} {\bf 66} (2013) 195--220.

 

\bibitem{38kkk} Gottesman, D. A class of quantum error-correcting codes saturating
the quantum Hamming bound, {\it Phys. Rev. A} {\bf 54} (1996) 1862-1868.

\bibitem{codet} Grassl, M. \emph{Bounds on the minimum distance of linear codes}. Online available at {\tt http://www.codetables.de}, accessed on 15th February 2015.

\bibitem{opt} Grassl, M., Beth, T., R\"{o}tteler, M. On optimal quantum codes, {\it Int. J. Quantum Inform.} {\bf 2} (2004) 757-775.

\bibitem{45kkk}  Grassl, M., R\"{o}tteler, M. Quantum BCH codes. In Proc. X Int. Symp. Theor.  elec. Eng. Germany 1999, 207-212.

 

\bibitem{ham} Hamada, M. Concatenated quantum codes constructible in polynomial time: Efficient decoding and error correction, {\it IEEE Trans. Inform. Theory} {\bf 54} (2008) 5689-5704.

\bibitem{refer} He, X., Xu. L., Chen, H. New $q$-ary quantum MDS codes with distances bigger than $q/2$. {\it Quantum Inf. Process.} {\bf 15} (2016) 2745-2758.


 

\bibitem{ho} H{\o}holdt, T. On (or in) Dick Blahut's footprint. {\it Codes, curves and signals} (1998) 3-9.

\bibitem{jin} Jin, L., Ling, S., Luo, J., Xing, C. Application of classical Hermitian self-orthogonal MDS codes to quantum MDS codes, {\it IEEE Trans. Inform. Theory} {\bf 56} (2010) 4735-4740.
 

\bibitem{kkk} Ketkar, A.,  Klappenecker, A., Kumar, S., Sarvepalli, P.K. Nonbinary stabilizer codes over finite fields, {\it IEEE Trans. Inform. Theory} {\bf 52} (2006) 4892-4914.

 

\bibitem{lag3} La Guardia, G.G. Construction of new families of nonbinary quantum BCH codes, {\it Phys. Rev. A} {\bf 80} (2009) 042331.

\bibitem{lag2} La Guardia, G.G. On the construction of nonbinary quantum BCH codes, {\it IEEE Trans. Inform. Theory} {\bf 60} (2014) 1528-1535.

\bibitem{lag1} La Guardia, G.G., Palazzo, R. Constructions of new families of nonbinary CSS codes, {\it Discrete Math.} {\bf 310} (2010) 2935-2945.

 

\bibitem{mas} Massey, J.L., Costello, D.J., Justensen, J. Polynomial weights and code constructions. {\it IEEE Trans. Inform. Theory} {\bf 19} (1973) 101-110.

\bibitem{71kkk} Matsumoto, R., Uyematsu, T. Constructing quantum error correcting codes for $p^m$ state systems from classical error correcting codes. {\it IEICE Trans. Fund.} {\bf E83-A} (2000) 1878-1883.

 

\bibitem{mat} Matsumoto, R., Uyematsu, T. Lower bound for the quantum capacity of a discrete memoryless quantum channel, {\it J. Math. Phys} {\bf 43} (2002) 4391-4403.

 

\bibitem{saint} Saints, K., Heegard, C. On hyperbolic cascaded Reed-Solomon codes, {\it Lect. Notes Comp. Sc.} {\bf 673} (1993) 291-303.

\bibitem{Sarvepalli} Sarvepalli, P.K., Klappenecker, A.
 Nonbinary quantum Reed-Muller codes. In Proc. 2005 Int. Symp. Information Theory,  1023-1027.

 

\bibitem{22RBC} Shor, P.W. Polynomial-time algorithms for prime factorization and discrete logarithms on a quantum computer, in Proc. 35th  ann. symp. found. comp. sc., {\it IEEE Comp. Soc. Press} 1994, 124-134.

\bibitem{23RBC} Shor, P.W. Scheme for reducing decoherence in quantum computer memory, {\it Phys. Rev. A} {\bf 52} (1995) 2493-2496.

 

\bibitem{95kkk} Steane, A.M. Simple quantum error correcting codes, {\it Phys. Rev. Lett.} {\bf 77} (1996) 793-797.

 

\bibitem{Steane-E} Steane, A.M. Enlargement of Calderbank-Shor-Steane quantum codes, {\it IEEE Trans. Inform. Theory } {\bf 45} (1999) 2492-2495.

\bibitem{26RBC} Wootters W.K., Zurek, W.H. A single quantum cannot be cloned, {\it Nature}
{\bf 299} (1982) 802-803.

\bibitem{f2} Yu, S. Bierbrauer, J., Dong, Y., Chen, Q., Oh, C.H. All the stabilizer codes of distance 3, {\it IEEE Trans. Inform. Theory } {\bf 59} (2013) 5179-5185.

\end{thebibliography}
\end{document}